\newcommand{\sfmtl}{\mathsf{SfrMTL}}
\newcommand{\sfmitl}{\mathsf{SfrMITL}}
\newcommand{\uregm}{\mathsf{URat}}
\newcommand{\tforall}{\overline{\forall}}
\newcommand{\texists}{{\overline{\exists}}}
\newcommand{\regmtl}{\mathsf{RatMTL}}
\newcommand{\fregmtl}{\mathsf{FRatMTL}}
\newcommand{\fregmitl}{\mathsf{FRatMITL}}
\newcommand{\lfr}{\mathsf{lfr}}
\newcommand{\wf}{\mathsf{C}\tiny{\oplus}\mathsf{D}}
\newcommand{\chec}{\mathsf{check}}
\newcommand{\re}{\mathsf{re}}
\newcommand{\reg}{\mathsf{Rat}}
\newcommand{\ureg}{\mathsf{URat}}
\newcommand{\freg}{\mathsf{FRat}}
\newcommand{\greg}{\mathsf{GRat}}
\newcommand{\qtwomlo}{\mathsf{Q2FO}}
\newcommand{\qkmlo}{\mathsf{QkFO}}
\newcommand{\qtwomso}{\mathsf{Q2MSO}}
\newcommand{\qkmso}{\mathsf{QkMSO}}
\newcommand{\qfourmso}{\mathsf{Q4MSO}}
\newcommand{\qfourmlo}{\mathsf{Q4FO}}
\newcommand{\md}{\mathsf{md}}
\newcommand{\Qq}{\mathcal{Q}}
\newcommand{\Tt}{\mathcal{T}}
\newcommand{\Aa}{\mathcal{A}}
\newcommand{\norm}{\mathsf{Norm}}
\newcommand{\Bb}{\mathcal{B}}
\newcommand{\Cc}{\mathcal{C}}
\newcommand{\Dd}{\mathcal{D}}
\newcommand{\Zz}{\mathcal{Z}}
\newcommand{\Ii}{\mathcal{I}}
\newcommand{\regm}{\mathsf{Rat}}
\newcommand{\singl}{\mathsf{single}}
\newcommand{\mitl}{\mathsf{MITL}}
\newcommand{\Ss}{\mathsf{S}}
\newcommand{\F}{\mathsf{F}}
\newcommand{\Nat}{\mathbb N}
\newcommand{\Real}{\mathbb R}
\newcommand{\wB}{\Box^{\mathsf{w}}}
\newcommand{\until}{\:\mathsf{U}}
\newcommand{\R}{\:\mathbb{R}}
\mathchardef\mhyphen="2D
\mathchardef\mhyph="2D
\newcommand{\mtl}{\mathsf{MTL}}
\newcommand{\oomit}[1]{}
\newcommand{\po}{\mathsf{PO}}
\begin{document}
\title{B\"uchi-Kamp Theorems for 1-clock ATA}
\author[1]{S. Krishna}
\author[1]{Khushraj Madnani}
\author[2]{P. K. Pandya}
\affil[1]{IIT Bombay, Mumbai, India\\
\texttt{krishnas,khushraj@cse.iitb.ac.in}}
\affil[2]{TIFR Mumbai, India \\ 
\texttt{pandya@tcs.tifr.res.in}}
\authorrunning{Krishna, Khushraj and Paritosh Pandya}
\maketitle

\abstract
This paper investigates Kamp-like and B\"uchi-like theorems for 1-clock Alternating Timed Automata (1-ATA)
and its natural subclasses. 
A notion of 1-ATA with loop-free-resets is defined. This automaton
class is shown to be expressively equivalent to the temporal logic $\regmtl$ which is $\mathsf{MTL[F_I]}$ extended with a regular expression guarded modality. Moreover, a subclass of  future timed MSO with k-variable-connectivity property is introduced as logic $\qkmso$.
In a Kamp-like result, it is shown that $\regmtl$ is expressively equivalent to $\qkmso$. As our second result, we define a notion of 
conjunctive-disjunctive 1-clock ATA ($\wf$ 1-ATA). We show that $\wf$ 1-ATA with loop-free-resets are expressively equivalent to the sublogic
$\F\regmtl$ of $\regmtl$. Moreover $\F\regmtl$ is expressively equivalent to $\qtwomso$, the two-variable connected fragment of $\qkmso$. The full class of 1-ATA is shown to be expressively equivalent to $\regmtl$ extended with fixed point operators.

  \section{Introduction}
  
  The celebrated Kamp theorem proves expressive equivalence between classical logic and temporal logic over words. 
Equally celebrated are B\"uchi theorems, which  prove equivalence between classical/temporal logic and finite state automata. 
They constitute important results in the theory of logics, automata and their correspondences. 
Unfortunately, such correspondences have been hard to work out for timed automata and timed logics. 
This paper investigates B\"uchi-Kamp like theorems for an important class of timed languages, 
those accepted by 1-clock alternating timed automata (referred to as 1-clock ATA or 1-ATA from here on).

There are several different interpretations of timed logics in the literature. Notable variants are pointwise logics and
continuous timed logics over  finite and infinite timed words \cite{ow08}. 
 Of these, pointwise logic $\mathsf{MTL[U_I]}$ over finite words
has a special place for having decidable satisfiability. In this paper, we focus on only pointwise logics over finite timed words.

1-ATA over finite words are perhaps the largest boolean closed class of timed languages for which emptiness is 
known to be decidable.  Utilizing this fact, Ouaknine and Worrell showed in their seminal work that satisfiability of pointwise $\mathsf{MTL[U_I]}$ over
finite words is decidable, by constructing a language equivalent 1-clock ATA \cite{Ouaknine05}, \cite{igorL}
for a formula of $\mathsf{MTL[U_I]}$. Unfortunately, the logic 
turns out to have much less expressive power than 1-ATA. Indeed $\mathsf{MTL[U_I]}$ can be reduced to partially ordered 1-ATA and is even weaker than the latter. In previous work \cite{mfcs17}, we presented a logic $\sfmtl$\/ which is expressively equivalent for $\po$ 1-clock ATA. 

In a series of papers \cite{fossacs16}, \cite{time14}
 we have investigated  decidable extensions of $\mathsf{MTL[U_I]}$ with increasing expressive power culminating in $\regmtl$ \cite{mfcs17},  but they all turn out to be less expressive than 1-ATA.  Strong B\"uchi-Kamp results have remained elusive. 
In this paper, we now attempt to present some B\"uchi-Kamp like theorems for 1-ATA and its several natural subclasses. 
Unfortunately, we do not yet have a full solution, although several  structural restrictions on 1-ATA do allow such results.

Firstly, we define  timed extensions of monadic second order logic which we call as $\qkmso$ and its sublogics $\qtwomso$, $\qkmlo$ and $\qtwomlo$. These logics are inspired by the logic $\mathsf{Q2MLO}$ (over continuous time) defined by Hirshfeld and Rabinovich \cite{rabin} as well as Hunter \cite{hunter}. 
In $\qkmso$, logic MSO is extended with a metric quantifier block consisting of at most $k$ quantifiers 
resulting in a formula with exactly one free variable. This can be recursively used as an atomic predicate in MSO.  
 A carefully defined syntax gives us a logic which allows only future time properties to be stated.

As our first main result, we define a subclass of 1-clock ATA called 1-clock ATA with loop-free resets (1-ATA-$\lfr$). In these automata,
on any run and for any location $q$, a reset transition leading to $q$ (denoted $x.q$) must occur at most once. Equivalently
there is no cycle involving $x.q$. 
We show that logic $\regmtl$, defined earlier in \cite{mfcs17}, is expressively equivalent to 1-ATA-$\lfr$. Moreover, we also show that  $\qkmso$ is expressively equivalent to $\regmtl$. In proving this, we show a four variable property showing that $\qkmso$ is expressively equivalent to $\qfourmso$.
A variant of this result allows us to characterize a subclass of 1-ATA called partially ordered 1-ATA with the 
star-free fragment of $\regmtl$ (as shown in \cite {mfcs17}) as well as the first order fragment $\qkmlo$. This in turn is expressively equivalent to $\qfourmlo$. 

As our second main result, we introduce a notion of conjunctive-disjunctiveness in 1-clock ATA. Here, the ATA thread is  either in conjunctive mode or in disjunctive mode at a time,  and it can switch modes only on a reset transition. A restricted version of $\regm$ modality called $\F\regm$ was  defined in
\cite{mfcs17}. A similar modality was also defined by Wilke earlier \cite{Wilke}. 
We show that conjunctive-disjunctive 1-ATA with loop-free resets has exactly the expressive power of $\F\regmtl$ which is the same as $\mathsf{MTL}$ extended with only $\F\regm$ modality. Moreover, $\qtwomso$ is expressively
equivalent to $\fregmtl$. This result extends to other classes of 1-clock ATA. 

However, the case of full 1-clock ATA needs to be investigated. Towards this, we show a B\"uchi-like theorem which says that $\mu\regmtl$, obtained by introducing fixed point operators to $\regmtl$, is expressively equivalent to
full 1-clock ATA. However a Kamp theorem giving a classical logic equivalent to this remains under investigation.

\begin{table}
\begin{tabular}{|c|l|c|c|c|}
  \hline
  Restrictions & Automata & Temporal Logic & Forward Classical Logic & Where\\ \hline
    &  1-ATA & $\sfmtl$ & $\qkmlo \equiv \qfourmlo$ & \cite{mfcs17}, Theorem \ref{aut-tl-1}\\ \cline{2-5}
  $\po$ &  $\wf$ 1-ATA & $\F\sfmtl$ & $\qtwomlo$ & Theorem \ref{bk2-b1},\ref{bk2-k}\\ \cline{2-5}
   & $\wf$ 1-ATA  \textcolor{red}{np} & $\F\sfmitl$ & $\mathsf{Q2FO}$ \textcolor{red}{np} & Theorem \ref{bk2-b1},\ref{bk2-k}\\ \hline
    Loop   &  1-ATA & $\regmtl$ & $\qkmso\equiv \qfourmso$&Theorem \ref{aut-tl-1}, \ref{thm:bk1-k}\\ \cline{2-5}
  Free  & $\wf$ 1-ATA & $\F\regmtl$ & $\qtwomso$ &Theorem \ref{bk2-b1},\ref{bk2-k}\\ \cline{2-5}
  Resets & $\wf$ 1-ATA \textcolor{red}{np} & $\F\regm\mitl$ & $\mathsf{Q2MSO}$ \textcolor{red}{np}&Theorem \ref{bk2-b1},\ref{bk2-k}\\ \hline
  None   &  1-ATA & $\mu \regmtl$ & Open &Thoerem \ref{aut-tl-fixpoints1}, \ref{aut-tl-fixpoints2}\\ \cline{2-5}
   & $\wf$ 1-ATA & $\mu \fregmtl$ & Open &Theorem \ref{aut-tl-fixpoints1}, \ref{aut-tl-fixpoints2}\\ \cline{2-5}
      & $\wf$ 1-ATA \textcolor{red}{np}  & $\mu \fregmitl$ &Open  &Theorem \ref{aut-tl-fixpoints1}, \ref{aut-tl-fixpoints2}\\ \hline
\end{tabular}
\caption{Summary of results. \textcolor{red}{np} stands for non-punctual guards. A non-punctual guard is one which is not of the form $[a,a]$ for $a \in \mathbb{N}$.}
\end{table}

\section{Preliminaries}
\label{sec:prelims}
  Let $\Sigma$ be a finite set of propositions. A finite timed word over $\Sigma$ is a tuple
$\rho = (\sigma,\tau)$, where   $\sigma$ and $\tau$ are sequences $\sigma_1\sigma_2\ldots\sigma_n$ and  $\tau_1\tau_2\ldots \tau_n$ respectively, with $\sigma_i \in \Gamma=2^{\Sigma} \backslash \emptyset$,  and $\tau_i \in \R_{\geq 0}$
 for $1 \leq i \leq n$. For all $ i \in dom(\rho)$,  we have $\tau_i \le \tau_{i+1}$, where $dom(\rho)$ is the set of positions $\{1,2,\ldots,n\}$ in the timed word. For convenience, we assume $\tau_1=0$. 
 The $\sigma_i$'s can be thought of as labeling positions $i$ in $dom(\rho)$.  
 For example, given $\Sigma=\{a,b,c\}$,  $\rho=(\{a,c\},0)(\{a\},0.7)(\{b\},1.1)$ is a timed word.
$\rho$ is strictly monotonic iff $\tau_i < \tau_{i+1}$ for all $i,i+1 \in dom(\rho)$. 
Otherwise, it is weakly monotonic. 
The set of finite timed words over $\Sigma$ is denoted $T\Sigma^*$. 
Given $\rho=(\sigma,\tau)$ with $\sigma=\sigma_1\dots \sigma_n \in \Gamma^+$,
 $\sigma^{\singl}$ denotes the set of 
all words $w_1w_2 \dots w_n$ where each $w_i \in \sigma_i$. 
$\rho^{\singl}$ consists of all timed words 
 $(\sigma^{\singl}, \tau)$. 
For the $\rho$ as above,
$\rho^{\singl}$ consists of timed words $(\{a\},0)(\{a\},0.7)(\{b\},1.1)$
and $(\{c\},0)(\{a\},0.7)(\{b\},1.1)$. 

\subsection{Temporal Logics}
In this section, we define  preliminaries pertaining to the temporal logics studied in the paper. 
Let $I\nu$ be a set of open, half-open or closed time intervals. 
 The  end points of these intervals are  in $\mathbb{N}\cup \{0,\infty\}$.   For example, 
$[1,3), [2, \infty)$. For a time stamp $\tau {\in} \R_{\geq 0}$ and an interval  $\langle a, b\rangle$, where $\langle$ is left-open or left-closed 
and $\rangle$ is right-open or right-closed,   
$\tau+\langle a, b\rangle$ represents  the interval 
$\langle \tau+a, \tau+b\rangle$.

\noindent{\bf Metric Temporal Logic}($\mtl$). Given a finite alphabet $\Sigma$,  the formulae of logic $\mathsf{MTL}$ are built from $\Sigma$  using boolean connectives and 
time constrained version of the until modality $\until$ as follows: 
$\varphi::=a (\in \Sigma)~|true~|\varphi \wedge \varphi~|~\neg \varphi~|
~\varphi \until_I \varphi$, 
where  $I \in I\nu$.    
\label{point}
For a timed word $\rho=(\sigma, \tau) \in T\Sigma^*$, a position 
$i \in dom(\rho)$, and an $\mathsf{MTL}$ formula $\varphi$, the satisfaction of $\varphi$ at a position $i$ 
of $\rho$ is denoted $\rho, i \models \varphi$, and is defined as follows: (i)
\noindent $\rho, i \models a$  $\leftrightarrow$  $a \in \sigma_{i}$, (ii) $\rho,i  \models \neg \varphi$ $\leftrightarrow$  $\rho,i \nvDash  \varphi$, 
(iii) $\rho,i \models \varphi_{1} \wedge \varphi_{2}$   $\leftrightarrow$ 
$\rho,i \models \varphi_{1}$ 
and $\rho,i\ \models\ \varphi_{2}$, (iv)
$\rho,i\ \models\ \varphi_{1} \until_{I} \varphi_{2}$  $\leftrightarrow$  $\exists j > i$, 
$\rho,j\ \models\ \varphi_{2}, \tau_{j} - \tau_{i} \in I$, and  $\rho,k\ \models\ \varphi_{1}$ $\forall$ $i< k <j$. 
 The language of a $\mathsf{MTL}$ formula $\varphi$ is $L(\varphi)=\{\rho \mid \rho, 1 \models \varphi\}$. 
Two formulae $\varphi$ and $\phi$ are said to be equivalent denoted as $\varphi \equiv \phi$ iff $L(\varphi) = L(\phi)$.
 The subclass of $\mathsf{MTL}$ restricting the intervals $I$ in the until modality 
to non-punctual intervals is denoted  $\mathsf{MITL}$.  
 \begin{theorem}[\cite{Ouaknine05}]
 \label{thm-basic}
	 $\mathsf{MTL}$ satisfiability is decidable over finite timed words and is non-primitive recursive. 
\end{theorem}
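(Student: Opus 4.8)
The plan is to establish the two halves of the statement by rather different means: decidability through a translation to $1$-clock alternating timed automata together with the theory of well-structured transition systems, and the non-primitive-recursive lower bound through a simulation of faulty (lossy / insertion-error) channel machines.

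For decidability, the first step is a compositional translation of an $\mathsf{MTL}$ formula $\varphi$ into a language-equivalent $1$-clock ATA $\mathcal{A}_\varphi$: atomic propositions and Boolean connectives are realised directly by the alternating structure, and each subformula $\psi_1 \until_I \psi_2$ is implemented by a location that, upon being entered, resets the single clock $x$, keeps a ``$\psi_1$-obligation'' alive via a self-loop, and may discharge the obligation by branching to (the automaton for) $\psi_2$ only while the guard $x \in I$ holds; thus $L(\varphi) \neq \emptyset$ iff $L(\mathcal{A}_\varphi) \neq \emptyset$. The second step is to prove that emptiness of $1$-clock ATA over finite timed words is decidable. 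A configuration is a finite set of (location, clock-value) pairs; abstracting each clock value by its integer part (capped at the largest constant appearing in the automaton) together with the ordering of the fractional parts yields a symbolic configuration, i.e. a word of sets of (location, integer-part) pairs. Ordering symbolic configurations by a Higman-style sub-word embedding gives a well-quasi-order that is compatible with the (backward) transition relation, so the induced transition system is well-structured; a terminating backward-reachability computation from the accepting configurations then decides non-emptiness, hence satisfiability of $\mathsf{MTL}$.

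For the lower bound, the approach is a reduction from the reachability problem for channel machines with insertion errors (equivalently, reachability in lossy channel systems, or halting of incrementing-error counter machines), which is known to be non-primitive recursive. Given such a machine $M$, one encodes a run of $M$ as a timed word in which the channel (or counter) contents present at control step $n$ are recorded in the time interval $(n, n+1]$, and transitions of $M$ become local edits relating two consecutive unit-length blocks. The essential gadget is a punctual $\mathsf{MTL}$ formula built from the interval $[1,1]$ that forces every symbol occurring at time $t$ to be matched by a corresponding symbol at time $t+1$; this propagates channel contents forward, and the only permitted mismatches are precisely insertion errors, which the faulty machine tolerates anyway. One then assembles an $\mathsf{MTL}$ formula $\varphi_M$ which is satisfiable iff $M$ has a (faulty) run reaching its target configuration, yielding non-primitive recursiveness of $\mathsf{MTL}$ satisfiability.

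The main obstacle is the emptiness proof for $1$-clock ATA: the symbolic abstraction and its ordering must be chosen so that the well-quasi-order is genuinely monotone with respect to transitions despite the universal (conjunctive) branching of alternation — a coarser region-style abstraction can break compatibility — and one must verify that backward reachability terminates and that each predecessor step is effectively computable on symbolic configurations. On the hardness side, the delicate point is the robustness of the punctual copying gadget: one must show that every timed-word model of $\varphi_M$ encodes a legitimate faulty run of $M$, and conversely that every such run is captured by some model, so that satisfiability coincides exactly with faulty reachability.
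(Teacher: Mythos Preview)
Your proposal is correct and faithfully sketches the Ouaknine--Worrell argument from \cite{Ouaknine05}; note, however, that the present paper does not supply its own proof of this theorem but simply cites it as a known result, so there is no in-paper proof to compare against. Your outline of the $1$-clock ATA translation plus well-structured transition system backward search for decidability, and the insertion-error channel machine reduction for the non-primitive-recursive lower bound, is exactly the approach of the cited source.
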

\paragraph*{\bf{$\mtl$ with Rational Expressions($\regmtl$)}}
We first recall an  extension of $\mathsf{MTL}$ with rational expressions ($\regmtl$), 
 introduced in \cite{mfcs17}.    
The modalities in $\regmtl$ assert the truth of a rational expression (over subformulae) within a particular time interval with respect to the present point. 
For example, for an interval $I{=}(0,1)$, the $\regm_I$ modality works as follows: 
the formula $\regm_{(0,1)}(\varphi_1.\varphi_2)^+$ when evaluated at a point $i$, asserts the existence of  $2k$ points with time stamps 
$\tau_i < \tau_{i+1} < \tau_{i+2} < \dots < \tau_{i+2k} < \tau_{i}+1$, $k >0$, 
such that $\varphi_1$ evaluates to true at $\tau_{i+2j+1}$, and $\varphi_2$ evaluates to true at 
$\tau_{i+2j+2}$, for all $0 \leq j <k$. 

\noindent{$\regmtl$ \bf{Syntax}:} Formulae of $\regmtl$ are built from a finite alphabet $\Sigma$ as: \\
	$\varphi::=a (\in \Sigma)~|true~|\varphi \wedge \varphi~|~\neg \varphi~|~ \regm_I \re(\Ss)~|
	~\varphi \uregm_{I,\re(\Ss)} \varphi$, 
	where $I \in I\nu$
	and $\Ss$ is a finite set of subformulae of $\varphi$, and  $\re(\Ss)$ is defined as a rational expression over $\Ss$. 
	 $\re(\Ss)::= \epsilon~|~\varphi (\in \Ss)~|~\re(\Ss).\re(\Ss)~|~\re(\Ss)+\re(\Ss)~|~[\re(\Ss)]^*$. Thus, $\regmtl$ 
	 is $\mtl$ extended with modalities  $\ureg$ and  $\reg$. 
An \emph{atomic}  rational expression $\re$ is any well-formed formula $\varphi \in \regmtl$. \\
\noindent{$\regmtl$ {\bf Semantics}:} 
For a timed word $\rho=(\sigma, \tau) \in T\Sigma^*$, a position 
$i \in dom(\rho)$,  a $\regmtl$ formula $\varphi$, and a finite set $\Ss$ of subformulae of $\varphi$, we define the satisfaction of $\varphi$ at a position $i$ 
as follows. For positions $i < j \in dom(\rho)$, let $\mathsf{Seg}(\Ss, i, j)$ denote 
the untimed word over $2^{\mathsf{S}}$ 
obtained by marking  the positions $k \in \{i+1, \dots, j-1\}$ of $\rho$ with 
$\psi \in \mathsf{S}$ iff $\rho,k \models \psi$.  
For  a position $i{\in} dom(\rho)$ and an interval $I$, 
 let $\mathsf{TSeg}(\Ss, I, i)$ denote 
the untimed word over $2^{\mathsf{S}}$
obtained by marking all the positions $k$ 
such that $\tau_k - \tau_i \in I$ 
  of $\rho$ with 
$\psi \in \mathsf{S}$ iff $\rho,k \models \psi$. \\
%\begin{enumerate}
%\item  
 \noindent$\bullet$ $\rho,i \models \freg_{I,\re(\Ss)} \varphi$  $\leftrightarrow$  $\exists j {>} i$, 
$\rho,j {\models}\ \varphi, \tau_{j} - \tau_{i} {\in} I$ and,  
$[\mathsf{Seg}(\Ss, i, j)]^{\singl} \cap  
 L(\re(\Ss)) \neq \emptyset$, where $L(\re(\Ss))$ is the language  
of the rational expression $\re$ formed over the set $\Ss$. 
In \cite{mfcs17}, the $\ureg$ modality was used instead of $\freg$; however, both have the same 
expressiveness. Note that $\varphi_1 \ureg_{I, \re(\Ss)} \varphi_2$ is equivalent to 
$\freg_{I, \re'(\Ss \cup \{\varphi_1\})} \varphi_2$ where 
$\re'(\Ss \cup \{\varphi_1\})=\re(\Ss) \cap \varphi_1^*$.  \\
\noindent$\bullet$
$\rho,i \models \reg_I \re$ $\leftrightarrow$ 
$[\mathsf{TSeg}(S, I, i)]^{\singl} \cap L(\re(\Ss)) \neq \emptyset$. 
The language accepted by a $\regmtl$ formula $\varphi$ is given by 
$L(\varphi)=\{\rho \mid \rho, 1 \models \varphi\}$. The subclass of $\regmtl$ using only the $\F\regm$ modality is denoted 
$\fregmtl$. When only non-punctual intervals are used, then it is denoted 
 $\fregmitl$.  Thus $\fregmtl$ is $\mtl+\freg$.\\
\noindent{\bf Remark}. The classical $\varphi_1 \until_I \varphi_2$ modality can be written in $\fregmtl$ as 
$\freg_{I, \varphi_1^*} \varphi_2$. 
Also, it can be shown \cite{mfcs17} that the $\freg$ modality 
can be expressed using the $\reg$ modality. 

\noindent{{\bf Modal depth}:} 
We define the modal depth ($\md$) of a $\regmtl$ formula. 
An atomic $\regmtl$ formula over $\Sigma$ is just a propositional 
logic formula over $\Sigma$ and has modal depth 0. 
A $\regmtl$ formula $\varphi$ over $\Sigma$ 
having a single modality ($\regm$ or $\freg$) 
has modal depth one and has the form 
$\regm_{I}\re$ or $\freg_{I, \re} \psi$ 
where  $\re$ is a regular expression over propositonal logic formulae over $\Sigma$ 
and  $\psi$ is a propositional logic formula over $\Sigma$. 
Inductively, we define modal depth as follows.

\noindent (i) $\md(\varphi \wedge \psi)$ = $\md(\varphi \vee \psi)$ = $\max[\md(\varphi),\md(\psi)]$.
 Similarly, $\md(\neg \varphi) = \md(\varphi)$. \\
(ii) Let $\re$ be a regular expression over the set of subformulae $\Ss = \{\psi_1,\ldots,\psi_k\}$. \\
 $\md(\freg_{I,\re}(\varphi)){=} 1{+} \max(\psi_1,\ldots,\psi_k,\varphi)$. 
 Similarly $\md(\regm_I (\re)) {=} 1{+} \max(\psi_1,\ldots,\psi_k)$.

\begin{example}
Consider the formula $\varphi=a \uregm_{(0,1), (aa)^*} b$. Then $\re{=}(aa)^*$, and the subformulae 
of interest are $\Ss=\{a,b\}$. 
For $\rho{=}(\{a\},0)$ $(\{a,b\},0.3)$ $(\{a,b\},0.7)(\{b\},0.9)$, 
 $\rho, 1 {\models} \varphi$, since $a {\in} \sigma_2$, $\sigma_3$, $b {\in} \sigma_4$,  $\tau_4 {-} \tau_1 {\in} (0,1)$ and 
 $aa \in [\mathsf{Seg}(\{a,b\}, 1, 4)]^{\singl} \cap L((aa)^*)$. 
  $[\mathsf{Seg}(\{a,b\}, 1, 4)]^{\singl}$ consists of  the words $aa, ab, ba, bb$.  
On the other hand, for $\rho=(\{a\},0)(\{a\},0.3)$ $(\{a\},0.5)(\{a\},0.9)$ $(\{b\},0.99)$,  
$\rho, 1 \nvDash \varphi$, since even though $b \in \sigma_5, a \in \sigma_i$ for $i <5$, 
$[\mathsf{Seg}(\{a,b\}, 1, 5)]^{\singl}$ consists of only the word $aaa$ and $aaa \notin L((aa)^*)$.
\end{example}
\begin{example}
Consider the formula $\varphi=\regm_{(0,2)}[(a \uregm_{(0,1),b^+}c).e^+]$. The subformulae of interest 
are $\Ss=\{a \uregm_{(0,1),b^+}c, a, b, c, e\}$.
\begin{enumerate}
\item 
 Let $\rho{=}(\{a\},0)(\{a,b\},0.3)(\{a,b,e\},0.4)(\{a,b,e\},0.9)(\{b,c,e\},0.99)(\{b,e\},1.3)(\{e\},1.9)$. 
 We want to check if $[\mathsf{TSeg}(\Ss, (0,2), 1)]^\singl$ intersects with $L((a \uregm_{(0,1),b^+}c).e^+)$. 
Note that we mark positions 2 to 7 with subformulae 
from $\Ss$. Position 2 is marked  $a \uregm_{(0,1),b^+}c$ since $c \in \sigma_5$, 
$a \in \sigma_2, \dots, \sigma_4$, and $[\mathsf{Seg}(\Ss,2,5)]^{\singl}$ 
contains the word $bb \in L(b^+)$.  Similarly, positions 
 3 till 7 are marked $e$.   Hence, $[\mathsf{TSeg}(\Ss, (0,2), 1)]^\singl$ contains the word $(a \uregm_{(0,1),b^+}c).e.e.e.e.e$
which is in $L((a \uregm_{(0,1),b^+}c).e^+)$. We thus have $\rho, 1 \models \varphi$.
 \item 
 For $\rho'{=}(\{b\},0)(\{a,b\},0.3)(\{a,b\},0.4)(\{a,e\},0.9)(\{b,c,e\},0.99)(\{b,e\},1.3)(\{e\},1.9)$, 
$\rho',1 \nvDash \varphi$ since $[\mathsf{TSeg}(\Ss, (0,2), 1)]^\singl$
 does not contain any word in  $L((a \uregm_{(0,1),b^+}c).e^+)$. Note that  we 
 can neither mark position 2 with $a \uregm_{(0,1),b^+}c$, nor mark 
 the remaining positions continuously with $e$. Even if we add $b$ to $\sigma_4$ 
 or add $e$ to $\sigma_3$ (but not both), we still have non-satisfiability.  
  \end{enumerate}
\end{example}

\subsection{Classical Logics}
In this section, we define all the preliminaries pertaining to classical logics 
needed for the paper. We introduce some quantitative variants of classical logics 
 inspired by the logics in \cite{rabin}.
Let $\rho=(\sigma,\tau)$ be a timed word over a finite alphabet $\Sigma$, as before.
We define a real-time logic \emph{forward} $\qkmso$ (with parameter $k \in \Nat$) which is interpreted over such words.
It includes $MSO[<]$ over words $\sigma$ relativized to specify only future properties. This
is extended with a notion of time constraint formula $\psi(t_i)$. 

Let $t_0,t_1, \dots$ be first order variables and $T_0, T_1, \dots$ the monadic second-order
variables. We have the two sorted logic consisting of MSO formulae $\phi$ and time constrained formulae 
$\psi$. Let $a \in \Sigma$, each $t_i$ range over first order variables and $T_i$ over second order
variables. Each quantified first order variable  in  $\phi$ is relativized to the future of some variable, say $t_0$, 
called anchor variable, giving formuale of $MSO^{t_0}$. The syntax of $\phi \in MSO^{t_0}$ is given by: \\
$ t_p{=}t_q|t_p{<}t_q|Q_a(t_p)|T_j(t_i) \mid \varphi {\wedge} \varphi|{\neg} \varphi|
 {\exists} t'. t'{>}t_0 \varphi|{\exists}  T_i \varphi |{\bf \psi(t_p)}$. \\
Here, $\psi(t_p)$
is a time constrained formula whose syntax and semantics are given little later.

A formula in $MSO^{t_0}$ with first order free variables $t_0,t_1, \ldots t_k$ and second-order free variables $T_1, \ldots, T_m$ and which is  relativized to the future of $t_0$ is denoted $\phi(\downarrow t_0,\ldots t_k,T_1, \ldots, T_m)$. (The $\downarrow$ is only to indicate the
anchor variable. It has no other function.)
The semantics of such formulae is as usual. Given $\rho$, positions $a_0,a_1 \ldots a_k$ in $dom(\rho)$, and sets of positions $A_1, \ldots, A_m$ with $A_i \subseteq dom(\rho)$,  we define 
$\rho,(a_1,\ldots,a_k,A_1,\ldots,A_m)$ ${\models}$ $\phi({\downarrow} t_0 ,t_1, \ldots t_k,T_1, \ldots, T_m)$ inductively, as usual.
For instance, 
\begin{enumerate}
\item 	$(\rho,a_1,\ldots,a_k,A_1,\ldots,A_m) {\models} $ $t_i {<} t_j$ iff $a_i {<} a_j$,  
\item $(\rho,a_1,\ldots,a_k,A_1,\ldots,A_m) {\models} $
 $Q_a(t_i)$ iff $a {\in} \sigma(a_i)$,
 \item $(\rho,a_1,\ldots,a_k,A_1,\ldots,A_m) {\models} $
 $ T_j(t_i)$ iff $a_i {\in} A_j$, and 
 \item $(\rho,a_1,\ldots,a_k,A_1,\ldots,A_m) {\models} $ $\exists t_{k}~ t_0{<}t_{k} {\land} 
 \phi({\downarrow} t_0,t_1, \ldots t_k,T_1, \ldots, T_m)$ iff \\
 $\rho,(a_0,\ldots,a_k,A_1,\ldots,A_m) \models \phi({\downarrow} t_0,t_1, \ldots t_k,,T_1, \ldots, T_m)$ for some $a_{k} \geq a_0$. 
\end{enumerate}
 We omit other cases.
The {\bf time constraint} $\psi(t_0)$ has the form 
$\Qq_1t_1 \Qq_2t_2 \dots \Qq_jt_j ~
\phi(\downarrow t_0,t_1,\ldots t_j)$ 
where $\phi \in MSO^{t_0}$ and ${\bf j < k}$, the parameter of logic $\qkmso$. Each quantifier 
$\Qq_i t_i$ has the form $\texists t_i \in t_0+ I_i$ or
$\tforall t_i \in t_0+ I_i$ for a time interval  $I_i$ as in $\mtl$ formulae.
$\Qq_i$ is called a metric quantifier. The semantics of such a formula is as follows.
$(\rho,a_0) \models \Qq_1t_1 \Qq_2t_2 \dots \Qq_jt_j~  \phi(\downarrow t_0,t_1,\ldots t_j)$ iff for $1\leq i \leq j$,
there exist/for all $a_i$ such that $a_0 \leq a_i$ and $\tau_{a_i} \in \tau_{a_0}+I_i$, we have
$\rho,(a_0,a_1 \ldots a_j) \models  \phi({\downarrow} t_0,t_1,\ldots t_j)$.
Note that each time constraint formula has exactly one free variable.
\begin{example}
Let $\rho$=$(\{a\},0)$ $(\{b\},2.1)$ $(\{a,b\}, 2.75)$ $(\{b\},3.1)$ be a timed word. 
Consider the time constraint  $\psi(x)$ $= ~
\texists y \in x+(2, \infty) \texists z \in x+(3,\infty) (Q_b(y) \wedge Q_b(z))$. 
It can be seen that $\rho,1 {\models} Q_a(x) \wedge \psi(x)$.
\end{example}

\noindent{\bf{Metric Depth}}. The \emph{metric depth} of a formula $\varphi$ denoted ($\md(\varphi)$) gives the nesting depth of time constraint
constructs. It is defined inductively as follows: For atomic formulae $\varphi$,  $\md(\varphi)=0$. All the constructs of $MSO^{t_i}$ do not increase $\md$. For example $\md[\varphi_1 \land \varphi_2]= max(\md[\varphi],\md[\varphi_2])$ and $\md[\exists t. \varphi(t)]$. 
However $\md$ is incremented for each application of metric quantifier block.
$\md[\Qq_1t_1 \Qq_2 t_2 \ldots \Qq_j t_j \phi]$ $=$ $\md[\phi] + 1$.
\begin{example}
The sentence $\forall t_0~\tforall t_1 \in t_0 + (1,2)~ \{Q_a(t_1) {\rightarrow} (\texists t_2 \in t_1 + [1,1]~ Q_b(t_2))\}$
accepts all timed words such that for each $a$ which is at distance (1,2) from some time stamp $t$, 
there is a $b$ at distance 1 from it. 
This sentence has metric depth two.
\label{eg:classical} 
\end{example}
\noindent{\bf Special Cases of $\qkmso$}. The case when $k=2$ gives  logic $\qtwomso$. The absence of second order variables and 
second order quantifiers gives logics $\qkmlo$ and $\qtwomlo$. The formula in example \ref{eg:classical} is a $\qtwomlo$ formula. 
Note that our $\qtwomlo$ is the pointwise counterpart of logic $\mathsf{Q2MLO}$ 
studied in \cite{rabin} in the continuous semantics.

\subsection{1-clock Alternating Timed Automata}
\label{prelims:aut}
Let $\Sigma$ be a finite alphabet and let $\Gamma=2^{\Sigma} \backslash \emptyset$.
A 1-clock ATA or 1-ATA \cite{Ouaknine05} is a 5 tuple $\mathcal{A}=(\Gamma, S, s_0, F, \delta)$, where 
 $S$ is a finite set of locations,   
$s_0 \in S$  is the initial location and $F \subseteq S$ is the set 
of final locations.   Let $x$ denote the clock variable in the 1-clock ATA, and $x \in I$ denotes a clock constraint 
where $I$ is an interval. 
 Let $X$ denote a finite set 
of clock constraints of the form $x \in I$.  The transition function is defined as   
$\delta: S \times \Gamma \rightarrow \Phi(S \cup X)$ where
$\Phi(S \cup X)$ is a set of formulae over $S \cup X$ defined by the grammar  
$\varphi::=\top|\bot|\varphi_1 \wedge \varphi_2|\varphi_1 \vee \varphi_2|s|x \in I|x.\varphi$ where  
$s \in S$, and $x. \varphi$ is a binding construct  resetting clock $x$ to 0. 

Without loss of generality, we assume that all transitions $\delta(s,a)$ are in disjunctive 
normal form $C_1 \vee C_2 \vee \dots \vee C_n$ where each $C_i$ is a conjunction 
of clock constraints and locations $s, x.s$. 
Occurrences of $\top$ in
a $C_i$ can be removed, while 
if some $C_i$ contains 
 $\bot$, that $C_i$ can be removed from $\delta(s,a)$.  

A configuration of a 1-clock ATA is a set consisting of locations along with their clock valuation. 
Given a configuration $\Cc=\{(s, \nu) \mid s \in S, \nu \in \Real_{\geq 0}\}$, 
we denote by $\Cc+t$ the configuration $\{(s, \nu+t) \mid s \in S, \nu+t \in \Real_{\geq 0}\}$
obtained after a time elapse $t$, when $t$ is added to all valuations 
in $\Cc$. $\delta(\Cc,a)$ is the configuration obtained by applying 
$\delta(s,a)$ to each location $s$ such that $(s, \nu) \in \Cc$. 
A run of the 1-clock ATA starts from the initial configuration $\Cc_0=\{(s_0,0)\}$ and has the form 
$\Cc_0 \stackrel{t_0}{\rightarrow}\Cc_0+t_0\stackrel{}{\rightarrow} \Cc_1 \stackrel{t_1}{\rightarrow}
  \Cc_1+t_1 \dots \stackrel{}{\rightarrow} \Cc_m$
 and proceeds with alternating time elapse transitions and  
discrete transitions  reading a symbol from $\Sigma$. A configuration $\Cc$ is accepting iff 
for all $(s, \nu) \in \Cc$, $s \in F$. Note that the empty configuration is also an accepting configuration.
 The language accepted  by a 1-clock ATA $\mathcal{A}$, denoted 
$L(\mathcal{A})$ is the set of all timed words $\rho$ such that starting 
from $\{(s_0,0)\}$, reading $\rho$ leads to an accepting configuration. 

We will define some terms which will be used in sections \ref{bk1}, \ref{bk2}.
 Consider a transition $\delta(s,a)=C_1 \vee \dots \vee C_n$
 in the 1-clock ATA. Each $C_i$ is a conjunction 
 of $x \in I$, locations $p$ and $x.p$. We say that $p$ is \emph{free} 
 in $C_i$ if there is an occurrence of $p$ in $C_i$ and no occurrences of $x.p$ in $C_i$;
  if $C_i$  has an $x.p$, then we say that $p$ is \emph{bound} in $C_i$. 
We say that $p$ is bound in $\delta(s,a)$ if it is bound in some $C_i$.  
\noindent A $\po$-1-ATA is one in which \\
\noindent$\bullet$ there is a  partial order denoted $\prec$ on the locations, such that 
the locations appearing in any transition $\delta(s,a)$ 
are in $\{s\} \cup \downarrow s$ where $\downarrow s=\{p \mid p \prec s\}$. \\
\noindent$\bullet$ $x.s$ does not appear in $\delta(s,a)$ for any $s \in S, a \in \Gamma$. \\
It is known \cite{mfcs17} that $\po$-1-ATA exactly characterize logic 
$\sfmtl$ (this is a subclass of $\regmtl$ where the regular expressions have an equivalent star-free expression).  
\begin{example}
 Consider  the $\po$-1-ATA $\mathcal{A}{=}(\{a,b\}$, $\{t_0,t_1,t_2\}, t_0, \{t_0, t_2\}$,
 $\delta_{\Aa})$ with transitions \\ 
	$\delta_{\Aa}(t_0,\{b\}){=}t_0,$  $\delta_{\Aa}(t_0,\{a\})=(t_0 \wedge x.t_1) \vee t_2,$
	$\delta_{\Aa}(t_1,\{a\}){=}(t_1 \wedge x<1) \vee (x>1)=\delta_{\Aa}(t_1,\{b\}),$ and 
	$\delta_{\Aa}(t_2,\{b\}){=}t_2, \delta_{\Aa}(t_2,\{a\})=\bot$, 
	$\delta_{\Aa}(t, \{a,b\})=\bot$ for $t \in \{t_0, t_1, t_2\}$.       
	The automaton accepts all strings where $\{a,b\}$ does not occur, and every non-last $\{a\}$ 
	has no symbols at distance 1 from it, and has some symbol 
	at distance $>1$ from it. 
	\label{eg1}
\end{example}

\subsection{Useful Tools}
\label{sec:tools}
In this section, we introduce some notations and prove some 
lemmas which will be used several times in the paper. 
Let $\Aa$ be a 1-clock ATA and let  $c_{max}$ 
be the maximum constant used in the transitions. 
The set $reg=\{0, (0,1), \dots,$ $c_{max}$, $(c_{max}, \infty)\}$ 
denotes the set of regions. Given a finite alphabet $\Sigma$, 
with $\Gamma=2^{\Sigma}\backslash \emptyset$, a 
region word is a word over the alphabet  $\Gamma \times reg$
  called the \emph{interval} alphabet. 
  A region word $w=(a_1, I_1)$ $(a_2, I_2)$ $\dots (a_m, I_m)$ 
  is \emph{ good} iff $I_j \leq  I_k \Leftrightarrow j <k$ and $I_1$ is initial region $0$. Here,  $I_j \leq I_k$ 
represents that the  upper bound of $I_j$ is at most the lower bound of $I_k$. 
A timed word $(a_1, \tau_1)\dots (a_m, \tau_m)$ is consistent with 
a region word $(b_1, I_1)(b_2, I_2) \dots (b_n, I_n)$
iff $n=m, a_j=b_j$, and $\tau_j \in I_j$ for all $j$. The set of timed words consistent 
with a good region word $w$ is denoted $\Tt_w$. Likewise, given a timed word 
$\rho$, $reg(\rho)$ represents the good region word $w$ such that 
$\rho \in  \Tt_w$. 

The following lemmas (proof of Lemma \ref{untime1} in Appendix \ref{app:untime1}) come in handy later in the paper. 
\begin{lemma}
\label{untime1}[Untiming $P \rightarrow \Aa(P)$]
Let $P$ be a 1-clock ATA over $\Gamma$ having no resets. 
We can construct an alternating finite automaton (AFA)  
$\Aa(P)$ over the interval alphabet $\Gamma \times reg$ 
such that for any good region word $w=(a_1, I_1) \dots, (a_n, I_n)$,  $w \in L(\Aa(P))$ iff  
 $\Tt_w \subseteq L(P)$.   Conversely, $\rho \in L(P)$ iff 
 $reg(\rho) \in L(\Aa(P))$, for any timed word $\rho$.
 Hence, $L(P) ~=~ \{T_w ~\mid~ w \in L(A(P) \}$.
\end{lemma}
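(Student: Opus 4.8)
The plan is to build $\Aa(P)$ by tracking, in the finite control of the AFA, the current region of the single clock together with the location. Since $P$ has no resets, the clock is never reset during a run; all that matters about the clock valuation $\nu$ at a given point is which region of $reg$ it lies in, and this is determined entirely by the positions/timestamps read so far. So I would let the AFA read the region word $w=(a_1,I_1)\dots(a_n,I_n)$ symbol by symbol and maintain, for each active thread, a pair $(s,R)$ with $s\in S$ and $R\in reg$. The key point making this finite-state is that for a \emph{good} region word the sequence $I_1,I_2,\dots$ is nondecreasing and determines, at step $j$, exactly what region the clock sits in (namely $I_j$ itself, since the clock started at $0$ at position $1$ and the timestamp of position $j$ lies in $I_j$). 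Hence the region is just the current input letter's second component, and I do not even need to carry $R$ separately — I can read it off the tape. The transition of $\Aa(P)$ at a state $s$ on letter $(a,I)$ is obtained from $\delta(s,a)$ by replacing each clock test $x\in J$ by $\top$ if $I\subseteq J$ and by $\bot$ otherwise (this is sound because, for a good region word, \emph{every} timed word consistent with $w$ has its $j$-th clock value in $I_j$, so the test has a uniform truth value), and replacing every location $p$ by the AFA-state $p$; the boolean structure $\wedge,\vee$ carries over directly. The initial state is $s_0$ and the accepting states are $F$, with the empty conjunction accepting as in the 1-ATA.

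The correctness proof then splits into the two claimed equivalences. For the first, $w\in L(\Aa(P)) \iff \Tt_w\subseteq L(P)$: by induction on the length of the remaining word (equivalently on run depth), one shows that the acceptance game of $\Aa(P)$ from state $s$ on the suffix starting at position $j$ has a winning strategy for the $\exists$-player exactly when, for \emph{every} timed word $\rho\in\Tt_w$, the configuration $\{(s,\tau_j)\}$ of $P$ accepts the corresponding suffix of $\rho$. The forward direction uses that an accepting strategy tree of the AFA, when its clock tests are instantiated by any consistent $\rho$, becomes an accepting run tree of $P$ on $\rho$ — here one uses that $I_j\subseteq J$ guarantees the test $x\in J$ is actually satisfied by $\tau_j\in I_j$, and that the $\wedge$-branching of the AFA corresponds exactly to the conjunctive branching of $P$. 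The backward direction uses the \emph{goodness} of $w$ in the other way: if some clock test $x\in J$ has $I_j\not\subseteq J$, then because $I_j$ is a single region there is a consistent $\rho$ with $\tau_j\notin J$, so a $P$-run tree that relies on that test cannot succeed for all of $\Tt_w$, matching the AFA's choice to send that branch to $\bot$. For the second equivalence, $\rho\in L(P) \iff reg(\rho)\in L(\Aa(P))$: one direction is immediate from the first equivalence since $\rho\in\Tt_{reg(\rho)}$ implies $\Tt_{reg(\rho)}\subseteq L(P)$ only in the easy case; more carefully, one observes that because each $I_j$ is a region and the clock is never reset, \emph{all} of $\Tt_{reg(\rho)}$ behaves identically with respect to $P$ (every clock test has the same truth value on the whole class), so $\rho\in L(P)$ iff $\Tt_{reg(\rho)}\subseteq L(P)$ iff $reg(\rho)\in L(\Aa(P))$. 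The final displayed identity $L(P)=\{\Tt_w\mid w\in L(\Aa(P))\}$ (read as the union of such $\Tt_w$) then follows by combining the two: every $\rho\in L(P)$ has $reg(\rho)\in L(\Aa(P))$ and lies in $\Tt_{reg(\rho)}$, and conversely each accepted $w$ contributes $\Tt_w\subseteq L(P)$.

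The main obstacle, and the step deserving the most care, is the backward direction of the first equivalence — showing that when $\Aa(P)$ \emph{rejects} $w$, there is a single witness $\rho\in\Tt_w$ that $P$ also rejects. This is where the no-reset hypothesis and the goodness of $w$ are both essential: without resets, the truth values of all clock tests along a run are frozen by the choice of $\rho$'s timestamps within the fixed region sequence, so an adversary strategy for $P$ on one well-chosen $\rho$ can be lifted to an adversary strategy against the AFA. One must be slightly careful that a \emph{single} $\rho$ works uniformly across the whole (alternating, hence tree-shaped) run, rather than needing different timestamps on different branches — but since each position $j$ has one region $I_j$ and one chosen timestamp $\tau_j$ shared by all branches, and each region either satisfies or fails each interval test outright, this uniformity holds. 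I would phrase this as: pick any $\rho\in\Tt_w$; then the $\wedge$-$\vee$ game of $P$ on $\rho$ from $\{(s_0,0)\}$ is \emph{literally the same game} as the acceptance game of $\Aa(P)$ on $w$ from $s_0$, after substituting for each test $x\in J$ at position $j$ its (position-dependent but $\rho$-independent, by goodness) truth value, which is precisely $\top$ iff $I_j\subseteq J$. Everything else — closure of the construction, DNF handling of $\delta$, treatment of $\top/\bot$ — is routine bookkeeping already set up in the preliminaries.
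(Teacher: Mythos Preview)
Your proposal is correct and follows essentially the same approach as the paper: the paper also builds $\Aa(P)$ by keeping the locations of $P$ and, for each region $I$, rewriting $\delta(s,a)$ into $\delta(s,(a,I))$ by dropping the clock constraints (equivalently, your substitution of $\top$/$\bot$ for $x\in J$ according to whether $I\subseteq J$), and then argues the run correspondence between $P$ on a timed word and $\Aa(P)$ on its region word. Your write-up is somewhat more explicit about why a single $\rho\in\Tt_w$ suffices uniformly across all branches of the alternating run (the ``same game'' observation), but the underlying idea and construction coincide with the paper's.
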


\begin{lemma}
\label{untime2}
Let $\Aa(P)$ be an AFA  over the interval alphabet $\Gamma \times reg$ constructed from a reset-free 1-clock ATA $P$ as in Lemma \ref{untime1}. 
 We can construct a $\regmtl$ formula
 $\varphi$ such that $L(\varphi)=\{\Tt_w \mid w \in L(\Aa(P))\}$. Hence, by Lemma \ref{untime1}, $L(\varphi)=L(P)$.
 If the AFA is aperiodic, then $\varphi$ is a $\sfmtl$ formula. 
 \end{lemma}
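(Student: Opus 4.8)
The plan is to translate the AFA $\Aa(P)$ over the interval alphabet $\Gamma \times reg$ into a $\regmtl$ formula by a standard Büchi-style induction on the structure of the automaton, using the $\reg$ modality to capture the regular transduction that an AFA performs, and using the region structure to recover the metric content. The key observation is that a good region word splits into finitely many blocks, one per region in $reg = \{0, (0,1), 1, \dots, c_{max}, (c_{max},\infty)\}$, and within each block the AFA behaves like an ordinary (untimed) alternating automaton over $\Gamma$. Since AFAs over a finite alphabet accept exactly the regular languages, each block-behaviour of $\Aa(P)$ corresponds to a regular expression over $\Gamma$ (more precisely over the relevant propositional formulae), and the $\reg_I$ modality of $\regmtl$, evaluated at the first position, is exactly the device that asserts "the positions whose timestamps lie in $I$, read as an untimed word, match this regular expression."

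Concretely, first I would fix the initial position $1$ (with $\tau_1 = 0$) as the anchor; by definition $L(\varphi) = \{\rho \mid \rho,1 \models \varphi\}$, which matches the requirement $L(\varphi) = \{\Tt_w \mid w \in L(\Aa(P))\}$ via Lemma \ref{untime1}. Next, for each region $r \in reg$ let $I_r$ be the corresponding interval shifted to be relative to the anchor (so $I_0 = [0,0]$, $I_{(0,1)} = (0,1)$, etc.); the positions of $\rho$ contributing to region $r$ are exactly those picked out by $\mathsf{TSeg}(\Ss, I_r, 1)$. For the product construction, I would convert $\Aa(P)$ to an equivalent nondeterministic finite automaton $\Bb$ over $\Gamma \times reg$ (alternation removal for finite automata, giving at most an exponential blow-up but this does not affect expressiveness), and then view $\Bb$ as running through a bounded sequence of region-phases: in phase $r$ it uses only letters $(a,r)$, and between phases it makes an $\varepsilon$-free region-change. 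This decomposes $L(\Bb)$ into a finite union of concatenations $R_0 \cdot R_{(0,1)} \cdot R_1 \cdots R_{(c_{max},\infty)}$ where each $R_r$ is a regular language over $\Gamma$ recognised by the "phase-$r$ fragment" of $\Bb$ (with appropriate entry/exit state pairs). Each such $R_r$ is given by a regular expression $\re_r$ over $\Gamma$, which I lift to a regular expression over the subformula set $\Ss$ (propositional formulae asserting the relevant letter); then the conjunct $\reg_{I_r}(\re_r)$ says that the $I_r$-window of $\rho$, as an untimed word, lies in $R_r$. The formula $\varphi$ is the finite disjunction, over all accepting choices of entry/exit states across phases, of the conjunction $\bigwedge_{r \in reg} \reg_{I_r}(\re_r)$; one must also add a conjunct forcing that every position of $\rho$ falls into exactly one region window, which is automatic since $reg$ partitions $\Real_{\geq 0}$ up to the region boundaries and the windows $I_r$ are pairwise disjoint and cover everything, so no extra conjunct is actually needed beyond ensuring the decomposition is read consistently.

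The main obstacle will be getting the phase-boundary bookkeeping exactly right: because region boundaries can carry positions (the singleton region $\{k\}$), and because $\Bb$'s state at the moment of crossing from region $r$ to region $r+1$ must be threaded through as a side condition, the clean "one regular expression per region" picture needs care — I would handle this by indexing the decomposition by the pair of automaton states at each region boundary and taking the disjunction over all consistent tuples, which keeps everything within $\regmtl$ since $reg$ is finite and $\Bb$ has finitely many states. A second, minor point is the atomic case: when $P$ has modal subformulae rather than plain propositions, the lifted expressions $\re_r$ are over $\Ss$ exactly as $\regmtl$ syntax demands, and the induction hypothesis (this lemma applied to the sub-automata, or rather the fact that the recursion bottoms out at propositional formulae over $\Gamma$) supplies the needed subformulae. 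Finally, for the aperiodicity claim: if $\Aa(P)$ is aperiodic (counter-free), then each phase fragment is also aperiodic, hence each $\re_r$ is a star-free regular expression over $\Ss$, so every $\reg$ modality used is in fact a $\sfmtl$ modality and the resulting $\varphi$ lies in $\sfmtl$; this follows because passing to sub-automata and restricting the alphabet preserves counter-freeness, and star-free languages are closed under the finite Boolean combinations and concatenations used in the decomposition.
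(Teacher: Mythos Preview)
Your proposal is correct and follows essentially the same route as the paper: determinise (the paper passes to $Det\Aa(P)$, you pass to an NFA $\Bb$, which is equivalent for this purpose), decompose a good region word into its per-region blocks, thread the automaton state across region boundaries via a finite tuple of entry/exit states, express each block as a regular expression over $\Gamma$, realise each block by a $\regm_{I_r}$ conjunct, and take the disjunction over all consistent state tuples. The paper writes this as $\bigvee_{\mathsf{iseq}}\bigvee_{\mathsf{sseq}} \regm_{I_1}\re(q_0,q_1,I_1)\cdots\regm_{I_k}\re(q_{k-1},q_k,I_k)\,\mathsf{Last}$, which is exactly your $\bigvee \bigwedge_r \regm_{I_r}(\re_r)$ with the state-tuple indexing made explicit; your observation that the boundary bookkeeping is the only real obstacle, and that it is discharged by the finite disjunction over state tuples, matches the paper's construction precisely. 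The aperiodicity remark is also handled the same way in both: if the AFA is aperiodic then so are the per-region state-to-state languages, hence each $\re_r$ is star-free and the result lies in $\sfmtl$.
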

 \begin{proof}
Let $Det\Aa(P)$ be the deterministic automaton which is language equivalent to $\Aa(P)$.
For any pair of states $p,q$ of $Det\Aa(P)$ and an interval (region) $I_i$,
we can construct a regular expression $\re(p,q,I_i)$ denoting the language
$\{w \in (\Gamma\times \{I_i\})^+ ~\mid~ \delta(p,w)=q \}$. Here $\delta(p,w)$ is the
transition function of $Det\Aa(P)$  extended to words. To  construct $\re(p,q,I_i)$,
let $Det[\Aa(P)[p,q]]$ be the same DFA as $Det[\Aa(P)]$ except that the initial location is $p$
and set of final locations is $\{q\}$. Let $\Aa(I_i^+)$ denote the DFA accepting arbitrary words
in $(\Gamma\times\{I_i\})^+$. Let $Det[\Aa_i] = Det[\Aa(P)[p,q]] \cap \Aa(I_i^+)$ be the automaton
which starts at $p$, accepts on $q$, and has only the $\Gamma\times\{I_i\}$ edges. 
Let $\re(p,q,I_i)$ be the regular expression denoting the
language of $Det[\Aa_i]$.

Consider $Det[\Aa(P)]$.
Let $s_0$ be its initial location, and $F$ be the set of its final locations.
For any sequence of intervals $\mathsf{iseq}=I_{1} < I_{2} < \dots I_{k}$, where $I_{1}$ is the initial
region $0$ and any sequence of locations $\mathsf{sseq}=q_0,q_1,q_2 \ldots q_k$ such that $q_0=s_0$ and $q_k \in F$, 
 the regular expression $\re(q_0,q_1,I_1)\re(q_1,q_2,I_2) \cdots \re(q_{k-1},q_k, I_k)$ denotes subsets of
 good region words accepted by $Det\Aa(P)$ where the control stays within $I_i$ between locations $q_{i-1}$ and
 ${q_i}$.  Define the $\regmtl$ formula 
 $\phi(\mathsf{iseq},\mathsf{sseq})$ $=$ $\regm_{I_1}\re(q_0,q_1,I_1)$ $\regm_{I_2}\re(q_1,q_2,I_2)$ $\cdots$$ \regm_{I_k}\re(q_{k-1},q_k,I_k) ~\mathsf{Last}$,
 where formula $\mathsf{Last} = \reg_{[0, \infty)} \epsilon$ holds only for the last position of a word.
Then $L(\phi(\mathsf{iseq},\mathsf{sseq})){=} \{ \Tt_w~\mid~ w \in L(\re(q_0,q_1,I_1)\re(q_1,q_2,I_2) \cdots \re(q_{k-1},q_k, I_k))\}$, by construction.

Let $\varphi = \vee_{\mathsf{iseq}} \lor_{\mathsf{sseq}} ~\phi(\mathsf{iseq},\mathsf{sseq})$. Then clearly, $L(\varphi)=\{\Tt_w \mid w \in L(\Aa(P))\}$.
Hence, by Lemma \ref{untime1}, $L(\varphi)=L(P)$.
\noindent Note that if we start with a $\po$ 1-clock ATA $P$ in Lemma \ref{untime1}, then the AFA $\Aa(P)$
obtained is aperiodic.  In that case, the regular expressions above have  
a star-free equivalent, resulting in the $\regmtl$ formula being an $\sfmtl$ formula. 
\qedhere
\end{proof}

\noindent{\bf Expressive Completeness and Equivalence}. Let $F_i$ be a logic or automaton class  i.e. a
collection of formulae or automata describing/accepting finite timed words. For each $\phi \in F_i$ let $L(F_i)$ denote the language of $F_i$. We define $F_1 \subseteq_e F_2$ if for each $\phi \in F_1$ there exists $\psi \in F_2$ such
that $L(\phi)=L(\psi)$. Then, we say that $F_2$ is expressively complete for $F_1$.
We also say that $F_1$ and $F_2$ are expressively equivalent, denoted $F_1 \equiv_e F_2$, iff
$F_1 \subseteq_e F_2$ and $F_2 \subseteq_e F_1$.

\section{A Normal Form for 1-clock ATA}
\label{sec:normal}
In this section, we establish a normal form for 1-clock ATA, which plays 
a crucial role in the rest of the paper. 
Let $\Aa = (\Gamma, S, s_0, F, \delta)$ be a 1-clock ATA. $\Aa$ is said to be in 
normal form iff
\begin{itemize}
    \item The set of locations $S$ is partitioned into two sets $S_r$ and $S_{nr}$. The initial state 
$s_0 \in S_r$.
 \item The locations of $S$ are partitioned 
into $P_1,\ldots,P_k$ satisfying the following: Each $P_i$ has a 
unique header location $s_i^r \in S_r$. Also, $P_i - \{s_i^r\} 
\subseteq S_{nr}$. Moreover, for any transition of $\Aa$ of the 
form $\delta(s,a) = C_1 \vee C_2 \ldots C_k$ with $C_i = x \in I 
\wedge p_1 \wedge \ldots \wedge p_m \wedge x.q_i \wedge \ldots 
\wedge x.q_r$ we have (a) each $q_i\in S_r$, and (b) If $s \in 
P_i$ then each $p_j \in P_i-\{s_i^r\}$. 
    \footnote{In section \ref{bk2}, we 
describe a subclass of 1-clock ATA where the transitions 
are sometimes restricted to be in CNF form. The equivalent restriction 
then is that each free location in the transition should be in 
$P_i-\{s_i^r\}$ while each bound location should be in $S_r$.}
\end{itemize}
Each partition $P_j$  can be thought of as an 
\emph{island} of locations. Each island has a unique header 
(or reset) location $s_i^r$. All transitions from outside into $P_j$ 
occur only to this unique header location, and only with reset of 
clock $x$. Moreover, all
 non-reset transitions stays in the same island until a clock is 
reset, at which point, 
the control extends to the header location 
of same or another 
island (this behaviour can be seen on
each path of the run 
tree). 

\noindent \textbf{Establishing the Normal Form} The main result of this
section is that every 1-clock ATA $\Aa$ can be normalized, 
obtaining
a language equivalent 1-clock ATA $\norm(\Aa)$. The
key idea behind this is to duplicate locations of $\Aa$ such 
that
the conditions of normalization are satisfied. Let the set of 
locations of $\Aa$ be  $S = \{s_1,\ldots,s_n\}$. For each location 
$s_i, 1\le i\le n$, create $n+1$ copies, $s_i^r$ and $s_i^
{nr,j}, 1\le j\le n$. If $s_0$ is the initial location of $\Aa$, 
then intial location of $\norm(\Aa)$ is $s_0^r$. The partition 
$P_i$ in $\norm(\Aa)$ will consist of locations $s_i^r, s_h^
{nr,i}$ for $1 \le h \le n$ and entry into $P_i$ happens through 
$s_i^r$. The superscript $r$  on a location represents 
that all incoming transitions to it are on a clock reset, 
while $nr,j$ represents that all incoming transitions to that 
location are on non-reset and it belongs to island $P_j$. In a 
transition $\delta(s_i,a) = \varphi$, all occurrences of 
locations $x.s_j$are replaced as $x.s_j^r$(leading into $P_j$), 
while occurrence of free locations $s_h$ are replaced as $s_h^{nr,i}$. 
The final locations of $\norm(\Aa)$ are $s_i^r,s_i^{nr,j}$ for $ 1\le j \le n$ whenever $s_i$ is a final location in $\Aa$. 
Appendix  \ref{app:normal}  gives a formal proof for the following straightforward lemma.

\begin{lemma}
$L(\Aa) = L(\norm(\Aa))$. 
\end{lemma}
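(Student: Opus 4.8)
The plan is to establish the language equivalence $L(\Aa) = L(\norm(\Aa))$ by exhibiting a tight correspondence between runs of $\Aa$ and runs of $\norm(\Aa)$. Since 1-clock ATA are alternating, a run is really a tree (or forest) of threads, so the argument works thread-by-thread along each branch of the run tree. First I would set up the bookkeeping: the copying map sends each location $s_i$ of $\Aa$ to the set $\{s_i^r\} \cup \{s_i^{nr,j} : 1 \le j \le n\}$ of its copies in $\norm(\Aa)$, and conversely there is a projection $\pi$ collapsing every copy $s_i^r$ or $s_i^{nr,j}$ back to $s_i$. The key observation is that the superscript on a copy is fully determined by \emph{how that thread was most recently spawned}: a thread sitting in $s_i^r$ was created by a reset transition $x.s_i$ (or is the initial thread $s_0^r$), and a thread in $s_i^{nr,j}$ was created by a non-reset occurrence of $s_i$ inside a transition taken by a thread residing in island $P_j$. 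Because $\norm(\Aa)$ only renames the disjuncts of each $\delta(s_i,a)$ according to this rule — $x.s_j \mapsto x.s_j^r$ and free $s_h \mapsto s_h^{nr,i}$ when rewriting $\delta(s_i,a)$ — the set of available moves from any reachable configuration is in bijection (under $\pi$, and respecting clock values) with the moves available from its projection.

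The core of the proof is then two inductions on the length of the run. For the forward direction ($L(\Aa) \subseteq L(\norm(\Aa))$), given an accepting run of $\Aa$ on $\rho$, I would inductively build an accepting run of $\norm(\Aa)$ on $\rho$ whose configuration at every step projects via $\pi$ (with identical clock valuations) onto the corresponding configuration of $\Aa$; the inductive step just picks, in each $\delta(s_i^{\ast},a)$, the renamed copy of the disjunct chosen by $\Aa$, and observes that this is a legal transition of $\norm(\Aa)$ by construction. For the converse, the same correspondence run backwards: any run of $\norm(\Aa)$ projects under $\pi$ to a run of $\Aa$ on the same word with the same clock valuations, because $\pi$ maps every disjunct of $\norm(\Aa)$'s transition relation to the originating disjunct of $\Aa$. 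Acceptance is preserved in both directions since, by definition, $s_i^r$ and $s_i^{nr,j}$ are declared final in $\norm(\Aa)$ exactly when $s_i$ is final in $\Aa$, and the empty configuration stays empty under both the copying and the projection.

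The one point that needs a little care — and the step I would flag as the main (though minor) obstacle — is checking that the renamed transition relation actually lands inside the normal-form shape, i.e.\ that after the rewriting every bound location really is in $S_r$ and every free location appearing in a disjunct taken from a $P_i$-location really is in $P_i - \{s_i^r\}$, matching the footnote's CNF variant as well. This is where the "$n+1$ copies, one reset copy and one non-reset copy per island" design pays off: the reset copies $s_j^r$ collect all reset-targets (so condition (a) holds automatically), and writing $s_h^{nr,i}$ when rewriting a transition of $s_i$ forces every free location produced by an island-$P_i$ thread to carry superscript $nr,i$, hence to lie in $P_i$ and, being non-reset, to lie in $P_i - \{s_i^r\}$ (condition (b)). Once this structural check is done, the two projection arguments above go through routinely, and since these are the kind of straightforward verifications deferred to Appendix~\ref{app:normal}, I would keep the main-text proof to the statement of the $\pi$-correspondence plus the remark that acceptance and transition-legality are preserved in both directions.
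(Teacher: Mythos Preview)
Your proposal is correct and essentially identical to the paper's proof. The paper packages your projection $\pi$ as a \emph{homomorphism} of 1-clock ATA, proves once (Lemma~\ref{lem:hom}) that homomorphic 1-clock ATA accept the same language via exactly the configuration-wise correspondence $\Dd_i = \{(h(s),t) \mid (s,t)\in \Cc_i\}$ you describe, and then observes (Lemma~\ref{lem:normal}) that the collapse map $h(s_i^r)=h(s_i^{nr,j})=s_i$ is such a homomorphism; your structural check that $\norm(\Aa)$ really satisfies the normal-form conditions is likewise carried out there.
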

\noindent{\bf{Remark}} Due to above lemma, we assume without loss of generality, in the rest of the paper, that 1-ATA are in normal form.

 \begin{example}
The 1-clock ATA 
$\Bb=(\{a,b\}, \{s_0,s_1,s_2\}, s_0, \{s_1\},\delta)$ with transitions 
	 $\delta(s_0,b)=x.s_2,\delta(s_0,a)=(s_0 \wedge x.s_1)$, 
	$\delta(s_1,a)=(s_1 \wedge s_0)=\delta(s_1,b)$ and  
	$\delta(s_2,b)=x.s_0,\delta(s_2,a)=(s_2 \wedge x.s_1)$ is not in normal form. 
	 Following the normalization technique, 
	we obtain $\norm(\Bb)$ as follows. 
	$\norm(\Bb)$ has locations $S=\{s^r_i,s^{nr,i}_{j} \mid 0 \leq i, j \leq 2\}$
and final locations $\{s^r_1, s^{nr,0}_{1}, s^{nr,1}_1, s^{nr,2}_1\}$. The transitions $\delta'$ 
are as follows.
$\delta'(s^r_0,b)=x.s^r_2,\delta'(s^r_0,a)=(s^{nr,0}_0 \wedge x.s^r_1)$,
 $\delta'(s^{nr,i}_0,b)=x.s^r_2$,
$\delta'(s^{nr,i}_0,a)=(s^{nr,i}_0 \wedge x.s^r_1)$,$\delta'(s^r_1,a)=(s^{nr,1}_{1} \wedge s^{nr,1}_{0})=\delta(s^r_1,b)$, 
$\delta'(s^{nr,i}_{1},a)=s^{nr,i}_{1} \wedge s^{nr,i}_{0}=\delta(s^{nr,i}_{1},b)$,
$\delta'(s^r_2,b)=x.s^r_0,\delta'(s^r_2,a)=(s^{nr,2}_2 \wedge x.s^r_1)$,
 $\delta'(s^{nr,i}_2,b)=x.s^r_0,\delta'(s^{nr,i}_2,a)=(s^{nr,i}_2 \wedge x.s^r_1)$. (See Figure \ref{normalized-fig}).
It is easy to see that $\norm(\Bb)$ is in normal form: 
	$S_r=\{s_i^r \mid 0 \leq i \leq 2\}, S_{nr}=\{s_i^{nr,j}
	\mid 0 \leq i, j \leq 2\}$, and we have the disjoint sets  
	$P_0=\{s_0^r, s_0^{nr,0}\}, P_1=\{s_1^r,s_0^{nr,1},s_1^{nr,1}\}$
	and $P_2=\{s_2^r, s_2^{nr,2}\}$. See Figure \ref{normalized-fig}. 
\end{example}	
	\begin{figure}[h]
\centerline{
\fbox{\includegraphics[width=0.45\textwidth]{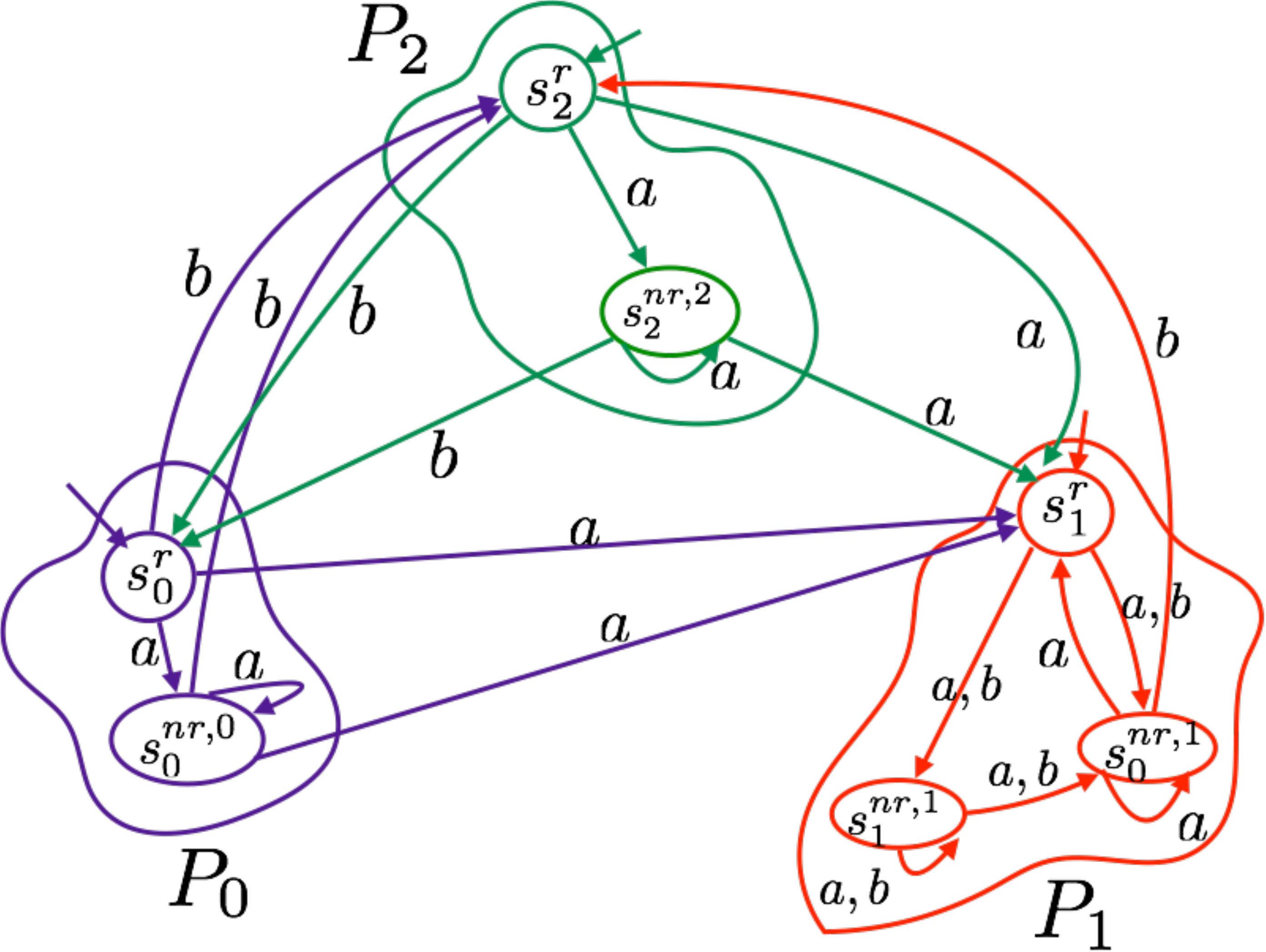}}}
\caption{$\norm(\Bb)$ for $\Bb$}
\label{normalized-fig}
\end{figure}

\section{1-ATA-$\lfr$ and Logics}
\label{bk1}
In this section, we show the first of our B\"uchi-Kamp like results connecting 
logics $\regmtl$, forward $\qkmso$ and a subclass of 1-clock ATA called 
 1-ATA with loop-free resets (1-ATA-$\lfr$).  We first introduce 1-ATA-$\lfr$.

A 1-clock ATA $\Aa$ (in normal form) is said to be a 1-ATA-$\lfr$  if it satisfies the following:
There is a partial order $(S_r,\preceq)$ on the reset states (equivalently, islands $P_i$). Moroeover, for any $p \in P_i$ and location $q$, 
if $x.q$ occurs in $\delta(p,a)$ for any $a$ (giving that $q=s^r_j$) then $s^r_j \prec s^r_i$. Thus,
islands (which are only connected by reset transitions) form a DAG, and
every reset transition goes to a lower level island (see Figure \ref{fig:progr}) where 
this phenomenon is called progressive island hopping.  Semantically, this means that on any branch of
run tree, a reset transition occcurs at most once.
\begin{example}
 The 1-clock ATA with locations $s,p,q$ 
and transitions  
$\delta(s,\alpha){=}(x.p \wedge x \leq 1) \vee (q \wedge x{=}2)$,
$\delta(p,\alpha){=}x.q \wedge p$ and $\delta(q,a){=}s\wedge (0{<} x {<}1)$ 
is not 1-ATA-$\lfr$, since  
$q$ is bound in $\delta(p,\alpha)$ and starting from $q$, we can reach $x.p$ via $s$.
\end{example}
\begin{figure*}[h]
	\includegraphics[scale=0.4]{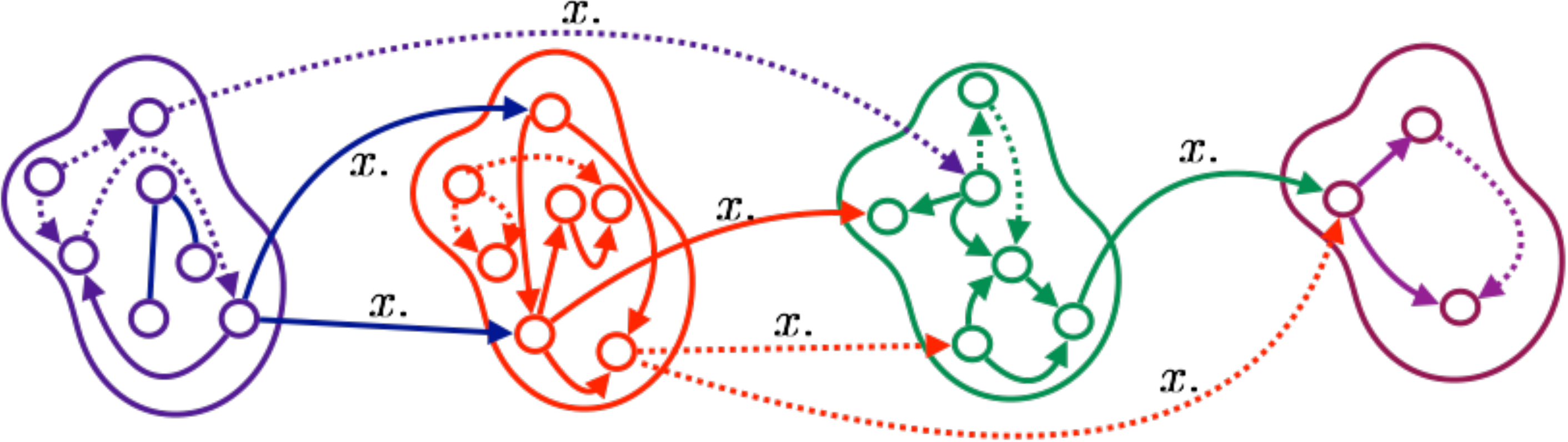}
\caption{A pictorial description of progressive island hopping. The colored islands
are all disjoint.  
On non-reset transitions, control stays in the same island; on resets, it may expand to another island. 
From this finite control, you cannot go back to the island 
where you started from. This provides a partial order between islands due to resets (the name $\lfr$ comes from here) and is referred to as 
 progressive island hopping. For representation purposes, solid arrows indicate conjunctive transitions and dotted ones denote disjunctive transitions.}
\label{fig:progr}
 \end{figure*}

\subsection{B\"uchi Theorem for 1-ATA-$\lfr$}
In this section, we show the equivalence of 1-ATA-$\lfr$  and 
$\regmtl$.  
\begin{theorem}
1-ATA-$\lfr$ are expressively equivalent to $\regmtl$. 
\label{aut-tl-1}
\end{theorem}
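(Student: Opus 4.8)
The plan is to establish $\regmtl \equiv_e \text{1-ATA-}\lfr$ by showing the two inclusions separately, in each case exploiting the island structure of the normal form together with the reset-free untiming lemmas of Section~\ref{sec:tools}.

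\textbf{Direction $\regmtl \subseteq_e \text{1-ATA-}\lfr$.} I would proceed by induction on the modal depth of the $\regmtl$ formula $\varphi$. The base case (modal depth $0$, i.e. propositional formulae over $\Sigma$) is immediate: a single-location ATA testing the propositional condition on the first letter. For the inductive step, given $\varphi$ built from subformulae $\psi_1,\dots,\psi_m$ of strictly smaller modal depth, assume inductively that each $\psi_i$ is captured by a 1-ATA-$\lfr$ $\Aa_i$. The key observation is that the outermost modality $\regm_I \re(\Ss)$ or $\freg_{I,\re(\Ss)}\psi$ is a \emph{bounded} (or reset-guarded) condition: on reading the anchor position one resets the clock, spawns a fresh ``measurement'' thread that walks forward checking membership in $\re$ while the clock stays in $I$, and in parallel spawns the copies of $\Aa_1,\dots,\Aa_m$ needed to evaluate the marked subformulae at the relevant positions. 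One must be careful to realize $[\mathsf{Seg}(\Ss,i,j)]^{\singl}$ or $[\mathsf{TSeg}(\Ss,I,i)]^{\singl}$ membership via an \emph{alternating} finite automaton over the marking alphabet $2^{\Ss}$, and then feed each marking obligation ``$\psi_t$ holds here'' to a reset-copy of the header of $\Aa_t$; the existential choice inherent in the $\singl$ semantics becomes ordinary disjunctive branching of the ATA, and the universal ``all positions in $I$'' part of $\reg$ becomes conjunctive branching. Because the measurement thread and all the spawned $\Aa_t$-threads are entered only via a clock reset, and because each $\Aa_t$ is itself $\lfr$, composing them respects the partial order on islands: the new header sits strictly above every island of every $\Aa_t$, so no cycle through a reset is created. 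Normalizing the result (Section~\ref{sec:normal}) keeps it in normal form.

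\textbf{Direction $\text{1-ATA-}\lfr \subseteq_e \regmtl$.} Here I would induct on the height of the island DAG $(S_r,\preceq)$. Fix a 1-ATA-$\lfr$ $\Aa$ in normal form. Consider a \emph{leaf} island $P_i$ (one with no outgoing reset transitions): the sub-automaton living on $P_i$, started at its header $s_i^r$ with clock reset, is a \emph{reset-free} 1-clock ATA, so by Lemma~\ref{untime1} and Lemma~\ref{untime2} its language is exactly $L(\chi_i)$ for a $\regmtl$ formula $\chi_i$. For a non-leaf island, the sub-automaton on $P_i$ is reset-free \emph{except} for transitions of the form $x.s_j^r$ with $s_j^r \prec s_i^r$; by the induction hypothesis each such lower header $s_j^r$ is already equivalent to a $\regmtl$ formula $\chi_j$. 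Treat each occurrence of $x.s_j^r$ in the $P_i$-transitions as a fresh proposition, apply Lemmas~\ref{untime1}--\ref{untime2} to the now reset-free ATA over the enlarged alphabet to get a $\regmtl$ formula, and then substitute: a reset to $s_j^r$ at position $\ell$ is the assertion ``$\chi_j$ holds at $\ell$ with the clock restarted'', which is precisely what the $\regm$/$\freg$ modalities let us say when $\ell$ lies in the interval window being scanned, since those modalities mark arbitrary subformulae at scanned positions. Assembling these substitutions bottom-up along the DAG yields a single $\regmtl$ formula equivalent to $\Aa$ started at $s_0^r$.

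\textbf{Main obstacle.} The delicate point is the bidirectional bookkeeping between the ATA's clock resets and the interval \emph{windows} of the $\regm$/$\freg$ modalities. In one direction, the $\singl$-semantics of $\regmtl$ allows marking \emph{any} subset of a position with subformulae, and one must check this matches the disjunctive spawning of subformula-threads in the ATA and that it does not over- or under-count; in the other direction, one must verify that when a reset transition to a lower island fires \emph{inside} an interval window, the induced $\regmtl$ formula $\chi_j$ is being evaluated at the right anchor (the reset position) with respect to the correct relative time origin, and that when the reset would fire \emph{outside} every relevant window the behaviour still matches. Handling the interaction of several simultaneously-active measurement threads with overlapping windows — i.e. showing the AFA-to-$\regmtl$ translation of Lemma~\ref{untime2} composes correctly under the substitution — is where the real work lies; everything else is routine structural induction.
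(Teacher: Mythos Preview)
Your proposal is correct and follows essentially the same route as the paper: induction on the island partial order using Lemmas~\ref{untime1}--\ref{untime2} and witness substitution for the automaton-to-logic direction, and induction on modal depth with reset-spawned copies of the subformula automata for the converse.

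One remark on your ``main obstacle'': the paper treats this point as almost immediate, and rightly so. The soundness of replacing $x.s_j^r$ by a fresh proposition $w_j$ and then substituting $\chi_j$ for $w_j$ follows directly from the fact that both sides are \emph{purely future} with a fresh time origin: starting the sub-automaton $\Aa[s_j^r]$ with clock reset at position $\ell$ accepts exactly when the suffix $(a_\ell,\tau_\ell)(a_{\ell+1},\tau_{\ell+1})\cdots$ is in $L(\Aa[s_j^r])$, and $\rho,\ell \models \chi_j$ depends only on that same suffix. So the witness-variable trick is sound by compositionality of the semantics, not by any delicate window-overlap analysis; there is no interaction between ``simultaneously active measurement threads'' because each thread, once spawned with a reset, is oblivious to where in the parent's interval it was spawned. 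In the other direction, the paper makes explicit a detail you glossed over: when eliminating the witness propositions $Z_k$ from the intermediate automaton $\Cc$ over $\Gamma \times 2^{\Zz}$, one must spawn \emph{both} $\Aa_{Z_k}$ (when $Z_k$ is guessed true) and the complement $\Aa_{\neg Z_k}$ (when guessed false), so that the guessed marking is verified exactly; closure of 1-ATA-$\lfr$ under complementation is used here, and is what makes the construction go through for arbitrary boolean combinations and for subformulae appearing negatively.
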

When restricting to logic $\sfmtl$, we obtain expressive equivalence with  $\po$-1-ATA.

\subsection{ 1-ATA-$\lfr$ $\subseteq_e$ $\regmtl$}
\label{sec:aut-tl-1}
\proof
Let $\Aa$ be a 1-ATA-$\lfr$ in normal form. For each location $s^r_i \in S_r$ (which is the header
of partition $P_i$) let
$\Aa[s^r_i]$ denote the same automaton as $\Aa$ except that the initial location is changed to $s^r_i$.
We can also delete all islands higher than $P_i$ as their locations are not reachable. 
For each such automaton, we construct a language equivalent
$\regmtl$ formula $mtl(\Aa[s^r_i])$. Note that $(S_r,\preceq)$ is a partial order. The construction
and proof of equivalence are by complete induction on the level of location  $s^r_i$ in the partial order.

Let $s^r_i$ be the header of island $P_i$. All $x.s^r_j$ occuring in 
any transition of $\Aa[s^r_i]$ are of lower level in the partial order $(S_r,\preceq)$. Hence, by induction hypothesis, there is a $\regmtl$ formula $\psi_j=mtl(\Aa[s^r_j])$ language equivalent to $\Aa[s^r_j]$.

Let $w_j$ be a fresh witness variable for each $x.s^r_j$ above, which also corresponds to $\regmtl$
formula $\psi_j$. Let the set of such witness variables be
$\{w_1, \ldots, w_k\}$.   We  construct a modified automaton
$\Aa^{wt}[s^r_i]$ with transition function $\delta'$ and set of locations $P_i$, as follows.
Its alphabet is $\Gamma \times \{0,1\}^k$ with $j$th component giving truth value of $w_j$.
Let $\delta'(s,a,w_1, \ldots, w_k) = \delta(s,a)[w_j/x.s^r_j]$, i.e. in the transition formula
each occurrence of $x.s^r_j$ is replaced by truth value of $w_j$ for $1 \leq j \leq k$.
Note that $\Aa^{wt}[s^r_i]$ is a reset-free 1-ATA. By Lemmas \ref{untime1} and
\ref{untime2}, we get a language equivalent $\regmtl$ formula $\phi^{wt}$ over the variables
$\Sigma \cup \{w_1, \ldots, w_k\}$. Now we substitute each $w_j$ by $\psi_j$ (and hence $\neg w_j$
by $\neg \psi_j$) in $\phi^{wt}$ to obtain the required formula $mtl(\Aa[s^r_i])$. It is clear from
the substitution that $L(\Aa[s^r_i]) = L(mtl(\Aa[s^r_i]))$. \hfill \qed

An example illustrating this construction is in  Appendix \ref{app:eg}. 
 Notice that if $\Aa$ (hence $\norm(\Aa)$) was $\po$-1-ATA, 
then each island $P_i$ is  a  $\po$-1-ATA, and $\Aa(P_i)$ will be an aperiodic automaton;  
hence,  using Lemma \ref{untime2}, we obtain an equivalent $\sfmtl$ formula.
\subsection{$\regmtl$ $\subseteq_e$ 1-ATA-$\lfr$}
\label{thm:logic-aut-1}

\begin{proof}
Consider a formula $\psi_1 = \regm_{I}(\re_0)$ with $I=[l,u)$. The case of other intervals are handled similarly. 
$\psi_1$ has modal depth 1 and has a single modality.   
 As the formula is of modal depth 1, $\re_0$ is an atomic  regular expression over alphabet $\Sigma$.   
Let $D = (\Gamma, Q, q_{0}, Q_f,\delta')$ be a  DFA such that $L(D)=L(\re_0)$, with $\Gamma=2^{\Sigma} \backslash \emptyset$.   
From $D$, we construct the 1-clock ATA  $\Aa{=}(\Gamma,Q \cup \{q_{init}, q_{\mathsf{time check}},q_f\}, q_{init}, \{q_f\}, \delta)$ 
where $q_{init}, q_{\mathsf{time check}},q_f$ are disjoint from $Q$. The transitions 
$\delta$ are as follows. Assume $l >0$. 
\begin{itemize}
	\item $\delta (q_{init},a) = x.q_{\mathsf{time check}}, a \in \Gamma$, 
	\item $\delta (q_{\mathsf{time check}},a) =  [(x \geq  l \wedge \delta'(q_0,a) \vee (q_{\mathsf{time check}})] \vee [x>u \wedge q_f]$
	where the latter disjunct is added only when  $q_0 \in Q_f$,
	\item $\delta (q,a) = (x \in [l,u)) \wedge \delta'(q,a))$, for all $q \in Q \setminus Q_f$, 
	\item $\delta (q,a) = (x\in [l,u) \wedge \delta'(q,a)) \vee (x>u \wedge q_f) $, for all $q \in Q_f$, $\delta(q_f,a) = q_f$. 
\end{itemize}
It is easy to see that $\Aa$ has the loop-free reset condition since $q_f$ is the only location entered 
on resets, and control stays in $q_f$ once it enters $q_f$.  The correctness of $\Aa$ is easy to establish : 
the location $q_{\mathsf{time check}}$ is entered on the first symbol, resetting the clock; 
control stays in $q_{\mathsf{time check}}$ as long as $x<l$, and when 
 $x \geq l$, the DFA is started. As long as $x \in [l,u)$, we simulate the DFA. If $x>u$ and we are in a  final location of the DFA, 
 the control switches to the final location $q_f$ of $\Aa$. 

If $q_0$ is itself a final location of the DFA, then from $q_{\mathsf{time check}}$, we enter 
$q_f$ when $x>u$. It is clear that $\Aa$ indeed checks that $\re_0$ is true in the interval $[l,u)$.
If $l=0$, then the interval on which $\re_0$ should hold good is $[0,u)$. In this case, 
if $q_0$ is non-final, we have the transition $\delta (q_{init},a) = x.\delta'(q_0,a), a \in \Gamma$
(since our timed words start at time stamp 0, the first stmbol is read at time 0, so $x.\delta'(q_0,a)$ preserves the value 
of $x$ after the transition $\delta'(q_0,a)$).  
The location $q_{\mathsf{time check}}$ is not used then.

The case when $\psi_1$ has modal depth 1 but has more than one $\regm$ modality is dealt as follows. 
Firstly, if $\psi_1=\neg \regm_{I}(\re_0)$, then the result follows since  1-ATA-$\lfr$ are closed  under complementation (the fact that 
the resets are loop-free on a run does not change when one complements). For the case when we have a conjunction 
$\psi_1  \wedge \psi_2$ of formulae, having 1-ATA-$\lfr$  $\Aa_1=(\Gamma, Q_1, q_{1}, F_1, \delta_1)$ and 
 $\Aa_2=(\Gamma, Q_2, q_2, F_2, \delta_2)$ such that $L(\Aa_1)=L(\psi_1)$ and 
 $L(\Aa_2)=L(\psi_2)$, we construct  $\Aa=(\Gamma, Q_1 \cup Q_2 \cup \{q_{init}\}, q_{init}, F, \delta)$ 
 such that $\delta(q_{init}, a)=x.\delta_1(q_1,a) \wedge x. \delta_2(q_2,a)$. Clearly, 
 $\Aa$ is a 1-ATA-$\lfr$  since $\Aa_1, \Aa_2$ are. It is easy to see that $L(\Aa)=L(\Aa_1) \cap L(\Aa_2)$.  
 The case when $\psi=\psi_1 \vee \psi_2$ follows from the fact that we handle negation and conjunction.

\paragraph*{Lifting to formulae of higher modal depth}
\label{app:logi-aut-1}
Let us assume the result for formulae of modal depth $\leq k$. Consider a formula of modal depth 
$k+1$ of the form  $\psi_{k+1} = \regm_{I} (\re_k)$, where $\re_k$ is a regular expression over formulae of modal depth $\leq k$. Let  $\psi_k$ be  a formula of modal depth $\leq k$.  
For each such occurrence of a smaller depth formula $\psi_i$, let us allocate a witness variable $Z_i$. 
Let $\Zz=\{Z_1, \dots, Z_k\}$ be the set of all witness variables. 
Given a subset $S \subseteq \Sigma$, let $\Gamma_S \in S \times 2^{\Zz}$.
Any occurrence of an element $S$ in $\re_k$ and  $\psi_k$ are replaced with 
$\Gamma_S$.
 At the end of this replacement, $\re_k$ is a regular expression 
over $\Gamma \times 2^{\Zz}$ and $\psi_k$ is a propositional logic formula over 
$\Gamma \times 2^{\Zz}$. 

Since each $Z_i \in \Zz$ is a witness for a smaller depth formula $\psi_i$, by inductive hypothesis, there is 
a 1-ATA-$\lfr$  $\Aa_{Z_i}$ that is equivalent to $\psi_i$. 
Let $\delta_{Z_i}$ be the  transition function of $\Aa_{Z_i}$ and  let $init_{Z_i}$ 
be the initial location  of $\Aa_{Z_i}$.  
We also construct the complement of each such automata $\Aa_{\neg Z_i}$, which has as its transition function 
$\delta_{\neg Z_i}$ and $init_{\neg Z_i}$ as its initial location. 
The base case  gives us 
a  1-ATA-$\lfr$ (call it $\Cc$)  
over the alphabet  $\Gamma  \times 2^{\Zz}$.  
 Let $\delta_{\Cc}$ denote the transition function of $\Cc$ and let $S_{\Cc}$ be the set of locations 
of $\Cc$. Consider a transition $\delta_{\Cc}(s,\alpha)$ in $\Cc$. 
For $\alpha \in S \times 2^{\Zz}$ the  transition $\delta_{\Cc}(s, \alpha)$ is replaced with 
$\delta'(s, S)$ ${=}$ $\bigvee_{ T \subseteq  \Zz}\delta_{\Cc}(s, \alpha) \wedge$ $
\bigwedge_{\{k \mid Z_k \in T\}} [x.init_{Z_k}]$ $\wedge  \bigwedge_{\{k \mid Z_k \notin T\}} [x.init_{\neg Z_k}]$.
Note that since $\Cc$ as well as $\Aa_{Z_i}$ and $\Aa_{\neg Z_i}$ are 
 1-ATA-$\lfr$ , $\delta'$ also 
respects the $\lfr$ condition. It is easy to see 
that the $\lfr$ condition is respected, since, once we enter the automata $\Aa_{Z_i}$ or $\Aa_{\neg Z_i}$ on reset, we will not return 
to $\Cc$, thereby preserving the linearity of resets.   
Call this 1-clock ATA $\Bb'$.  
Clearly, if $\alpha \in S \times T$ is read in $\Cc$, such that $T{=}\{Z_i, Z_{i_1}, \dots, Z_{i_h}\}$, 
then acceptance in $\Bb'$ is possible iff $\Cc, \Aa_{Z_{i}}$, $\Aa_{Z_{i_1}},$ $\dots, \Aa_{Z_{i_h}}$ 
and $\Aa_{\neg Z_j}$ for $j \neq i, i_1, \dots, i_h$ all reach accepting locations on reading the remaining suffix. 
An example illustrating this can be seen in Appendix \ref{app:logic-aut-eg}.

Note that if $\psi=\reg_{I, \re_0} \psi_0$ is $\sfmtl$, then $\re_0$ is  star-free, 
and all formulae in $\psi_0$ are $\sfmtl$. For the base case, the DFA $D$ obtained will be aperiodic, 
and the 1-clock ATA constructed will be $\po$. The inductive hypothesis guarantees this 
property, and for depth $k+1$, we obtain $\po$ 1-clock ATA since each of
$\delta_{\Cc}, \delta_{Z_k}, \delta_{\neg Z_k}$ satisfy the $\po$ condition, 
and once control shifts to some $\Aa_{Z_i}$, it does not return back to $\Cc$, preserving the $\po$ condition. 
 \end{proof}

\subsection{Kamp Theorem for $\regmtl$ and \emph{forward} $\qkmso$}
In this section, we establish the equivalence between $\regmtl$ 
and \emph{forward} $\qkmso$ giving our first Kamp-like theorem. 
\begin{theorem}
\label{thm:bk1-k}
\emph{forward} $\qkmso$ is expressively equivalent to $\regmtl$. 
\end{theorem}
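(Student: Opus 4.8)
The plan is to prove the two inclusions $\regmtl \subseteq_e$ \emph{forward} $\qkmso$ and \emph{forward} $\qkmso \subseteq_e \regmtl$ separately, proceeding by induction on modal/metric depth in both directions, and exploiting the already-established equivalence between $\regmtl$ and 1-ATA-$\lfr$ (Theorem~\ref{aut-tl-1}) wherever it simplifies the argument.

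\textbf{Direction $\regmtl \subseteq_e$ \emph{forward} $\qkmso$.} I would argue by induction on the modal depth of the $\regmtl$ formula. The only non-trivial step is translating a single modality $\regm_I\,\re(\Ss)$ (respectively $\varphi_1\,\uregm_{I,\re(\Ss)}\,\varphi_2$) when the subformulae in $\Ss$ have already been translated to $MSO^{t_0}$-definable time constraint formulae with one free variable. Fix an anchor $t_0$. The set of positions $k$ with $\tau_k-\tau_{t_0}\in I$ is ``metrically bounded'' from $t_0$, and a rational/regular expression over $2^{\Ss}$ holding on that marked segment is exactly an $MSO[<]$-definable property of the linear order of those positions; since MSO over words captures regular languages (Büchi), we can write an $MSO^{t_0}$ formula $\phi(\downarrow t_0, t_1)$ that says ``$t_1$ is the right endpoint and the word read from $t_0{+}1$ to $t_1{-}1$, marked by the (inductively obtained) predicates for members of $\Ss$, lies in $L(\re)$''. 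The key point is that \emph{only two metric quantifiers} are needed to delimit the relevant region: one $\texists$ (or $\tforall$) pinning the two ends of the interval $I$ relative to $t_0$. This gives a $\qtwomso$-style block, certainly within the $k$-quantifier budget of $\qkmso$; wrapping it as an atomic predicate and closing under the $MSO^{t_0}$ constructs handles boolean combinations and nesting.

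\textbf{Direction \emph{forward} $\qkmso \subseteq_e \regmtl$.} Here I would again induct on metric depth, the crux being a single time-constraint block $\Qq_1t_1\cdots\Qq_jt_j\,\phi(\downarrow t_0,t_1,\dots,t_j)$ with $j<k$, where $\phi$ is an $MSO^{t_0}$ formula all of whose time-constraint subformulae have been replaced by $\regmtl$-definable predicates (new propositions). The natural route is via automata: the metric quantifier block partitions $dom(\rho)$, relative to $t_0$, into at most $j{+}1$ regions determined by the intervals $I_1,\dots,I_j$ (plus the ``everything else'' region), and within each region $\phi$ asks a plain MSO (hence regular) question about the marked word. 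So the block is captured by a product of region-restricted regular conditions with the existential/universal choices of the witnesses $t_1,\dots,t_j$ handled by the $\regm_I$ modality (for $\texists$) and its complement/dual (for $\tforall$). Equivalently, one builds a reset-free 1-ATA reading the region-annotated word, and applies Lemma~\ref{untime1} and Lemma~\ref{untime2} to extract a $\regmtl$ formula; relativization to the future of $t_0$ exactly matches the forward semantics of $\regmtl$'s modalities. Nesting is absorbed by substituting the inductively obtained $\regmtl$ formulae back in for the witness propositions.

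\textbf{Main obstacle.} The real work is the bookkeeping that keeps the number of metric quantifiers bounded when going from $\regmtl$ to $\qkmso$ — showing that no modality ever requires more than a fixed constant of metric quantifiers, so that the construction stays inside $\qkmso$ (indeed this is presumably where the companion ``four-variable property'', $\qkmso\equiv\qfourmso$, alluded to in the introduction, gets used, and I would invoke it to pin down the exact $k$). Dually, in the $\qkmso\to\regmtl$ direction the subtlety is that a single metric block with $j$ heterogeneous intervals $I_1,\dots,I_j$ may force the future timeline into up to $j{+}1$ incomparable region-segments, and one must verify that a bounded nest of $\regm_I$ modalities (together with the untiming lemmas applied region-by-region, as in the proof of Lemma~\ref{untime2}) suffices to express the resulting regular constraint on each segment while respecting the forward orientation. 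Boolean closure and the treatment of the $\tforall$ metric quantifier (obtained by complementation, as 1-ATA-$\lfr$ and hence $\regmtl$ are closed under complement) are routine once these two points are settled.
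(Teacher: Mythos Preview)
Your overall strategy---induction on modal/metric depth with witness propositions for inner subformulae---is exactly the paper's, and your $\qkmso\to\regmtl$ direction via region abstraction followed by Lemmas~\ref{untime1} and~\ref{untime2} is precisely what the paper does (it replaces each metric quantifier $\texists t_i\in t_0+I_i$ by an ordinary quantifier conjoined with fresh region predicates $R_j(t)$, adds a conjunct enforcing the region order, and then treats the result as plain MSO over $2^\Sigma\times reg$, to which Lemma~\ref{untime2} applies).

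There is, however, a concrete gap in your $\regmtl\to\qkmso$ direction. The MSO encoding you describe (``$t_1$ is the right endpoint and the word read from $t_0{+}1$ to $t_1{-}1$ lies in $L(\re)$'') is the encoding of $\freg$/$\ureg$, not of $\regm_I$. For $\regm_I\,\re$ the segment is \emph{all} positions $k$ with $\tau_k-\tau_{t_0}\in I$, and this is not left-delimited by $t_0$. You need two existential metric quantifiers for $t_{\mathit{first}},t_{\mathit{last}}\in t_0+I$ \emph{and} a universal metric quantifier $\tforall t'\in t_0+I$ asserting that every position in the interval lies between them; without that third quantifier you cannot say, inside $MSO^{t_0}$, that nothing in $t_0+I$ falls outside $[t_{\mathit{first}},t_{\mathit{last}}]$, since membership in $t_0+I$ is itself a metric predicate. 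The paper's formula is
\[
\texists t_{\mathit{first}}{\in} t_0{+}I.\ \texists t_{\mathit{last}}{\in} t_0{+}I.\ \tforall t'{\in} t_0{+}I.\ \bigl[(t'{=}t_{\mathit{first}}\lor t'{=}t_{\mathit{last}}\lor t_{\mathit{first}}{<}t'{<}t_{\mathit{last}})\land \zeta(t_{\mathit{first}},t_{\mathit{last}})\bigr],
\]
which uses three metric quantifiers plus the free $t_0$ and hence lands in $\qfourmso$, not in anything ``$\qtwomso$-style''. So your claim that two metric quantifiers suffice is wrong for $\regm_I$; it is correct only for $\freg$, and that distinction is exactly the content of Theorem~\ref{bk2-k} versus the present theorem. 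Your hedge about invoking the four-variable property is in fact the fix: once you use the three-quantifier block above, the translation sits inside $\qfourmso$, and that \emph{is} the four-variable property the introduction announces.
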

\oomit{
\begin{lemma}
\emph{forward} $\qkmso \subseteq \regmtl$.
\label{lem:base1}
\end{lemma}
\begin{lemma}
$\regmtl \subseteq$ \emph{forward} $\qkmso$. 
\label{lem:temp-class-1}
 \end{lemma}
}
A careful reading of the proof below also shows that when restricted to $\qkmlo$, we obtain expressive equivalence with logic $\sfmtl$ which is
$\regmtl$ restricted to star-free regular expressions. 
\subsection{\emph{forward} $\qkmso \subseteq_e \regmtl$}
\label{lem:base1}
Proof is by Induction on the metric depth of the formula. For the base case,
consider a formula   $\psi(t_0)$ $=$  $\mathcal{Q}_1 t_1\dots \mathcal{Q}_{k-1}t_{k-1} \varphi(\downarrow t_0, t_1, \dots, t_{k-1})$ of metric depth one.
Let $c_{max}$ be the maximal constant used in the metric quantifiers $\mathcal{Q}_i$. 
Let $R_j(t)$ for $j$ in $reg{=}\{0, (0,1)$,$1,\dots,$  $ c_{max}$, $(c_{max}, \infty)\}$ be fresh monadic predicates. We modify $\psi(t_0)$ to obtain an
untimed MSO formula $\psi_{rg}(t_0)$ over the alphabet $2^{\Sigma} \times \{0,1\}^{|reg|} \times \{0,1\}$ as follows.
Define $CON(I_i,t_i)$ ${=}$  $\lor \{ R_j(t_i) \mid j {\subseteq} I_i\}$.
We replace every quantifier $\texists t_i {\in} t_0+I_i ~\phi$ by $\exists t_i (t_0 {\leq} t_i) {\land} CON(I_i,t_i) {\land} \phi$. 
Every quantifier $\forall t_i {\in} t_0+I_i ~\phi$ is replaced by $\forall t_i .(t_0 {\leq} t_i {\land} CON(I_i,t_i) {\rightarrow} \phi)$.
To the resulting MSO formula we add a conjunct $\mathsf{WELLREGION}$ that states that (a) exactly one $R_j(t)$ holds at any $t$, and (b) 
$\forall t,t'.~ [t {< }t'$ $ {\land} R_j(t) {\land} R_{j'}(t')] ~{\rightarrow}~ j \leq j'$ (asserting region order).  Note that these are natural properties of region abstraction 
of time. This gives us the formula $\psi_{rg}(t_0)$. It has predicates $R_j(t)$ for $j \in reg$ and free variable $t_0$. 
Being MSO formula, we can construct a DFA  $\Aa(\psi_{rg}(0))$ for it over the alphabet  $2^{\Sigma} \times \{0,1\}^{|reg|}$. Note that we have
substituted $0$ for $t_0$.
This is isomorphic to automaton over the alphabet $2^{\Sigma} \times reg$. From the construction, it is clear that $\rho \models \psi(0)$ iff 
$reg(\rho) \in L(\Aa(\psi_{rg}(0)))$.
By Lemma \ref{untime2}, we then obtain an equivalent   $\regmtl$ formula $\zeta$. It is easy to see that $L(\psi(0)) = L(\zeta)$.
Because $\psi(0)$ and $\zeta$ are purely future time formulae, this also gives us that $\rho,i \models \psi(t_0)$ iff $\rho,i \models \zeta$.

For the induction step, consider a metric depth $n+1$ formula $\psi(t_0)$. We can replace every time constraint sub-formula $\psi_i(t_k)$ occurring in it
by a witness monadic predicate $w_i(t_k)$. This gives a metric depth 1 formula  and we can obtain a $\regmtl$ formula, say $\zeta$, over 
variables $\Sigma \cup \{w_i\}$ exactly as in the base step. Notice that each $\psi_i(t_k)$ was a formula of modal depth $n$ or less. Hence by induction hypothesis we have an equivalent $\regmtl$ formula $\zeta_i$. Substituting $\zeta_i$ for $w_i$ in $\zeta$ gives us a formula language equivalent
to $\psi(t_0)$.
\oomit{
This has the form  $\mathcal{Q}_1 \dots \mathcal{Q}_{k-1} \varphi(\downarrow t_0, t_1, \dots, t_{k-1})$ where all bound first order variables $t'$ in 
	$\varphi(\downarrow t_0, t_1, \dots, t_{k-1})$ are such that $t' > t_0$. 
Let $u_{max}$ be the maximal constant used in the metric quantifiers
 $\mathcal{Q}_i$. We translate $\psi(t_0)$ into an MSO formula over an extended signature consisting of predicates 
 $R_j(t,t_0)$ where $j$ $\in$ $reg=\{0, (0,1)$,$1,\dots, u_{max}$, $(u_{max}, \infty)\}$. 
 $j$ is a region which keeps track of the difference between  
 the values of $t$ and $t_0$. $t_0$ is always assigned the first position, and 
 $t$ must be in a certain region $reg$ with respect to $t_0$. 
 The translation 
 from $\psi(t_0)$ to an MSO formula 
 over the extended signature as above is
 as follows. Assume  $\mathcal{Q}_j$ is 
 $\exists t_j[t_0+l_j {\sim} t_j {\sim'} t_0+u_j]$. Then 
each $\mathcal{Q}_j$ in $\psi(t_0)$ 
is replaced with 
 $\exists t_j$, and each occurrence of $t_j$ in $\varphi({\downarrow} t_0, t_1, \dots, t_{k-1})$ 
 in a predicate $P {\in} \{t_i{=}t_j, t_i{<}t_j, Q_a(t_j), T_i(t_j)\}$  is replaced by 
 $P \wedge [\bigvee _{reg} R_{reg}(t_j,t_0)]$ where the disjunction  
 is over regions that lie between $l_j$ and $u_j$.   
 For example, $\exists t_1[t_0+1 {<} t_1 {<} t_0+3]Q_a(t_1)$ is rewritten as 
  $\exists t_1 (Q_a(t_1)  \wedge [R_{(1,2)}(t_0,t_1) \vee R_2(t_0, t_1) \vee R_{(2,3)}(t_0,t_1)])$ 
  (see Figure \ref{fig:pred-r} in Appendix \ref{app:lem:base1}
 for the DFA equivalent to $R_j(t,t_0)$). 
  Likewise, if  $\mathcal{Q}_j$ is 
 $\forall t_j[t_0+l_j {\sim} t_j {\sim'} t_0+u_j]$, then 
each $\mathcal{Q}_j$ in $\psi(t_0)$ 
is replaced with 
 $\forall t_j$, while each occurrence of $t_j$ 
 in $\varphi(\downarrow t_0, t_1, \dots, t_{k-1})$ in a predicate $P$ 
  is replaced as $[\bigvee _{reg} R_{reg}(t_j,t_0) \rightarrow P]$.
  This replacement translates each metric quantifier block into a non-metric quantifier block, and
 the resultant is an MSO formula $\psi$ over the extended signature. Working on an extended alphabet 
of $k+2$-tuples $2^{\Sigma} \times reg \times \{0,1\}^{k}$, where the $k$ bits are for the free variables $t_0, t_1, \dots, t_{k-1}$, we can construct a DFA $\Dd$ over $2^{\Sigma} \times reg$ equivalent to the formula  $\psi(t_0)$ after projecting out the last $k$ bits. 
Thanks to lemma \ref{untime2}, we obtain a $\regmtl$ formula 
equivalent to $\Dd$.  
The case of higher depth formula follows the trick of using witness variables 
for smaller depth formulae obtaining the base case, which yields a $\regmtl$ formula  with extra 
propositions. The remaining argument is devoted to showing that plugging in 
$\regmtl$ formulae in place of these witnesses results in a $\regmtl$ formula. 
More details can be seen in Appendix \ref{app:lem:base1}.
}
\hfill \qed 
 
\subsection{$\regmtl \subseteq_e$ \emph{forward} $\qkmso$} 
\label{lem:temp-class-1}
Let $\varphi \in \regmtl$. The proof is by induction on the modal depth of $\varphi$.
For the base case, let $\varphi=\regm_I(\re)$ where $\re$ is a regular expression over propositions.
Let $\zeta(x,y)$ be an MSO formula with the property that $\sigma,i,j \models \zeta(x,y)$ iff $\sigma[x:y] \in L(re)$, where 
$\sigma[x:y]$ denotes the substring $\sigma(x+1) \dots \sigma(y)$.  
Given that MSO has exactly the expressive power of regular languages, such a formula can always be constructed.
Consider the time constraint formula $\psi(t_0)$:
\[
\begin{array}{l}
 \texists t_{first} \in t_0+I.~\texists t_{last} \in t_0+I.
		\tforall t' \in t_0+I.  {}
  [(t'=t_{first} {\vee} t'=t_{last} 
		{\vee} t_{first} < t' < t_{last}) {\land} 
 ~		\zeta(t_{first},t_{last})]
\end{array}
\]
Then, it is clear that $\rho,i \models \varphi$ iff $\rho,i \models \psi(t_0)$.
Note that $\psi(t_0)$ is actually a formula of $\qkmso$ with $k=4$.

Atomic and boolean constructs can be straightforwardly translated. 
Now let $\varphi=\regm_I(\re)$ where $\re$ is
over a set of subformulae $S$. For each $\zeta_i \in S$, substitute it by a witness proposition $w_i$ to get a formula
$\varphi_{flat}$. This is a modal depth 1 formula and we can construct a language equivalent formula of $\qkmso$, say $\Xi(t_0)$ over alphabet
$\Sigma \cup \{w_i\}$.
By induction hypothesis, for each $\zeta_i$ there exists a language equivalent time constrained $\qkmso$ formula 
$\kappa_i(t_0)$. Now substitute $\kappa_i(t_j)$ for each occurrence of $w_i(t_j)$ in $\Xi(t_0)$ to get a formula
$\psi(t_0)$. Then $\psi(t_0)$ is language equivalent to $\varphi$. Also, by suitably reusing the variables,
$\psi(t_0)$ can be constructed to be in $\qkmso$ with $k=4$.
\hfill \qed

\section{$\wf$-1-ATA-$\lfr$ and Logics}
\label{bk2}
In this section, we show the second of our B\"uchi-Kamp like results connecting 
logics $\fregmtl,$ \emph{forward} $\qtwomso$ and a subclass of 1-clock ATA called 
\emph{conjunctive-disjunctive} (abbreviated $\wf$) 1-clock ATA with loop-free resets. %$\lfr$. %We first introduce $\wf$ 1-clock ATA. 

Let $\Aa=(\Gamma, Q, q_0, F, \delta)$ be a 1-clock ATA.  Let $Q_x=\{x.q \mid q \in Q\}$ and let 
$\Bb(Q_x)::= \mathsf{true}|\mathsf{false}|\alpha \in Q_x|\alpha \wedge \alpha|\alpha \vee \alpha$.   
 $\Aa$ is said to be a $\wf$ 1-clock ATA if 
  \begin{enumerate}
\item $Q$ is partitioned into 
$Q_{\wedge}$ and $Q_{\vee}$,
\item  Let $q \in Q_{\wedge}$. Transitions
$\delta(q,a)$ can be written as $D_1 {\wedge} D_2 {\wedge}  \dots {\wedge} D_m$, where any $D_i$ has one 
the following forms. (i) $D_i=(q' \vee \Bb(Q_x))$ where $q' \in Q_{\wedge}$, (ii) $D_i=x \in I \vee \Bb(Q_x)$,  
or (iii) $D_i=\Bb(Q_x)$. Thus, each $D_i$ has either at most one free location from $Q_{\wedge}$, or a clock constraint 
$x \in I$. 
\item  Let $q \in Q_{\vee}$. Transitions $\delta(q,a)$ can be written as  $C_1 {\vee} C_2 {\vee} \dots {\vee} C_m$ where any $C_i$ 
 has one of the following forms. (i) $C_i=q' \wedge \Bb(Q_x)$,  where $q' \in Q_{\vee}$, 
 (ii) $C_i=x \in I \wedge \Bb(Q_x)$, or (iii) $C_i=\Bb(Q_x)$.  
  Thus each $C_i$ has either at most one free location from $Q_{\vee}$, or a clock constraint 
$x \in I$. 
\end{enumerate}
The name $\wf$ is based on the fact that each island of $\norm(\Aa)$ is either 
conjunctive or disjunctive.  
  A 1-clock ATA which has both conditions of loop-free resets and conjunctive-disjunctiveness is denoted $\wf$-1-ATA-$\lfr$, while  
 one which satisfies the $\po$ and $\wf$ conditions is denoted 
 $\wf$-1-ATA-$\po$.

  \begin{example}
 \label{example-cd}
    We illustrate examples 
 of ATA which are $\wf$ and which are not. 
 \begin{itemize}
 \item[(a)] The automaton  $\Aa$ with $\delta(s,\{a\}){=}(x {=}1) \vee (x.p \wedge x.r)$, 
 $\delta(p,\{a\}){=}x.s  \vee x.q \vee p$, 
 $\delta(q,\{a\}){=}x.r$, $\delta(r,\{a\}){=}x.q \vee r$
 is a $\wf$,  non $\lfr$ 1-clock ATA.  
 \item[(b)] For $a \in \Sigma$, let  $S_a$ and $S_{\neg a}$ denote any set containing $a$ and not containing $a$, respectively. 
 Consider the automaton $\Bb$ with
 transitions $\delta(s_0,S_a){=}s_0 \vee x > 1$,
 $\delta(s_0,S_{\neg a}){=}s_0 \wedge x \leq 1$, 
 where $s_0$ is the only location, which is non-final. The only way to accept a word is by reaching an empty configuration. 
 The $\wf$ condition is violated due to the combination 
 of having a free location and a clock constraint simultaneously 
 in a clause irrespective of $s_0 \in Q_\vee$ or $s_0 \in Q_\wedge$. This accepts the set of all words where the first symbol in $(1, 2)$ has an $a$.
 \item[(c)] Let $S_a$, $S_{\neg a}$ be as above. The automaton $\Bb$ with
 $\delta(s_0,S_a){=}s_0 \vee s_1$, 
 $\delta(s_2,\Gamma)= x \leq 1$, 
 $\delta(s_0,S_{\neg a}){=}s_0 \wedge s_2$, 
 $\delta(s_1,\Gamma){=}x > 1$
 with $s_0$ being initial and none of the locations being final 
satisfies $\lfr$ but violates $\wf$. 
The $\wf$ condition is violated since a clause contains  
  more than one free location irrespective of $s_0,s_1,s_2 \in Q_\vee$ or $s_0,s_1,s_2 \in Q_\wedge$.
 This accepts the language of all words where the last symbol in $(0, 1)$ has an $a$.
\item [(d)] The automaton $\Bb$ with
 $\delta(s_0,\Gamma){=}s_0 \vee s_1$, 
 $\delta(s_1,S_a )= s_2 \wedge s_3$, 
 $\delta(s_1, S_{\neg a})=\bot$,  
$\delta(s_2,\Gamma){=}(x \le 1)$
$\delta(s_3,\Gamma){=}s_4$
$\delta(s_4,\Gamma){=}x > 1$
 with $s_0$ being initial and $s_1$ being final 
 satisfies $\lfr$ but violates $\wf$. The $\wf$ condition is violated since the 
 automata switches between conjunctive and disjunctive locations without any reset. Note that $s_0 \in Q_{\vee}$ while
 $s_1 \in Q_{\wedge}$. This accepts the language of all words where the second last symbol in $(0,1)$ has an $a$.
\end{itemize}
\end{example}

\subsection{B\"uchi Theorem for $\wf$-1-ATA-$\lfr$}
The main result of this section is the expressive equivalence of $\wf$-1-ATA-$\lfr$ and 
$\fregmtl$. 
 \begin{theorem}
\label{bk2-b1}
	$\wf$-1-ATA-$\lfr$  are expressively equivalent to $\fregmtl$.
\end{theorem}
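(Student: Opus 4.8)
The plan is to follow the same two–directional template as Theorem~\ref{aut-tl-1}, additionally maintaining the $\wf$ invariant and the stronger syntactic restriction that every regular--expression--guarded modality produced is a forward $\freg$ rather than a window $\regm$. Throughout I work in the normal form, so the locations split into islands $P_1,\dots,P_k$ joined only by reset transitions, with $\lfr$ ordering the islands into a DAG. The first observation, used in both directions, is that in a $\wf$ automaton each island is \emph{monochromatic}: a switch between $Q_\wedge$ and $Q_\vee$ is permitted only across a reset, while non-reset transitions stay inside an island, so every location of $P_i$ is of the same kind. Moreover, for a $1$-ATA-$\lfr$ a reset from inside $P_i$ never targets $P_i$ itself, so once the reset transitions of $P_i$ are ``cut'' (replaced by fresh witness propositions) the resulting single--island automaton $\Aa^{wt}[s^r_i]$ is genuinely reset-free and monochromatic.

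For $\wf$-1-ATA-$\lfr$ $\subseteq_e \fregmtl$ I would induct on the island order, exactly as for Theorem~\ref{aut-tl-1}: replace each outgoing $x.s^r_j$ of $P_i$ by a witness proposition $w_j$, get from the induction hypothesis $\fregmtl$ formulas $\psi_j$ equivalent to $\Aa[s^r_j]$, translate the reset-free monochromatic $\Aa^{wt}[s^r_i]$ (over $\Gamma\times\{0,1\}^k$) to a $\fregmtl$ formula $\phi^{wt}$, and substitute $\psi_j$ for $w_j$; since $w_j$ occurs only propositionally, possibly inside the subformula set of a $\freg$-guard, this substitution keeps the formula in $\fregmtl$. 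The new ingredient is translating $\Aa^{wt}[s^r_i]$. If the island is \emph{disjunctive}, one observes that, being reset-free, its only clock--constrained clauses are the standalone clauses $x\in I$, while all ``continue'' clauses carry no clock test; hence an accepting run behaves like a run of a nondeterministic finite automaton that consults the clock at most once, at its terminating transition. Case-splitting on the three ways a run terminates --- an immediate-accept clause, exhaustion of the input in a final location, or a standalone clause $x\in I$ fired at some position $j$ with $\tau_j-\tau_i\in I$ --- writes the language as a Boolean combination of formulas $\freg_{I,\re}(\cdot)$ and $\mathsf{Last}$ with $\re$ an ordinary regular expression recording the path, i.e.\ a $\fregmtl$ formula. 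If the island is \emph{conjunctive}, complement it: De~Morgan (with a routine splitting of a complemented clock constraint $x\notin I$ into a disjunction of region constraints) turns it into a reset-free disjunctive $\wf$ automaton with final/non-final swapped, translate that by the previous case, and negate using the syntactic closure of $\fregmtl$ under $\neg$.

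For the converse $\fregmtl \subseteq_e \wf$-1-ATA-$\lfr$ I would induct on modal depth, following Section~\ref{thm:logic-aut-1} but keeping the $\wf$ shape. In the base case $\freg_{I,\re}\psi$ with $\re$ over propositions and $\psi$ propositional, a DFA for $\re$ is simulated by a \emph{disjunctive} island (each step has a unique successor location) whose initial location resets the clock once; the condition ``within distance $I$, $\psi$ holds, and $\re$ matched so far'' is realised not by combining a clock test with a location in one clause (which $\wf$ forbids) but by adding, in a final DFA state and when the current letter satisfies $\psi$, the standalone accepting clause $x\in I$ as a separate disjunct. Then the only reset is the initial one and the automaton is $\wf$-1-ATA-$\lfr$. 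Negation is complementation (preserving $\wf$ and $\lfr$); $\psi_1\wedge\psi_2$ is a fresh \emph{conjunctive} header whose transition resets into the two disjunctive sub-automata, the mode switch being legal precisely because it is on a reset; disjunction follows from $\neg$ and $\wedge$. For depth $n{+}1$ I would flatten with witness propositions $Z_i$ for the depth-$\le n$ subformulae, build the $\wf$-1-ATA-$\lfr$ for the resulting depth-$1$ formula, and rewrite each transition reading a letter carrying the $Z_i$'s so that it additionally spawns, via resets, $\Aa_{Z_i}$ (induction hypothesis) or its complement $\Aa_{\neg Z_i}$ --- the construction of Section~\ref{thm:logic-aut-1} verbatim, which preserves $\lfr$ and, since the spawned machines are entered only through resets, preserves the $\wf$ shape. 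Restricting all intervals to non-punctual ones yields the $\fregmitl$ version, and replacing $\lfr$ by $\po$ gives the $\po$ variant.

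The point where the restriction from $\regmtl$ to $\fregmtl$ is genuinely used --- and hence the main obstacle --- is the equivalence between conjunctive--disjunctiveness and ``the clock is consulted only at a nondeterministically chosen terminal point'': for the window modality $\regm_I$ every position of the window must be time-constrained, forcing clauses that combine a clock test with a free location and breaking $\wf$, whereas for $\freg_I$ only the single anchor is time-constrained. Making this precise in both directions, and in particular arranging the base-case automaton and the complementation step so that $\wf$, $\lfr$ (or $\po$) and non-punctuality are \emph{simultaneously} preserved, together with the bookkeeping in the modal-depth lifting that keeps every clause with at most one free location, is where the care is needed; the remainder is a routine adaptation of Theorem~\ref{aut-tl-1}.
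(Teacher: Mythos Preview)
Your plan is essentially the paper's own, in both directions: induction on the island DAG using witness propositions for outgoing resets, a per-island translation of a reset-free monochromatic $\wf$ automaton into an $\freg$ Boolean combination (disjunctive case directly, conjunctive case via complementation), and conversely the DFA-based base case plus witness lifting. One point you gloss over deserves an explicit fix. In the depth-$(n{+}1)$ step you invoke ``the construction of Section~\ref{thm:logic-aut-1} verbatim'' and justify $\wf$-preservation by ``the spawned machines are entered only through resets''. That justification covers the spawned conjuncts $x.\mathit{init}_{Z_k}$ and $x.\mathit{init}_{\neg Z_k}$, but not the outer $\bigvee_{T\subseteq\Zz}$ that nondeterministically guesses the witness set: if the current location $s$ is \emph{conjunctive}, $\delta_\Cc(s,(S,T))$ is in CNF, and wrapping a family of CNFs in a big disjunction over $T$ is no longer of the shape $D_1\wedge\cdots\wedge D_m$ required for $Q_\wedge$. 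The paper's remedy (Appendix~\ref{app:frat-aut}) is exactly the trick you already use in the other direction: complement $\Cc$ to make the island disjunctive, perform the witness elimination there, and complement back. With that one sentence added, your argument matches the paper's proof.
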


\subsection{$\wf$-1-ATA-$\lfr$ $\subseteq_e$ $\fregmtl$}
\label{aut-tl2}
 The first thing is to convert $\wf$ 1-clock ATA with no resets to 
	$\F\regm$ formula of modal depth 1 as in Lemma \ref{lem:frat}. % whose proof is in Appendix \ref{app:lemfrat}.
		\begin{lemma}
	Given a $\wf$  1-clock ATA $\Aa$ over $\Sigma$ with no resets,  
	 we can construct a $\F\regm$ formula $\varphi$ 
 such that for any timed word $\rho=(a_1,\tau_1) \dots (a_m,\tau_m)$, $\rho,i \models \varphi$ iff
  $\Aa$ accepts $(a_i,\tau_i) \dots (a_m, \tau_m)$.
\label{lem:frat}	
	\end{lemma}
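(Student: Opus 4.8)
The plan is to reduce $\Aa$ to a \emph{single-mode} reset-free $\wf$ automaton, observe that the $\wf$ restriction confines all timing to a single transition per thread, and then read off a \emph{modal depth one} $\freg$-formula directly. First note that, since $\Aa$ has no resets, the sets $Q_x$ are empty, so every $\Bb(Q_x)$ occurring in a transition collapses to a constant; consequently a transition from a location of $Q_\wedge$ mentions only free locations of $Q_\wedge$, and likewise for $Q_\vee$. Hence every location reachable from $q_0$ lies in the mode of $q_0$, and we may assume $\Aa$ is either purely disjunctive or purely conjunctive. The conjunctive case reduces to the disjunctive one: 1-ATA are effectively closed under complement by dualizing transitions (with each $x\in I$ replaced by the finite disjunction $\bigvee\{x\in I' : I'\in reg,\ I'\cap I=\emptyset\}$) and complementing the accepting set, and inspecting the clause shapes in the definition of $\wf$ shows that the dual of a reset-free conjunctive $\wf$ 1-ATA is a reset-free \emph{disjunctive} $\wf$ 1-ATA recognising the complement language. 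So it suffices to build, for a purely disjunctive reset-free $\wf$ 1-ATA, a $\freg$-formula $\varphi'$ as required, and take $\neg\varphi'$ in the conjunctive case (negation does not increase modal depth).

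The key structural observation is this: in a disjunctive reset-free $\wf$ automaton every clause of a transition $\delta(q,a)$ is, after removing constants, one of a single successor location $q'$, a single clock constraint $x\in I$, or $\top$; no clause contains both a location and a clock constraint. Therefore along any (necessarily linear) run the configuration is always a singleton or empty, timing is tested at most once, and that test happens exactly on the step that empties the configuration. Consequently $\Aa$ accepts $(a_i,\tau_i)\dots(a_m,\tau_m)$ iff one of the following holds: \textbf{(read-to-end)} there is a path $q_0\to q_1\to\cdots\to q_{m-i+1}$ in the untimed control graph spelling $a_i\cdots a_m$ with $q_{m-i+1}\in F$; \textbf{($\top$-terminate at $j$)} there is a path $q_0\to\cdots\to q$ spelling $a_i\cdots a_{j-1}$ with $\top$ a clause of $\delta(q,a_j)$; or \textbf{(guard-terminate at $j$)} there is such a path to $q$, and $x\in I$ is a clause of $\delta(q,a_j)$ with $\tau_j-\tau_i\in I$. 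Only the last alternative uses the clock.

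Each alternative is expressible by a modal depth one $\freg$-formula, because ``there is a control-graph path from $q'$ to $q$ spelling $a_{i+1}\cdots a_{j-1}$'' is a regular property of that segment, hence is the language of a regular expression $\mathsf{re}(q'\!\to\!q)$ over the atoms $b\in\Gamma$ (each read as the obvious propositional formula). Then $\varphi'$ is the disjunction, over the relevant control states and over the intervals $I$ occurring in $\Aa$, of: $[\,a_i\text{ enables }q_0\!\to\!q'\,] \wedge \freg_{I,\ \mathsf{re}(q'\!\to\!q)}\big([\,a_j\text{ enables a clause }x\in I\text{ of }\delta(q,\cdot)\,]\big)$ for guard-termination; $[\,a_i\text{ enables }q_0\!\to\!q'\,] \wedge \freg_{[0,\infty),\ \mathsf{re}(q'\!\to\!q)}\big([\,\top\in\delta(q,a_j)\,]\big)$ for $\top$-termination; and $[\,a_i\text{ enables }q_0\!\to\!q'\,]\wedge \freg_{[0,\infty),\ \mathsf{re}(q'\!\to\!q)}\big([\,a_m\text{ enables }q\!\to\!q''\text{ with }q''\in F\,]\wedge\mathsf{Last}\big)$ for reading to the end, where $\mathsf{Last}=\reg_{[0,\infty)}\epsilon$ marks the last position; the length-one suffix ($i=m$) and the degenerate runs that already empty on $a_i$ are absorbed into extra purely propositional disjuncts. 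Each disjunct has modal depth one since $\mathsf{re}$ and the endpoint formulae range over propositional formulae only.

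The only place where care is genuinely needed is the bookkeeping at the two ends of the $\freg$-segment: $\freg_{I,\mathsf{re}}\psi$ evaluated at $i$ matches exactly $a_{i+1}\cdots a_{j-1}$, so the first symbol $a_i$ must be absorbed into the leading propositional conjunct, the pivot symbol $a_j$ into $\psi$, and the path expression must start from the state $q'$ reached after $\delta(q_0,a_i)$ rather than from $q_0$; one must also verify that the ``timing is tested only on the emptying step'' claim really is forced by the $\wf$ clause shapes together with reset-freeness (this is exactly where $\Bb(Q_x)$ collapsing to a constant is used). We remark that the blunt route of untiming via Lemma~\ref{untime1} and then applying Lemma~\ref{untime2} would only produce a general $\regmtl$ formula employing the $\reg$ modality; it is precisely the $\wf$ restriction, through the structural observation above, that keeps us inside $\fregmtl$ with a single $\freg$ modality of modal depth one.
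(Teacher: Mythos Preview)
Your proposal is correct and follows essentially the same route as the paper: collapse $\Bb(Q_x)$ to constants, reduce the conjunctive case to the disjunctive one by dualisation, observe that every clause is then a single free location or a single clock constraint (or $\top$), and read off a modal-depth-one $\freg$ formula by disjuncting over the finitely many control paths and terminating intervals. Your endpoint bookkeeping (absorbing $a_i$ into a leading propositional conjunct and starting the path expression at $q'=\delta(q_0,a_i)$) is in fact more explicit than the paper's sketch; the paper's $\re_{I_a}$ tacitly glosses over the first symbol. One small slip: you write $\mathsf{Last}=\reg_{[0,\infty)}\epsilon$, but the $\reg$ modality is \emph{not} part of $\fregmtl$; use the paper's $\Box\bot=\neg\freg_{[0,\infty),\Sigma^*}\top$ (or any $\freg$-definable equivalent) to stay inside the target logic.
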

\begin{proof}
Assuming $q_0 \in Q_{\vee}$, the key idea is to check how a word is accepted. 
The reset-freeness ensures that any transition  $\delta(q,a)=C_1 \vee \dots \vee C_m$ is such that $C_i$ is either a location  
		or a clock constraint $x \in I$. Assume  acceptance happens through an empty configuration via a clock constraint $x \in I_a$, 
		from some location $q$ on an $a$, and $q$ is reachable from $q_0$. Let 
		$\re_{I_a}$ be the regular expression whose language is the set of all such words reaching some $q$, from where 
		acceptance happens via interval $I_a$ on an $a$. 
		The formula $\freg_{I_a, \re_{I_a}}a$ sums up all such words. Disjuncting over all possible intervals and symbols, we have the result. The second case is when a final state $q_f$ is reached from some $q'$ reachable from $q_0$. If $\re_{q_f,a}$ is the regular expression whose language is all words reaching such a $q'$, the formula 
		$\freg_{(0,\infty)}, \re_{q_f,a}(a \wedge \Box \bot)$ sums up all words accepted via $q', a, q_f$.
		The $\Box \bot$  ensures that no further symbols are read, and can be written as $\neg \freg_{[0, \infty),\Sigma^*} \top$. 
   		 Disjuncting over all possible final states $q_f$ and $a \in \Sigma$ gives us the formula. The case when $q_0 \in Q_{\wedge}$ is handled 
		by negating the automaton, obtaining $q_0 \in Q_{\vee}$ and negating the resulting formula. Details in  Appendix \ref{app:lem:frat}.
		\end{proof}	
	
The rest of the proof is very similar to Section \ref{aut-tl-1} and omitted. Note that if we had started 
with a $\wf$-1-ATA-$\po$, then the regular expressions $\re$ in the $\fregmtl$ formula obtained for the base case
has an equivalent star-free expression, since the underlying  automaton is aperiodic. 
For the inductive case with resets and $\po$, we obtain a $\F \sfmtl$ formula since 
plugging in witness variables with a  $\F \sfmtl$ formula again yields a $\F \sfmtl$ formula.

\subsection{$\fregmtl$ $\subseteq_e$ $\wf$-1-ATA-$\lfr$}
	\label{thm:logic-aut2}
This is almost identical to the proof of section \ref{thm:logic-aut-1}, and is provided in Appendix \ref{app:frat-aut} for completeness.	
Finally, notice that $\F\sfmtl$ formulae correspond to  $\wf$-1-ATA-$\po$.  
\subsection{Kamp Theorem for $\fregmtl$ and \emph{forward} $\qtwomso$}
The expressive equivalence of \emph{forward} $\qtwomso$ and $\fregmtl$ is stated in Theorem \ref{bk2-k}.  
If we restrict to  logic \emph{forward} $\qtwomlo$, then we obtain expressive equivalence with respect to 
$\F \sfmtl$.  
\begin{theorem}
\label{bk2-k}
	$\fregmtl$ is expressively equivalent to  \emph{forward} $\qtwomso$. 
\end{theorem}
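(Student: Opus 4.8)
$\fregmtl$ is expressively equivalent to \emph{forward} $\qtwomso$.

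The plan is to prove the two inclusions separately, mirroring the structure of the Kamp theorem for $\regmtl$ and \emph{forward} $\qkmso$ (Theorems~\ref{thm:bk1-k}, Sections~\ref{lem:base1} and~\ref{lem:temp-class-1}), while being careful at every step that the number of simultaneously active metric-quantified variables never exceeds $2$, and that we only ever use the $\freg$ modality (never the full $\reg$ modality).

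\textbf{Direction \emph{forward} $\qtwomso \subseteq_e \fregmtl$.} I would proceed by induction on the metric depth of the $\qtwomso$ formula. For the base case, a metric-depth-one time constraint has the form $\psi(t_0)=\mathcal{Q}_1 t_1\,\mathcal{Q}_2 t_2\,\varphi(\downarrow t_0,t_1,t_2)$ with at most two metric quantifiers (since $k=2$, and recall the syntax allows $j<k$, hence $j\le 1$; more precisely I would treat the general block of length at most $k-1=1$, but it is cleanest to also allow the two-variable block that arises when composing and show it still collapses). The key observation is that with only one metric-quantified variable $t_1\in t_0+I$ beyond the anchor, the region abstraction needed is trivial: the position of $t_1$ relative to $t_0$ is constrained to lie in $t_0+I$, and the remaining MSO formula $\varphi$ only talks about the \emph{untimed} order. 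So after replacing $\mathcal{Q}_1 t_1$ by $\exists t_1(t_0\le t_1 \wedge CON(I_1,t_1))$ or the dual, we get an ordinary future MSO formula relativized to $t_0$ with one monadic region predicate; converting this to a DFA over $2^{\Sigma}\times reg$ and then invoking Lemma~\ref{untime2} gives a $\regmtl$ formula. The crucial point is that because the DFA only needs to inspect a \emph{single} region boundary (the one cutting out $t_0+I_1$), the resulting $\regmtl$ formula can be massaged into one using only $\freg_I$ modalities: it has the shape ``there exists a future position in $t_0+I_1$ such that the prefix up to it matches a regular expression'' — exactly the $\freg$ modality. For a universally quantified $t_1$, I use $\F\regm$-closure under negation / the remark that $\varphi_1\until_I\varphi_2\equiv\freg_{I,\varphi_1^*}\varphi_2$ together with the fact that $\fregmtl$ is boolean closed, to express ``for all positions in $t_0+I_1$, $\chi$ holds'' as a boolean combination of $\freg$ formulas. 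For the inductive step, I replace each innermost time-constraint subformula $\psi_i(t_k)$ by a fresh witness monadic predicate $w_i$, apply the base case to get a $\fregmtl$ formula over $\Sigma\cup\{w_i\}$, and then (by induction hypothesis) substitute the $\fregmtl$ formula equivalent to $\psi_i$ for $w_i$; substitution of $\fregmtl$ formulas into a $\fregmtl$ formula again yields a $\fregmtl$ formula.

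\textbf{Direction $\fregmtl \subseteq_e$ \emph{forward} $\qtwomso$.} Again by induction on modal depth. For the base case $\varphi=\freg_{I,\re}\psi$ with $\re$ a regular expression over propositions and $\psi$ a propositional formula, let $\zeta(x,y)$ be an $MSO[<]$ formula with $\sigma,i,j\models\zeta(x,y)$ iff $\sigma[i{:}j]\in L(\re)$ — such a formula exists since $MSO[<]$ captures the regular languages. Then
\[
\psi_{\varphi}(t_0)\ =\ \texists t_1\in t_0+I\ \big(\,Q\text{-encoding of }\psi\text{ at }t_1\ \wedge\ \zeta(t_0,t_1)\,\big),
\]
which uses exactly \emph{one} metric-quantified variable $t_1$ beyond the anchor, so it lies in \emph{forward} $\qtwomso$. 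This is the place where the asymmetry with the $\regmtl$ case (which needed $k=4$: $t_{first}$, $t_{last}$ and a universally quantified $t'$, plus the anchor) shows up — the $\freg$ modality only needs a single right endpoint, so two variables suffice. Atomic and boolean cases are immediate; for the inductive step I flatten $\varphi$ by replacing each subformula $\zeta_i\in S$ by a witness proposition $w_i$, obtain a $\qtwomso$ formula $\Xi(t_0)$ over $\Sigma\cup\{w_i\}$ by the base case, and substitute the (inductively obtained) $\qtwomso$ formula $\kappa_i(t_j)$ for each $w_i(t_j)$, reusing the two metric variables so that the metric-variable budget of $2$ is respected at every level.

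\textbf{Main obstacle.} The delicate part is verifying that the two-variable budget is genuinely enough in \emph{both} directions simultaneously — in particular, that when one translates a $\F\regm$ formula of the form $\freg_{I,\re}\psi$ where $\re$ itself mentions subformulae that are $\F\regm$ formulae, the substitution does not force a third live metric variable. This needs the observation that the ``anchor'' variable $t_0$ of an inner time-constraint formula can be identified with the variable $t_1$ that the outer block just quantified (the inner formula is evaluated \emph{at} that position), so only $t_0$ and one fresh $t_1$ are ever simultaneously in scope; once $t_1$'s truth value is settled we are free to reuse it. Dually, in the $\qtwomso\to\fregmtl$ direction, the obstacle is showing a universally-metric-quantified block $\tforall t_1\in t_0+I\ \varphi$ stays inside $\fregmtl$ rather than requiring the full $\reg$ modality; I would handle this via duality ($\tforall t_1\in t_0+I\ \varphi \equiv \neg\,\texists t_1\in t_0+I\ \neg\varphi$) together with the boolean closure of $\fregmtl$ and the fact — which should be checked carefully, and follows the reset-free / island analysis underlying Lemma~\ref{lem:frat} — that the single-region DFA arising in the base case is ``one-sided'' enough to be captured by $\freg$ alone. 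Restricting throughout to star-free regular expressions and dropping second-order quantifiers gives the parallel statement that \emph{forward} $\qtwomlo$ is expressively equivalent to $\F\sfmtl$.
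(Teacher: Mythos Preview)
Your $\fregmtl \subseteq_e \qtwomso$ direction is essentially the paper's argument: a single metric quantifier $\texists t_1\in t_0+I$ plus an MSO description of $\re$ between $t_0$ and $t_1$, and the inductive flattening with witness predicates. That part is fine.

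The gap is in the other direction. You propose to handle $\psi(t_0)=\mathcal{Q}_1 t_1\,\varphi(\downarrow t_0,t_1)$ by the region-abstraction route of Section~\ref{lem:base1} (rewrite the metric quantifier using $CON(I_1,t_1)$, build a DFA over $2^{\Sigma}\times reg$, invoke Lemma~\ref{untime2}), and then assert the output ``can be massaged into one using only $\freg_I$ modalities'' because ``the DFA only needs to inspect a single region boundary''. But Lemma~\ref{untime2} structurally produces conjunctions $\regm_{I_1}\re_1\wedge\cdots\wedge\regm_{I_k}\re_k\wedge\mathsf{Last}$, and you give no argument that these $\regm$ modalities collapse to $\freg$. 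This is exactly the subtle point: $\regm_I\re$ is \emph{not} in general expressible with $\freg$ alone, and ``single region boundary'' does not by itself force the collapse. A second, related problem: your informal shape ``there exists a future position in $t_0+I_1$ such that the prefix up to it matches a regular expression'' forgets that $\varphi(\downarrow t_0,t_1)$ may quantify over positions \emph{after} $t_1$ (quantification is relativized to $t_0$, not $t_1$), so a suffix constraint must also be carried.

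The paper avoids both issues by not going through region abstraction at all. It translates the MSO formula $\varphi(\downarrow t_0,t_1)$ directly to a regular expression $\re_\varphi$ over $2^{\Sigma}\times\{0,1\}^2$ (the two bits marking $t_0$ and $t_1$). Because $t_0$ sits at the first position and $t_1$ at exactly one position, $\re_\varphi$ has the form $(2^{\Sigma},1,0)\,\re'$ with a single $(\,\cdot\,,0,1)$-letter in $\re'$; the splitting lemma of Etessami--Wilke (\cite{EH01}, Lemma~7) then writes $\re'$ as a finite union of $\re_\ell\,(\alpha,0,1)\,\re_r$. This yields, for each disjunct, the $\fregmtl$ formula
\[
\freg_{I,\ \re_\ell}\big[\alpha\ \wedge\ \freg_{(0,\infty),\ \re_r}\ \Box\bot\big],
\]
where the inner $\freg_{(0,\infty),\re_r}\Box\bot$ is precisely what handles the suffix after $t_1$. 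The universal quantifier is then dual, as you say. If you want to keep your approach, you would need to supply an explicit argument replacing each $\regm_I$ produced by Lemma~\ref{untime2} with $\freg$'s; the cleaner fix is to adopt the direct regex-splitting construction.
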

\subsection{\emph{forward} $\qtwomso$ ($\qtwomlo$) ${\subseteq}_e$ $\F \regmtl$
($\F \sfmtl$)}
\label{q2mso-freg}
We first consider formulae of metric depth one. These have the form
 $\psi(t_0)=\mathcal{Q}_1 t_1\varphi({\downarrow} t_0, t_1)$ 
 and $\varphi({\downarrow} t_0, t_1)$ is an MSO (FO) formula (bound first order variables  
 $t'$ in $\varphi$ only have the comparison $t' {>} t_0$, and there are no free variables 
 other than $t_0, t_1$, and hence no metric comparison exists in $\varphi$). Let 
 $\re_{\varphi}$ be the regular expression equivalent to $\varphi({\downarrow} t_0, t_1)$. 
 The presence of free variables $t_0, t_1$ implies that $\re_{\varphi}$ ie over the alphabet 
 $2^{\Sigma} \times \{0,1\}^2$, where the last two bits are for $t_0, t_1$. 
 As seen in the case of section \ref{lem:base1}, $t_0$ is assigned the first position 
 of $\re_{\varphi}$ since all other variables take up a position to its right. 
 Hence $\re_{\varphi}$ can be rewritten as $(2^{\Sigma},1,0) \re'$. Since $t_1$ is assigned a unique position, 
 there is exactly one occurrence of a symbol of the form $(2^{\Sigma},0,1)$ in $\re'$. 
 Using (Lemma 7, page 16) \cite{EH01}, we can write $\re'$ as  a finite union 
 of disjoint expressions each of the form $\re_{\ell} (\alpha,0,1) \re_{r}$ where 
 $\alpha \in 2^{\Sigma}$, and $\re_{\ell}, \re_r \subseteq [(2^{\Sigma},0,0)]^*$. 
 $\varphi(\downarrow t_0, t_1)$ is thus equivalent to having 
 a symbol $(\alpha,0,1)$ at a time point $t \in t_0+I$, and 
 $(2^{\Sigma},0,1)\re_{\ell}$ holds till $t$, and beyond $t$, $\re_r$ holds.  
 This is captured 
 by the formula $\freg_{I,\re'}[\bigvee_{\alpha \in 2^{\Sigma}}(\alpha,0,1) \wedge \freg_{(0, \infty), \re_r} \Box \bot]$. 
 Here,  $\re'=(2^{\Sigma},0,1)\re_{\ell}$, and the $\Box \bot$ symbolizes the fact that 
 we see $\re_r$ in the latter part after $(\alpha, 0,1)$ and no more symbols after that.  If $\psi(t_0) \in \qtwomlo$, 
 then $\re_{\varphi}$ is a star-free expression, and so are 
  $\re_{\ell}, \re_r$. That gives us a $\F \sfmtl$ formula.

  The inductive case 
  for formulae of higher depth is in Appendix \ref{app:q2mso-freg}.
The case of going from $\fregmtl$ to \emph{forward} $\qtwomso$ is similar to  section \ref{lem:temp-class-1}, and is in Appendix 
\ref{app:lem:freg-q2mso}.

\noindent{\bf Remark} $\wf$-1-ATA-$\lfr$ with 
	non-punctual guards gives 
expressive equivalence with $\fregmitl$. Likewise, 
$\fregmitl$ is expressively equivalent to logic $\qtwomso$ where 
none of the time constraints are punctual. Note that this is the case since the proof does not introduce punctual 
guards if there are none in the starting automaton/logic.

\section{Temporal Logics with FixPoints}
\label{fixp}
In this section, we look at the logics $\regmtl$ and $\fregmtl$ 
enhanced with fix point operators.  
\paragraph*{\bf{$\regmtl$ with fixed points ($\mu\regmtl$)}} 
\noindent{$\mu\regmtl$ \bf{Syntax}:} Formulae of $\mu\regmtl$ are built from a finite alphabet $\Sigma$ and a finite set $\mathcal{Z}$ of recursion variables as: \\
	$\varphi::=a ({\in \Sigma})|true|Z ({\in \mathcal{Z}})|\varphi \wedge \varphi|\reg_I \re(\Ss)|\F \regm_{I, \re(\Ss)}\varphi|$
	$\greg_{I, \re(\Ss)}\varphi|  \mu Z\circ \varphi|\nu Z \circ \varphi$,  
	where $I {\in} I\nu$, and $\mathsf{S}$, $\re(\Ss)$ are as before, and 
	$\greg_{I, \re(\Ss)}\varphi$ is equivalent to $\neg \freg_{I, \re(\Ss)} \neg \varphi$. 
		The subformulae $\Ss$ of $\varphi$ now can contain $\mu\regmtl$ formulae. 
	A $\mu\regmtl$ formula is said to be \emph{sentence} if every recursion variable $Z$ is within the scope of a fix point operator. Otherwise the formula is open and we write it as $\varphi(Z_1,\ldots,Z_i)$  where $Z_1,\ldots, Z_i$ occur freely.

\noindent{$\mu\regmtl$ {\bf Semantics}:} To define the semantics, we first define a 
 $\emph{super structure}$.  A super structure is a timed word over $[\mathcal{P}(\Sigma)-\emptyset] \times \mathcal{P}(\mathcal{Z})$.
 The super structure is labelled with non-empty subsets of $\Sigma$ and with a possibly empty set of recursion variables at each position. 
For a super structure $\rho=((\sigma,Z), \tau)$, a position 
$i \in dom(\rho)$, a $\mu\regmtl$ formula $\varphi$, and a finite set $\Ss$ of sub-formulae of $\varphi$, we define the satisfaction of $\varphi$ at a position $i$ 
as follows. For $Z \in \mathcal{Z}$, $\rho, i \models Z$
iff $Z \in \sigma(i)$. 
We use the notations $\mathsf{Seg}(\rho,\Ss, i, j)$ and $\mathsf{TSeg}(\rho, \Ss, I, i)$ as in the case of $\regmtl$.  
The semantics of formulae which do not involve $\mu$ are as defined earlier. 

 Two super structures $\rho=(\sigma, \tau)$ and $\rho'=(\sigma', \tau')$ \emph{agree except on $Z$} iff 
$w \in \sigma(i)$ iff $w \in \sigma'(i)$ for all $w \neq Z$ and all $i \geq 1$. 
 We say that a super structure $\rho'$ is a fix point of $Z \equiv \varphi(Z)$ with respect to $\rho$ iff $\rho$ and $\rho'$ agree except on $Z$ and $\rho',i \models \varphi$ iff $\rho',i \models Z$. 	The formula $\mu Z \circ \varphi(Z)$ (respectively $\nu Z  \circ \varphi(Z)$) denotes the 
 \emph{least} (respectively \emph{greatest}) fixpoint solution to the equation $Z \equiv \varphi(Z)$.
  The super structure $\rho'$ is a \emph{least fix point} if 
 whenever $\beta$ is also a fixpoint, then for all  $i \geq 1$, $\rho', i \models Z \Rightarrow \beta, i \models Z$. 
   The super structure $\rho'$ is a \emph{greatest fix point} if  whenever $\beta$ is also a fixpoint, then for all $i \geq 1$, $\beta, i \models Z \Rightarrow \rho', i \models Z$. 
 The semantics of fixed point formulae is as follows.\\
 \noindent{\bf{Semantics of Fix Point Formulae}}:  \\
 \noindent $\bullet$ $\rho, i \models \mu Z \circ \varphi(Z)$ iff $\rho'', i \models Z$ where $\rho''$ is a least fix point for $Z \equiv  \varphi(Z)$ with respect to $\rho$.\\
 \noindent $\bullet$  $\rho, i \models \nu Z \circ \varphi(Z)$ iff $\rho'', i \models Z$ where $\rho''$ is a greatest fix point for $Z \equiv  \varphi(Z)$ with respect to $\rho$.
For sentences $\mu Z \circ \varphi(Z)$, the truth value of $\varphi$ 
is determined using timed words $\rho$ (super structures $\rho=(\sigma, \tau)$ such that 
$Z \notin \sigma(i)$ for all $i$).  
 For a sentence $\varphi$,  a timed word $\rho$, 
and $i \geq 1$, we say that $\rho, i \models \mu Z \circ \varphi(Z)$ if there is a  least fix point  
$\beta$ such that $\beta, i \models \varphi$.  
For any sentence $\varphi$, the language $L(\varphi)$ is defined as set of all the timed words $\rho$ such that $\rho,1 \models \varphi$. If we restrict ourselves to using only $\F\regm$, then the resultant logic is 
 $\mu\fregmtl$.

\begin{example}
Let $\varphi=\mu Z.[a \rightarrow \regm_{(0,1)}[(a+b)^*(b  \vee Z)]$. 
\begin{enumerate}
\item 
Let $\rho'=(\{a\},0)(\{b,Z\},0.6)(\{a\}, 0.9)(\{b,Z\},1.7)(\{a\},1.8)$ and \\
 $\rho=(\{a\},0)(\{b\},0.6)(\{a\},0.9)(\{b\},1.7)(\{a\},1.8)$ be super structures. 
 Then $\rho, \rho'$ agree except on $Z$
and $\rho'$ is a least fixed point of $Z\equiv \varphi(Z)$ with respect to $\rho$.  
It can be seen that $\rho, 1 \nvDash  \mu Z.[a \rightarrow \regm_{(0,1)}[(a+b)^*(b  \vee Z)]$ since 
no super structure $\beta$ which agrees with $\rho$ except on $Z$ can be such that 
$\beta, 1 \models Z$.  

\item Let $\rho'=(\{a,Z\},0)(\{b,Z\},0.6)(\{a,Z\},0.9)(\{b,Z\},1.7)$ and \\
 $\rho=(\{a\},0)(\{b\},0.6)(\{a\},0.9)(\{b\},1.7)$ be super structures. 
  Then $\rho, \rho'$ agree except on $Z$
and $\rho'$ is a least fixed point of $Z\equiv \varphi(Z)$ with respect to $\rho$.  
It can be seen that $\rho, 1 \models  \mu Z.[a \rightarrow \regm_{(0,1)}[(a+b)^*(b  \vee Z)]$ since 
$\rho', 1 \models Z$. 
	 	
\end{enumerate}

\end{example}

 \begin{example}
Let $\varphi=\mu Z.[Z \vee \freg_{[0,1),(aa+Z)^+}b]$.   
\begin{enumerate}
\item Let $\rho'=(\{a,Z\},0)(\{a,Z\},0.2)(\{a,Z\},0.7)(\{b,Z\},0.9)(\{a\},1.1)(\{a\},1.3)(\{b\},1.7)$
and \\ $\rho''=(\{a,Z\},0)(\{a\},0.2)(\{a\},0.7)(\{b,Z\},0.9)(\{a\},1.1)(\{a\},1.3)(\{b\},1.7)$
be two\\  super structures. Then both $\rho'$ and $\rho''$ are fixpoints of $Z \equiv \varphi(Z)$ 
with respect to \\
$\rho=(\{a\},0)(\{a\},0.2)(\{a\},0.7)(\{b\},0.9)(\{a\},1.1)(\{a\},1.3)(\{b\},1.7)$. 
It can be seen that $\rho''$ is a least fix point and $\rho'', 1 \models Z$. Hence, 
$\rho,1 \models \varphi$. 
\item The timed word $\rho= (\{a\},0)(\{a\},0.7)(\{b\},0.9)(\{a\},1.1)(\{a\},1.3)(\{b\},1.7)$
is such that $\rho,1  \nvDash \varphi$. Note that there does not exist a least fix point 
$\beta$ that agrees with $\rho$ except $Z$ such that $\beta, 1 \models Z$.  
\end{enumerate}
\end{example}

 Let $\overline{Z}$ denote a tuple of variables from $\Zz$.

\begin{definition}[Guarded Fragment]
We say that a recursion variable $Z$ is guarded in a temporal $\mu$ calculus formulae $\psi(Z,\overline{Z})$ if and only variable $Z$ is within the scope of a strict future modality. Any formulae is a \it{guarded} formulae if and only if in all its subformulae of the form $\mu Z \circ \psi(Z,\overline{Z})$ (or $\nu Z \circ \psi (Z,\overline{Z})$), $Z$ is guarded in $\psi$. 
 
\end{definition}
It can be easily shown that the guarded restriction on temporal $\mu$ calculus formulae does not affect the expressive power of the logic.\footnote{Replace all the un-guarded variables $Z$ associated with $\nu$ as true and those associated with $\mu$ as false. For details refer \cite{igorbrad}.} 
Hence, we consider only guarded formulae.
 A proof of Lemma \ref{unique_fix_point} is in Appendix \ref{app:fix}.
\begin{lemma}
	\label{unique_fix_point}
	Given any guarded formula  $\psi(Z)$, $Z \equiv \psi(Z)$ has a unique solution if the models are finite timed words.
\end{lemma}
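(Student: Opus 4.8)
The plan is to show that on finite timed words, the map induced by a guarded formula $\psi(Z)$ is contracting with respect to a suitable well-founded measure, so that both the least and greatest fixpoints coincide. First I would fix a timed word $\rho = (\sigma,\tau)$ and work in the lattice of super structures agreeing with $\rho$ except on $Z$; such a super structure is entirely determined by the set $V \subseteq dom(\rho)$ of positions carrying $Z$. The formula $\psi$ induces a monotone map $F_\rho$ on the powerset lattice of $dom(\rho)$, namely $F_\rho(V) = \{\, i \mid \rho[Z \mapsto V], i \models \psi \,\}$. By the Knaster--Tarski theorem $F_\rho$ has a least fixpoint $\mu$ and a greatest fixpoint $\nu$, and $\mu \subseteq \nu$; the goal is to prove $\mu = \nu$, i.e. the fixpoint is unique.

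The key step is to exploit guardedness: every free occurrence of $Z$ in $\psi$ lies within the scope of a strict-future modality ($\F\regm_{I,\re}$, $\greg_{I,\re}$, or inside a segment expression $\re$, all of which quantify over positions strictly to the right of the evaluation point). The crucial consequence is: whether $i \in F_\rho(V)$ depends only on the restriction of $V$ to positions $j > i$. I would make this precise by an induction on the structure of $\psi$, proving that for any two super structures $\rho_1, \rho_2$ agreeing except on $Z$ and with the same $Z$-labelling on all positions $> i$, we have $\rho_1, i \models \psi \iff \rho_2, i \models \psi$ (subformulae without free $Z$ are handled by the already-defined semantics; the modal and recursion cases push the witness position strictly rightward). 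Given this, I run a backward induction on positions: the last position $n$ of $\rho$ has no strict future, so membership of $n$ in any fixpoint is forced by $F_\rho(\emptyset)$ and hence $n \in \mu \iff n \in \nu$. Inductively, if $\mu$ and $\nu$ agree on all positions $> i$, then since $i \in F_\rho(V)$ depends only on $V \cap \{j : j > i\}$, and $\mu,\nu$ are fixpoints, $i \in \mu \iff i \in F_\rho(\mu) \iff i \in F_\rho(\nu) \iff i \in \nu$. Therefore $\mu = \nu$ on all of $dom(\rho)$, and since $\rho$ was arbitrary, $Z \equiv \psi(Z)$ has a unique solution over finite timed words.

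The main obstacle I expect is making the ``depends only on the strict future'' claim airtight for the rational-expression modalities: in $\F\regm_{I,\re(\Ss)}\varphi$ and in $\reg_I\re$, the set $\Ss$ of subformulae may itself contain occurrences of $Z$, and the segment word $\mathsf{Seg}(\rho,\Ss,i,j)$ reads positions in the open interval $(i,j)$, all of which are strictly greater than $i$ — so the claim holds, but one must carefully verify that it is genuinely the \emph{open} segment (the endpoints $i$ and $j$ are where $\varphi$ resp. the marking is evaluated, with $j > i$) and that no construct secretly refers back to position $i$ or earlier. A secondary subtlety is that $\greg$ is defined via negation of $\freg$, so guardedness must be checked to be preserved under the $\neg$ that appears there; since $\greg_{I,\re}\varphi$ still only inspects positions $> i$, the structural induction goes through. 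Note this argument uses finiteness essentially: on finite words the backward induction on $dom(\rho)$ terminates, whereas on infinite words a guarded equation such as $Z \equiv \greg_{[0,\infty),\Sigma^*}Z$ can have both $\mu$ and $\nu$ nontrivially different.
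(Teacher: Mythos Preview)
Your proof is correct and follows essentially the same argument as the paper. Both hinge on the observation that guardedness forces the truth of $\psi(Z)$ at position $i$ to depend only on the $Z$-labelling at positions strictly greater than $i$, and then finish by a backward argument on positions. The paper packages this as a proof by contradiction---take the \emph{last} position where two putative fixpoints disagree and derive a contradiction from the fact that their strict futures already agree---whereas you unwind the same idea as an explicit backward induction from the final position; these are two presentations of the same step. Your invocation of Knaster--Tarski and monotonicity is extra scaffolding the paper does not use (its argument compares two arbitrary fixpoints directly, with no appeal to monotonicity), but this is harmless given the negation-free syntax of $\mu\regmtl$. Your caution about whether $\reg_I$ with $0\in I$ really inspects only strict-future positions is well placed; the paper's proof simply asserts that ``both the modalities reason about strict future'' without examining this edge case.
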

As a corollary, over finite timed words, $\mu Z \circ \varphi(Z)$ is equivalent to  $\nu Z\circ \varphi(Z)$, provided that $Z$ is guarded in $\varphi(Z)$.

\begin{definition}[Temporal Equation Systems]
Consider a series of equations \\
$Z_1 {\equiv^{\mu/\nu}} \psi_1(Z_1, \ldots, Z_m); \ldots;$ $ Z_m {\equiv^{\mu/\nu}} \psi_m(Z_1,\ldots,Z_m)$, where $\psi_1,\ldots,\psi_m$ are temporal logic formulae over $\Sigma \cup \{Z_1,\ldots,Z_m\}$ and $Z_i{\equiv^\mu} \psi_i(Z_1,\ldots,Z_m)$ denotes that $Z_i$ is the least fix point solution of $\psi_i$. 
If the $\psi$ are $\regmtl$ or $\F \regmtl$ formulae then we call it as system of $\regmtl$ equations or $\F \regmtl$ equations, respectively. 
\end{definition}
It can be shown (see \cite{igorbrad}, \cite{brad} and Appendix \ref{app:fix} for an example) that any $\mu \regmtl$ and $\mu \F\regmtl$ can be equivalently reduced to their respective system of equations.
By lemma \ref{unique_fix_point} we know that the least and the greatest fix point operators have identical semantics over finite timed words. Hence,  we will consider only $\mu$ operators and will drop the superscript on $\equiv$. Note that if this equation is true, $Z_i$ is a witness for $\psi_i$. 
The rest of the section establishes the expressive equivalence of 1-clock ATA ($\wf$ 1-clock  ATA) with 
logic $\mu \regmtl$ ($\mu \fregmtl$).   
	\begin{theorem}
		\begin{enumerate}
			\item[(a)] Given  a 1-clock ATA $\Aa$,  there is a 
			$\mu\regmtl$ formula $\psi$ s.t. $L(\psi){=}L(\Aa)$.
			\item[(b)] Given a $\wf$ 1-clock ATA $\Aa$, there is  a 
			$\mu\fregmtl$ formula $\psi$ s.t. $L(\psi){=}L(\Aa)$.
		\end{enumerate}
	\label{aut-tl-fixpoints1}
		\end{theorem}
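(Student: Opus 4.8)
## Proof Proposal for Theorem \ref{aut-tl-fixpoints1}

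\textbf{Proof approach.}
The plan is to reuse the machinery of Section~\ref{bk1} (the proof that 1-ATA-$\lfr$ $\subseteq_e \regmtl$), but to replace the induction on the partial order of islands — which is exactly what the $\lfr$ condition provided and which is no longer available — by a \emph{system of temporal equations}, with one recursion variable per island. Put $\Aa$ in normal form $\norm(\Aa)$, with islands $P_1,\dots,P_m$ and header (reset) locations $s_1^r,\dots,s_m^r$, and let $\Aa[s_i^r]$ denote $\Aa$ restarted at $s_i^r$. Introduce a recursion variable $Z_i$ for each island, intended to witness the language of $\Aa[s_i^r]$. For each $i$, form the reset-free automaton $\Aa^{wt}[s_i^r]$ on the single island $P_i$ by replacing every occurrence of $x.s_j^r$ in its transitions by the fresh propositional witness $w_j$, precisely as in Section~\ref{sec:aut-tl-1}. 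By Lemmas~\ref{untime1} and~\ref{untime2} applied to the reset-free $\Aa^{wt}[s_i^r]$ we obtain a $\regmtl$ formula $\phi_i^{wt}$ over $\Sigma \cup \{w_1,\dots,w_m\}$; set $\psi_i := \phi_i^{wt}[Z_j/w_j]$. This yields the system $Z_i \equiv \psi_i(Z_1,\dots,Z_m)$, $1 \le i \le m$, of $\regmtl$ equations, and we take $\psi$ to be the $\mu\regmtl$ sentence obtained by folding this system and anchoring it at $Z_{i_0}$, where $s_{i_0}^r = s_0$ is the initial location.

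\textbf{Guardedness and well-definedness.}
The crucial observation is that a reset transition into a header $s_j^r$ is taken only upon reading an input symbol; consequently, in the $\regmtl$ formula produced by Lemma~\ref{untime2} the witness $w_j$ occurs only inside the scope of a $\reg$ or $\freg$ modality evaluated at a strictly later position. Hence each $Z_j$ is guarded in every $\psi_i$. By Lemma~\ref{unique_fix_point} the equation system then has a unique solution over finite timed words, so the least- and greatest-fixpoint readings coincide and $\psi$ is well-defined regardless of that choice.

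\textbf{Correctness.}
To prove $L(\psi) = L(\Aa)$ one shows that the map sending a pointed timed word $(\rho,\ell)$ to ``$\Aa[s_i^r]$ accepts the $\ell$-th suffix of $\rho$ (with the clock read as $0$ at the reset point)'' is a fixpoint of the equation system: this is exactly the content of the reset-free construction of Section~\ref{sec:aut-tl-1} together with the intended meaning of the $w_j$. Since on a finite timed word every branch of an accepting run-tree takes only finitely many reset transitions, this fixpoint is reached by finitely many unfoldings, and by uniqueness it equals the interpretation computed by $\psi$; restricting to $\ell=1$ and anchoring at $Z_{i_0}$ gives $L(\psi)=L(\Aa)$. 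For part~(b), if $\Aa$ is $\wf$ then in $\norm(\Aa)$ each island is purely conjunctive or purely disjunctive, so $\Aa^{wt}[s_i^r]$ is a reset-free $\wf$ 1-clock ATA; replacing the use of Lemma~\ref{untime2} by Lemma~\ref{lem:frat} makes each $\psi_i$ a $\fregmtl$ formula, and substituting $\fregmtl$-formulas for the $Z_j$ keeps it in $\fregmtl$, so the folded sentence lies in $\mu\fregmtl$.

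\textbf{Main obstacle.}
The delicate step is the correctness argument: matching the fixpoint unfolding with the branching structure of accepting runs when several island-threads coexist, and arguing that the (unique, hence least) fixpoint over finite words captures exactly ``accepts'' rather than an over-approximation. The guardedness/unique-fixpoint lemma carries most of the weight, but one still has to verify carefully the base-case semantics of the witnesses and the bookkeeping of the clock value across a reset.
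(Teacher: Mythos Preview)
Your proposal is correct and follows essentially the same approach as the paper's proof sketch: normalize $\Aa$ into islands, replace outgoing reset transitions from each island $P_i$ by witness variables $w_j$ to obtain a reset-free 1-clock ATA, convert each such island to a $\regmtl$ (resp.\ $\fregmtl$ via Lemma~\ref{lem:frat}) formula $\varphi_i$ over the extended alphabet, and solve the resulting system $w_1\equiv\varphi_1;\ldots;w_k\equiv\varphi_k$ of temporal equations, taking the solution for the initial island as $\psi$. You supply more detail than the paper (the guardedness argument and the correctness sketch via uniqueness of fixpoints over finite words), but the underlying construction is the same.
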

	\noindent{\bf{Proof Sketch}}: 
	(a) For each island  $P_i$ of $\norm(\Aa)$, we eliminate all the outgoing reset transitions using witnesses as done in section \ref{sec:aut-tl-1}, resulting in a reset-free 1-clock ATA, which in turn is converted to $\regmtl$ formulae $\varphi_i$ over the extended alphabet
	consisting of witness variables $w_j$ for island $P_j$.   
	(b) The islands $P_i$ will be either conjunctive or disjunctive resulting in $\F\regm$ formulae $\varphi_i$ as in section
	 \ref{aut-tl2}. 
		Solving the system $w_1 {\equiv} \varphi_1;\ldots;w_k {\equiv} \varphi_k$ of $\regmtl$ equations,  
		(and $\fregmtl$ in case (b)) the set of words accepted by $\Aa$ is given by the solution for $w_1$.
				
		\begin{theorem}
	\begin{enumerate}
		\item[(a)] Given a  $\mu\regmtl$ formula $\psi$, we can construct a 1-clock ATA $\Aa$ s.t. $L(\psi){=}L(\Aa)$.
		\item[(b)] Given  a $\mu\F\regmtl$ formula $\psi$, we can construct a $\wf$, 1-clock ATA $\Aa$ s.t. $L(\psi){=}L(\Aa)$.
	\end{enumerate}
	\label{aut-tl-fixpoints2}
\end{theorem}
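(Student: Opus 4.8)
The plan is to prove this as the converse of Theorem~\ref{aut-tl-fixpoints1}, running that argument backwards. Starting from a $\mu\regmtl$ (resp.\ $\mu\fregmtl$) sentence $\psi$, the first step is to put it into an equivalent \emph{guarded system of equations} $Z_1 {\equiv} \varphi_1;\ \ldots;\ Z_m {\equiv} \varphi_m$, where each $\varphi_i$ is a $\regmtl$ (resp.\ $\F\regmtl$) formula over $\Sigma \cup \{Z_1,\ldots,Z_m\}$; this is possible by the reduction recalled before Theorem~\ref{aut-tl-fixpoints1}, after first removing unguarded occurrences as in the footnote following Lemma~\ref{unique_fix_point}. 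By the corollary to Lemma~\ref{unique_fix_point}, over finite timed words this system has a \emph{unique} solution (so $\mu$ and $\nu$ coincide and we need not distinguish them), and $L(\psi)$ is exactly the $Z_1$-component of that solution: a position lies in the $Z_i$-component iff it satisfies $\varphi_i$ when every position is labelled by the recursion variables it satisfies.

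The second step treats each $\varphi_i$ as an ordinary $\regmtl$ (resp.\ $\F\regmtl$) formula over the finite alphabet in which every $Z_j$ is a fresh proposition, and applies Theorem~\ref{aut-tl-1} (resp.\ Theorem~\ref{bk2-b1}) to obtain a 1-ATA-$\lfr$ (resp.\ $\wf$-1-ATA-$\lfr$) $\Aa_i$ with initial location $q_0^i$ that accepts precisely the super-structures over $\Gamma \times 2^{\{Z_1,\ldots,Z_m\}}$ whose first position satisfies $\varphi_i$. Using closure under complement I also build, for each $i$, an automaton $\Aa_i^{\neg}$ with initial location $q_0^{\neg i}$ for $\neg\varphi_i$; complementation (dualization) preserves $\lfr$, since loop-freeness of resets is unaffected by swapping $\wedge$ and $\vee$, and it preserves $\wf$, which is self-dual under that swap together with the exchange of $Q_\wedge$ and $Q_\vee$.

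The third step glues these automata over the original alphabet $\Gamma$ exactly as in the witness-variable substitution of Section~\ref{thm:logic-aut-1}. Take the disjoint union of all the $\Aa_i$ and $\Aa_i^{\neg}$, declare $q_0^1$ the initial location, and for each location $s$ of some $\Aa_i$ (and likewise of $\Aa_i^{\neg}$), replace its transitions by
\[
\delta(s,a) \;=\; \bigvee_{T}\Bigl(\delta_{\Aa_i}(s,(a,T)) \;\wedge\; \bigwedge_{Z_k \in T} x.q_0^{k} \;\wedge\; \bigwedge_{Z_k \notin T} x.q_0^{\neg k}\Bigr),
\]
where $T$ ranges over subsets of $\{Z_1,\ldots,Z_m\}$. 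On reading a genuine letter $a$, the automaton guesses the set $T$ of recursion variables true at the current position $j$, spawns with a clock reset the automaton $\Aa_k$ (resp.\ $\Aa_k^{\neg}$) for each $Z_k \in T$ (resp.\ $Z_k \notin T$) to certify that the suffix from $j$ does (resp.\ does not) satisfy $\varphi_k$, and this is precisely the equation $Z_k {\equiv} \varphi_k$ enforced at $j$. Unlike the $\lfr$ constructions of Sections~\ref{thm:logic-aut-1} and~\ref{thm:logic-aut2}, this gluing deliberately introduces cycles through resets (a thread of $\Aa_k$ may spawn $\Aa_j$, which may re-spawn $\Aa_k$), so the result is a full 1-clock ATA; for part~(b) it stays $\wf$ because the added conjuncts $x.q_0^{k}, x.q_0^{\neg k}$ are boolean combinations over $Q_x$, which the $\wf$ syntax allows inside both $\wedge$- and $\vee$-mode clauses, and the disjunction over $T$ is resolved at the disjunctive header of each island as in Section~\ref{thm:logic-aut2}, so modes still switch only on resets.

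The main obstacle is the correctness proof for this cyclic spawning: one has to show that a run tree of the constructed $\Aa$ on $\rho$ is accepting iff $\rho$, labelled at every position by the unique set of recursion variables it satisfies, is a model of every equation. Two points need care. First, different threads active at the same position $j$ independently guess its marking, yet any accepting run must make all such guesses agree: if thread $\Aa_k$ put $Z_p$ into its guess at $j$ while thread $\Aa_l$ did not, then $\Aa_p$ and $\Aa_p^{\neg}$ would both be spawned from $j$ and both required to accept the same suffix, which is impossible -- so the complementary spawns $\Aa_k^{\neg}$ are what force global consistency of the marking. Second, the recursion must terminate: guardedness ensures every spawned thread is launched under a strict-future modality, hence makes progress along the finite word, so the tree of spawns is finite; the equivalence of the resulting consistent labelling with the unique fixpoint then follows from Lemma~\ref{unique_fix_point}. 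Formalising this as a simultaneous induction on suffix length for the automata $\Aa_k$ and their complements $\Aa_k^{\neg}$ is the delicate part, and it mirrors, in reverse, the soundness argument sketched for Theorem~\ref{aut-tl-fixpoints1}.
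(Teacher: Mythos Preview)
Your approach is essentially the paper's: convert to a system of equations, build a 1-ATA-$\lfr$ (resp.\ $\wf$-1-ATA-$\lfr$) $\Aa_i$ and its complement $\Aa_i^{\neg}$ for each equation via the earlier theorems, and then glue using the witness-substitution transition of Section~\ref{thm:logic-aut-1}, accepting that the mutual dependence among equations destroys $\lfr$ but yields a full 1-clock ATA with initial location that of $\Aa_1$. Your correctness discussion (consistency of guesses via the complementary spawns, and termination via guardedness) goes beyond the paper's sketch and is a welcome elaboration.

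There is, however, a genuine gap in your argument for part~(b). You claim the substituted transition
\[
\delta(s,a)=\bigvee_T\Bigl(\delta_{\Aa_i}(s,(a,T))\wedge\bigwedge_{Z_k\in T}x.q_0^k\wedge\bigwedge_{Z_k\notin T}x.q_0^{\neg k}\Bigr)
\]
preserves $\wf$ because the added conjuncts lie in $\Bb(Q_x)$ and ``the disjunction over $T$ is resolved at the disjunctive header of each island.'' This is fine when $s\in Q_\vee$: then $\delta_{\Aa_i}(s,(a,T))$ is already a disjunction of $C_j$'s, and distributing gives a disjunction whose conjuncts each carry at most one free location plus some bound $x.q$'s. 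But when $s\in Q_\wedge$, the original transition is a \emph{conjunction} $D_1\wedge\cdots\wedge D_m$, and wrapping it in an outer $\bigvee_T$ produces a disjunction of conjunctions that does \emph{not} fit the $Q_\wedge$-shape of the $\wf$ definition. There is no ``disjunctive header'' available here: in a $\wf$ automaton every location of a conjunctive island, including its header, lies in $Q_\wedge$. The paper (Appendix~\ref{app:frat-aut}, invoked again in the sketch of Theorem~\ref{aut-tl-fixpoints2}) handles this case by a double complementation: negate the conjunctive island to make it disjunctive, perform the witness elimination there (which now preserves $\wf$), and negate back. Without this step, your construction for~(b) need not produce a $\wf$ automaton.
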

	\noindent{\bf{Proof Sketch}}: 
 The proof is a generalization of sections \ref{thm:logic-aut-1}, \ref{thm:logic-aut2}.
  Given any $\mu\regmtl$ or $\mu\F\regmtl$ formula
  $\varphi$, we can convert it into a system of $\regmtl$ or $\F\regmtl$ equations of the form  
  $Z_1 {\equiv} \psi_1(Z_1, \ldots,Z_m);\ldots; Z_m {\equiv} \psi_m(Z_1, \ldots,Z_m)$. In the case of (a), for all  $\psi_i$, we first construct an equivalent 1-clock ATA with loop free resets, $\Aa_{Z_i}$. As each $\Aa_{Z_i}$ is over $2^{\Sigma} \times 2^{Z_1,\ldots,Z_k}$ where each $Z_i$ is a witness of $\Aa_{Z_i}$, we can eliminate $Z_i$ from all $\Aa_{Z_j}$ by adding reset transitions to $\Aa_{Z_i}$ or $\Aa_{\neg Z_i}$ appropriately as shown in section \ref{thm:logic-aut-1}. 
  For (b), we repeat similar construction obtaining $\wf$-1-ATA-$\lfr$ for each $\psi$. 
   The only difference in (b) is to ensure that after eliminating witnesses, we retain the conjunctive-disjunctiveness of the automata. 
   The formulae $\psi_i$ and $\psi_j$ can depend on each other; $\psi_i$ can contain witness $Z_j$ 
  while $\psi_j$ can contain witness $Z_i$, unlike sections \ref{thm:logic-aut-1}, \ref{thm:logic-aut2}. Due to this circular dependence,  
  while eliminating witnesses, the resulting automaton may not have loop-free resets ($\lfr$).  
   As we need the solution to the first equation, the initial location of the constructed automaton will be the initial location of $\Aa_{Z_1}$.

\begin{theorem}
Satisfiability  of $\mu\fregmitl$ and  reachability in $\wf$ 1-clock ATA with non-punctual guards have elementary decidability.	
\end{theorem}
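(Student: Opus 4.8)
The plan is to route everything through automata, so that both assertions become a single complexity statement about $\wf$ $1$-clock ATA. By Theorem~\ref{aut-tl-fixpoints2}(b), specialised to non-punctual guards as in the Remark at the end of Section~\ref{bk2}, every $\mu\fregmitl$ sentence $\varphi$ of size $n$ translates, with at most elementary blow-up, into a $\wf$ $1$-clock ATA $\Aa_\varphi$ with only non-punctual guards such that $L(\varphi)=L(\Aa_\varphi)$ --- and in fact with only single-exponential blow-up if $\Aa_\varphi$ is assembled from the Fischer--Ladner-style closure of $\varphi$ and its equation system rather than by iterating the depth-one construction, since the DFA for each rational expression is exponential only in the number of its \emph{immediate} subformulae and plugging in the witness sub-automata is additive. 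Hence $\varphi$ is satisfiable iff $\Aa_\varphi$ has a reachable accepting configuration. Decidability of the latter is already known from \cite{Ouaknine05}; what is new is an \emph{elementary} bound, and an elementary bound for reachability in $\wf$ $1$-clock ATA with non-punctual guards immediately transfers, via the translation above, to $\mu\fregmitl$ satisfiability. So it suffices to prove the automaton statement.

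For this I would build a \emph{finite} region-configuration abstraction, refining the classical region construction so that it stays finite in the presence of alternation. First put $\Aa$ in the normal form of Section~\ref{sec:normal}, so that its locations split into conjunctive and disjunctive islands. Represent a configuration not by an unbounded multiset of pairs $(q,\nu)$, but by: for each location $q$, the word over $reg$ recording, in increasing order, the regions occupied by the currently alive clock copies in $q$, together with the ordering of fractional parts across all alive copies in all locations --- i.e.\ the region encoding of the current multiset of clocks restricted to the live copies. Time elapse and discrete transitions act on such encodings exactly as in the standard region construction, and a configuration is accepting when every alive copy sits in a final location (or none is alive). The only obstacle to finiteness is that the number of alive copies in a location can grow without bound.

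The crucial --- and hardest --- step is a \emph{merging lemma}: when every guard of $\Aa$ is non-punctual, it suffices to keep a bounded number of clock copies per location, the bound depending only on $c_{max}$ and the interval endpoints of $\Aa$. Intuitively, in a disjunctive island acceptance asks that \emph{some} thread succeed, so among the copies of a location whose clock values lie strictly between two given consecutive copies we may discard the interior ones and keep only the extremal ones: a guard $x\in I$ with $I$ non-punctual that is reachable from an interior copy within the relevant region phase is also reachable from a surrounding extremal copy, since non-punctual $I$ constrains only which integer boundaries have been crossed, not the exact value. Dually, in a conjunctive island acceptance asks that \emph{all} threads succeed, and a copy whose value is squeezed between two other copies of the same location imposes no obligation not already imposed by those two. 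Making this precise --- proving the merge sound and complete for the alternating acceptance semantics, checking that it commutes with time elapse and with discrete transitions, and handling the conjunctive/disjunctive mode switches that occur on reset transitions together with the bookkeeping of fractional-part orders --- is the main work; this is exactly the point where non-punctuality is essential, and it is known to fail for punctual guards, where a single $x\in[a,a]$ forces every pending obligation to be tracked separately and one recovers the non-primitive-recursive behaviour of Theorem~\ref{thm-basic}.

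Granting the merging lemma, every reachable configuration has size polynomial in $|\Aa|$, so there are at most exponentially many of them; reachability of an accepting configuration is then reachability in an explicitly bounded graph, solvable in space polynomial in $|\Aa|$, hence in \expspace for $\Aa$ --- equivalently, one may compile $\Aa$ into an ordinary timed automaton with a bounded number of clocks and of elementary size, as in the $\mathsf{MITL}$-to-timed-automata constructions, and invoke \textsc{PSpace}-completeness of timed-automaton emptiness. Composing with the elementary translation of the first paragraph yields elementary decidability of $\mu\fregmitl$ satisfiability, completing the argument.
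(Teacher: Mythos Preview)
Your route is genuinely different from the paper's, and the contrast is instructive. You push everything to the automaton side and attempt a direct configuration-space analysis via a merging lemma; the paper pushes everything to the \emph{logic} side. Concretely, the paper reduces both a $\mu\fregmitl$ sentence and a $\wf$ 1-ATA with non-punctual guards to a system of $\fregmitl$ equations $Z_1\equiv\psi_1;\ldots;Z_m\equiv\psi_m$ (elementary blow-up, via Theorems~\ref{aut-tl-fixpoints1} and~\ref{aut-tl-fixpoints2}). The key trick is then that, since over finite words every guarded equation has a \emph{unique} fixpoint (Lemma~\ref{unique_fix_point}), the system has a solution with $Z_1$ true at position~$1$ iff the single fixpoint-free $\fregmitl$ formula
\[
Z_1 \wedge \wB(Z_1 \leftrightarrow \psi_1) \wedge \cdots \wedge \wB(Z_m \leftrightarrow \psi_m) \wedge \wB\Bigl(\bigvee_{a\in\Sigma} a\Bigr)
\]
over the extended alphabet $2^{\Sigma}\cup 2^{\{Z_1,\ldots,Z_m\}}$ is satisfiable, with only linear blow-up. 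From there one invokes the known elementary reduction $\fregmitl\to\mitl$ from~\cite{mfcs17} and the elementary satisfiability of $\mitl$~\cite{AFH96}. No new automaton-theoretic lemma is needed.

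Your merging lemma is exactly the hard step that this detour through logic avoids, and it is where your proposal has a gap: you explicitly flag it as ``the main work'' and leave it as an intuition. The intuition is plausible and in the spirit of $\mitl$-to-timed-automata constructions, but making it precise for the alternating setting is genuinely delicate --- which copies are ``extremal'' shifts under time elapse and under resets that spawn fresh threads; the bound must survive mode switches between conjunctive and disjunctive islands at reset edges; and the sandwich argument for conjunctive islands (a value between two succeeding values also succeeds) is not immediate when each $D_i$ may contain disjuncts of the form $q'\vee\Bb(Q_x)$. If carried through, your approach would likely give a sharper complexity bound than the paper's chain of reductions; the paper's approach buys a short proof by quotation at the cost of a looser (though still elementary) bound.
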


\begin{proof}
Any $\mu\fregmitl$ formula or $\wf$ 1-clock ATA can be reduced to an equivalent system of $\fregmitl$ equations 
with elementary blow up. 
 	Given any system of equations $Z_1 {\equiv} \psi_1;\ldots;Z_m {\equiv} \psi_m$, the $\fregmitl$  formula
	 $\varphi  {\equiv} Z_1 \wedge \wB$ \footnote{$\wB \varphi$ expands to $\varphi \wedge \Box \varphi$.}
 $(Z_1 \leftrightarrow \psi_1) \wedge 
	 \ldots \wedge \wB(Z_m \leftrightarrow \psi_m) \wedge \wB(\bigvee \limits_{a \in \Sigma} (a))
	 $ over the extended alphabet $2^{\Sigma} \cup 2^{\{Z_1, \dots, Z_m\}}$	   is satisfiable iff 
	 there exists a solution to the above system of equations.     
	 
	 The blow up incurred in the construction of 
	  $\varphi$ is only linear compared to the size of the equations.  
	 	 Note that any $\fregmitl$ formula can be reduced to an $\mitl$ formula 
	 	 	 	  preserving satisfiability with a doubly exponential blow up (elementary) \cite{mfcs17}. Using 
	 	 	 	  the elementary satisfiability \cite{AFH96}
	 	 	 	  	 	 	 	  of $\mitl$, we obtain an elementary upper bound 
	 	 	 	  for  $\mu\fregmitl$. 
	\end{proof}

\section{Discussion} 
We have proposed two new structural restrictions on 1-ATA:
\begin{itemize}
\item[(1)] Loop-Free-Resets, where there are no loops involving
reset transitions, (1-ATA-$\lfr$)
\item[(2)]  Conjunctive-disjunctive partitioning, where the automaton works in purely disjunctive or conjunctive mode between
resets ($\wf$-1-ATA-$\lfr$). 
 Timing constraints only affect resets. In the disjunctive mode, the  automaton behaves like an untimed NFA and the conjunctive mode is its dual.
These structural restrictions were inspired by the quest for automata characterizations of some natural metric temporal logics.
\end{itemize}
 
One of the main contributions in this paper is the study of monadic second order logic with metric quantifiers $\qkmso$ and its subclasses.
We are able to obtain Kamp like theorems with our structural restrictions. It is interesting that
we are able to prove a 4-variable property for $\qkmso$ and $\qkmlo$. It is also  noteworthy that conjunctive-disjunctive restriction
on 1-ATA  uniformly bring the expressiveness down to the two variable fragment.
\oomit{
Our proposed classical logic $\qkmso$ is a generalization of $\mathsf{Q2MLO}$ defined in \cite{rabinovich}. In fact, by restricting $k$ to be $2$, disallowing second order quantification and restricting to non-punctual timing intervals, we exactly get $\mathsf{Q2MLO}$, which we call as "$\mathsf{Q2FO}$ with np guards". We further restrict the timed formulae in all these logics to strictly reason about future only.
}

Finally we give temporal fixpoint logics $\mu\regmtl$ and $\mu\fregmtl$ to characterize full 1-ATA 
and $\wf$-1-ATA. A proper temporal logic and classical logic characterizing the full 1-ATA is left open.
We believe that $\wf$-1-ATA is strictly less expressive than the full 1-ATA, but a formal proof will appear in the full version of this work. The proof goes by extending EF games for $\mtl$ with threshold counting \cite{fossacs16} to that for $\regmtl$.

One of the  takeaways of this paper is the fact that both $\wf$ 1-ATA and $\mu\fregmtl$ enjoy the benefits of relaxing punctuality. That is, the reachability  for $\wf$-1-ATA and satisfiability checking for $\mu\fregmtl$ restricted to non punctual timing constraints are decidable with  elementary complexity. We believe this result is important since, $\wf$ 1-ATA, to the best of our knowledge, is the first such class of timed automata which has alternations and yet the reachability  is decidable with elementary complexity.

\noindent{\bf{Related Work}} : 
B\"uchi's Theorem \cite{Buchi} showing expressive equivalence of MSO$[<]$ and DFA, as well as
Kamp's Theorem \cite{Kamp} showing the expressive equivalence of FO$[<]$ and $LTL$ are classical results.
Going on to timed languages and logics, enhancing regular expressions with quantitative timing properties was first done in \cite{TRE}. Timed regular expressions defined  in \cite{TRE} are exactly equivalent to the class of languages definable by non-deterministic timed automata, and hence not closed under negations. 
Adding regular expressions to LTL was done in \cite{psl}, \cite{psl1}, \cite{DL}. Addition of an automaton modality to $\mitl$ was done by Wilke \cite{Wilke}. Wilke's modality is equivalent to our $\freg$ modality but we also allow punctual intervals.
In \cite{ho},  pointwise $\mtl$ with ``earlier'' and ``newer'' modalities were introduced to obtain
 expressive completeness for FO$[<,+1]$ over bounded timed words.
The temporal logics $\regmtl$ and $\fregmtl$ studied in this paper were first defined in \cite{mfcs17} where their decidability was established.

Expressive completeness for timed logics and languages 
aiming at  B\"uchi-Kamp like theorems has been another prominent line of study. In the timed setting, continuous timed logics have 
been  explored more. Hirshfeld and Rabinovich 
\cite{rabin} showed expressive completeness for $\mitl$ and its counting extension with the subclasses $\mathsf{QMLO}$ and $\mathsf{Q2MLO}$ of 
FO[<,+1]. Their definition of $\mathsf{Q2MLO}$ has been adapted
to pointwise setting and generalized to $\qkmso$ in this paper.
Ouaknine, Worrell and Hunter, in their seminal paper \cite{HunterOW13}, showed expressive completeness for $\mtl$ with rational timing constants with FO [<,+1] (over timed signals). In a related work, \cite{hunter} proved that the expressive completeness carries over even by restricting to standard integer timing constants if  $\mtl$ is extended by threshold counting modality. All these expressive completeness results  were for continuous timed logics which are all undecidable.
Our paper focuses  on point-wise semantics and finite timed words. In this context, the notable result by Ouaknine and Worrell was the reduction of $\mtl[\until_I]$ to partially ordered 1-clock ATA \cite{Ouaknine05}. 
Unfortunately, the converse  does not hold and $\mtl[\until_I]$ is expressively weak. 
Going to full 1-clock ATA, Haase \emph {et al} \cite{Haase} 
  extended 1-TPTL with fixpoints, which is a hybrid between first-order logic and temporal logic, featuring variables and quantification in addition to temporal modalities (quoting \cite{HunterOW13}). They established the expressive equivalence of the two.
 Raskin studied second order extensions of $\mathsf{MITL}$ in both continuous and pointwise time \cite{raskin-thesis}.

\bibliographystyle{plain}
\bibliography{papers}

\begin{thebibliography}{10}

\bibitem{AFH96}
R.~Alur, T.~Feder, and T.~Henzinger.
\newblock The benefits of relaxing punctuality.
\newblock {\em J.ACM}, 43(1):116--146, 1996.

\bibitem{TRE}
E.~Asarin, P.~Caspi, and O.~Maler.
\newblock Timed regular expressions.
\newblock {\em J. {ACM}}, 49(2):172--206, 2002.

\bibitem{brad}
J.~Bradfield and C.~Stirling.
\newblock Modal $\mu-$ calculi.
\newblock In Blackburn, Walter, and van Benthem, editors, {\em Handbook of
  Modal Logic}, pages 721--756. Elsevier, 2006.

\bibitem{igorbrad}
J.~Bradfield and I.~Walukiewicz.
\newblock The mu-calculus and model-checking.
\newblock In E.~Clarke, T.~Henzinger, and H.~Veith, editors, {\em Handbook of
  Model Checking}. Springer-Verlag, 2015.

\bibitem{psl1}
C.~Eisner and D.~Fisman.
\newblock {\em A Practical Introduction to PSL}.
\newblock Springer, 2006.

\bibitem{EH01}
J.~Engelfriet and H.~Hoogeboom.
\newblock {MSO} definable string transductions and two-way finite-state
  transducers.
\newblock {\em {ACM} Trans. Comput. Log.}, 2(2):216--254, 2001.

\bibitem{psl}
IEEE~P1850-Standard for PSL-Property Specification~Language, 2005.

\bibitem{Haase}
C.~Haase, J.~Ouaknine, and J.~Worrell.
\newblock On process-algebraic extensions of metric temporal logic.
\newblock In {\em Reflections on the Work of C. A. R. Hoare.}, pages 283--300.
  2010.

\bibitem{DL}
J.~Henriksen and P.S. Thiagarajan.
\newblock Dynamic linear time temporal logic.
\newblock {\em Ann. Pure Appl. Logic}, 96(1-3):187--207, 1999.

\bibitem{rabin}
Y.~Hirshfeld and A.~Rabinovich.
\newblock An expressive temporal logic for real time.
\newblock In {\em MFCS}, pages 492--504, 2006.

\bibitem{ho}
Hsi{-}Ming Ho.
\newblock On the expressiveness of metric temporal logic over bounded timed
  words.
\newblock In {\em RP}, pages 138--150, 2014.

\bibitem{hunter}
P.~Hunter.
\newblock When is metric temporal logic expressively complete?
\newblock In {\em CSL}, pages 380--394, 2013.

\bibitem{HunterOW13}
P.~Hunter, J.~Ouaknine, and J.~Worrell.
\newblock Expressive completeness for metric temporal logic.
\newblock In {\em LICS}, pages 349--357, 2013.

\bibitem{Buchi}
J.R.B\"uchi.
\newblock On a decision method in restricted second-order arithmetic.
\newblock In {\em Proceedings of the 1960 Congress on Logic, Methdology and
  Philosophy of Science, Stanford Univeristy Press, Stanford}, 1962.

\bibitem{time14}
S.~N.~Krishna K.~Madnani and P.~K. Pandya.
\newblock Partially punctual metric temporal logic is decidable.
\newblock In {\em TIME}, pages 174--183, 2014.

\bibitem{Kamp}
Hans Kamp.
\newblock {\em Tense Logic and the Theory of Linear Order}.
\newblock PhD thesis, Ucla, 1968.

\bibitem{igorL}
S.~Lasota and I.~Walukiewicz.
\newblock Alternating timed automata.
\newblock {\em {ACM} Trans. Comput. Log.}, 9(2):10:1--10:27, 2008.

\bibitem{Ouaknine05}
J.~Ouaknine and J.~Worrell.
\newblock On the decidability of metric temporal logic.
\newblock In {\em LICS}, pages 188--197, 2005.

\bibitem{ow08}
J.~Ouaknine and J.~Worrell.
\newblock Some recent results in metric temporal logic.
\newblock In {\em FORMATS}, pages 1--13, 2008.

\bibitem{raskin-thesis}
Jean~Francois Raskin.
\newblock {\em Logics, Automata and Classical Theories for Deciding Real Time}.
\newblock PhD thesis, Universite de Namur, 1999.

\bibitem{fossacs16}
S.Krishna, K.~Madnani, and P.~K. Pandya.
\newblock Metric temporal logic with counting.
\newblock In {\em FoSSaCS}, pages 335--352, 2016.

\bibitem{mfcs17}
P.~K.~Pandya S.Krishna, K.~Madnani.
\newblock Making metric temporal logic rational.
\newblock In {\em MFCS}, 2017.

\bibitem{Wilke}
T.~Wilke.
\newblock Specifying timed state sequences in powerful decidable logics and
  timed automata.
\newblock In {\em FTRTFT}, pages 694--715, 1994.

\end{thebibliography}

\newpage
\appendix
\centerline {\bf {\Large Appendix}}

\section{Normal Form for 1-clock ATA}
\label{app:normal}
We start defining a homomorphism between ATA.

\paragraph*{\bf{Homomorphism in 1 clock ATA}}    
Let $\mathcal{A}^1 = (\Sigma, S^1, s^1_0, F^1, \delta^1)$ and $\mathcal{A}^2 = (\Sigma, S^2, s^2_0, F^2, \delta^2)$ 
be 1-clock ATA. We say that $\mathcal{A}^2$ is  \emph{homomorphic} 
to $\mathcal{A}^1$ (denoted $\Aa^2=h(\Aa^1)$) if there is a map $h$ from $S^1$ to $S^2$ satisfying the following.\\
(i)  The map preserves respective initial and final locations: $h(s^1_0) = s^2_0$,   
        and for any $f^1 \in F^1$, $h(f^1) = f^2$ iff $f^2 \in F^2$.\\
    (ii) The map $h$ extends in the usual way to transitions. 
   Corresponding to any transition $\delta^1(s,a)=\varphi$ where $\varphi \in \Phi(S^1 \cup X)$, we obtain 
   the transition  $\delta^2(h(s),a)=h(\varphi)$, where  $h(\varphi)$ is obtained by substituting 
   all occurrences of locations $s \in S^1$ in $\varphi$ with $h(s)$.

\begin{example}
\label{eg2}
The 1-clock ATA $\Aa$ in Example \ref{eg1} is homomorphic to the 1-clock ATA  
$\mathcal{B}=(\{a,b\}, \{s_0,s_1,s_2,s_3\}$, $s_0$, $\{s_0, s_2,s_3\}$,$\delta_{\mathcal{B}})$ with transitions 
	$\delta_{\mathcal{B}}(s_0,b)=s_0,\delta_{\mathcal{B}}(s_0,a)=(s_0 \wedge x.s_1) \vee s_2,$
	$\delta_{\mathcal{B}}(s_1,a)=(s_1 \wedge x<1) \vee (x>1)=\delta_{\mathcal{B}}(s_1,b),$ and 
	$\delta_{\mathcal{B}}(s_2,b)=s_3, \delta_{\mathcal{B}}(s_2,a)=\bot$,    
	$\delta_{\mathcal{B}}(s_3,b)=s_2, \delta_{\mathcal{B}}(s_3,a)=\bot$, under the map $h(s_0)=t_0, h(s_1)=t_1, h(s_2)=t_2=h(s_3)$. 
		$\Aa=h(\Bb)$. 
\end{example}

\begin{lemma}
\label{lem:hom}
Let $\Aa$ and $\Bb$ be 1-clock ATA such that $\Bb=h(\Aa)$. Then $L(\Aa)=L(\Bb)$. 	
\end{lemma}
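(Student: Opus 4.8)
The plan is to prove Lemma \ref{lem:hom} by showing that the homomorphism $h$ induces a surjection between runs of $\Aa$ and runs of $\Bb$ that preserves acceptance, so that a timed word $\rho$ has an accepting run in one automaton iff it has one in the other. The key observation is that $h$ acts on locations but commutes with everything else: the clock $x$, the intervals, the boolean structure of transition formulae, and the binding construct $x.\varphi$. Concretely, I would first extend $h$ to configurations by setting $h(\Cc) = \{(h(s),\nu) \mid (s,\nu)\in\Cc\}$, noting that $h(\Cc_0^{\Aa}) = \{(h(s_0^{\Aa}),0)\} = \{(s_0^{\Bb},0)\} = \Cc_0^{\Bb}$ by condition (i). I would then check the two commutation facts that make the induction go through: $h(\Cc+t) = h(\Cc)+t$ for every time elapse $t$ (immediate, since $h$ does not touch valuations), and $h(\delta^{\Aa}(\Cc,a)) = \delta^{\Bb}(h(\Cc),a)$ for every discrete step (this is where condition (ii) is used — applying $\delta^{\Aa}(s,a)=\varphi$ and then $h$ gives $h(\varphi)$, which is exactly $\delta^{\Bb}(h(s),a)$, and satisfaction of the resulting formula over a configuration is unaffected by renaming locations consistently, since clock constraints $x\in I$, $\top$, $\bot$ and the boolean connectives are untouched).

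From these facts, a straightforward induction on the length of a run shows: if $\Cc_0^{\Aa}\xrightarrow{t_0}\cdots\to\Cc_m$ is a run of $\Aa$ on $\rho$, then $h(\Cc_0^{\Aa})\xrightarrow{t_0}\cdots\to h(\Cc_m)$ is a run of $\Bb$ on $\rho$, and conversely every run of $\Bb$ on $\rho$ arises this way from some run of $\Aa$ (surjectivity of $h$ on locations lets us lift each configuration; more carefully, one shows by induction that for any run of $\Bb$ from $\Cc_0^{\Bb}$ there is a run of $\Aa$ from $\Cc_0^{\Aa}$ whose image it is — the disjunctive-normal-form choices in $\delta^{\Bb}(h(s),a)$ are in bijection with those of $\delta^{\Aa}(s,a)$ under $h$). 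Finally, I would verify that acceptance is preserved: a configuration $\Cc$ of $\Aa$ is accepting iff every $(s,\nu)\in\Cc$ has $s\in F^{\Aa}$, which by condition (i) holds iff every $(h(s),\nu)\in h(\Cc)$ has $h(s)\in F^{\Bb}$, i.e. iff $h(\Cc)$ is accepting; and the empty configuration maps to the empty configuration. Combining, $\rho\in L(\Aa)$ iff $\rho\in L(\Bb)$.

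The main obstacle, such as it is, is the converse direction: because $h$ need not be injective (in Example \ref{eg2}, both $s_2$ and $s_3$ map to $t_2$), a single configuration of $\Bb$ may be the image of several configurations of $\Aa$, and a run of $\Bb$ must be lifted to \emph{some} run of $\Aa$ rather than canonically. The point to make carefully is that this lifting always succeeds: at each discrete step, a disjunct chosen in $\delta^{\Bb}(q,a)$ for a location $q = h(s)$ corresponds (via $h$ applied to $\delta^{\Aa}(s,a)$) to at least one disjunct of $\delta^{\Aa}(s,a)$, so one can always pick a preimage, and the binding constructs $x.p$ lift because $h$ commutes with resets. Since the run is finite, iterating this choice finitely many times produces the required run of $\Aa$. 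Everything else is bookkeeping on the inductive invariant $\Cc_i^{\Bb} = h(\Cc_i^{\Aa})$.
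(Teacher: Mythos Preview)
Your proposal is correct and follows essentially the same approach as the paper: extend $h$ to configurations, push runs of $\Aa$ forward to runs of $\Bb$, and for the converse lift runs of $\Bb$ back to $\Aa$. You are in fact more careful than the paper on the converse direction, where the paper simply says ``apply the inverse map of $h$'' without addressing non-injectivity; your discussion of lifting choices disjunct-by-disjunct is the right way to fill that in. One small slip: the DNF disjuncts of $\delta^{\Bb}(h(s),a)=h(\delta^{\Aa}(s,a))$ are in \emph{surjection} (not bijection) with those of $\delta^{\Aa}(s,a)$ under $h$, but surjection is exactly what the lifting argument needs, so nothing breaks.
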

\begin{proof}
Let $\Aa$ and $\Bb$ be 1-clock ATA such that $\Bb=h(\Aa)$. 
Then we know that each location $s_b$ in $\Bb$ is the map of some location $s_a$ in $\Aa$; moreover,
the initial and final locations of $\Aa$ are mapped to 
initial, final locations respectively in $\Bb$. 
Let $w \in L(\Aa)$, and let $s_0^{\Aa}$ be the initial location of $\Aa$. 
Starting from the initial configuration $\Cc_0=\{(s_0^{\Aa},0)\}$, 
there is a run on $w$ in $\Aa$ which ends in an accepting configuration. 
In $\Bb$, we start with $\Dd_0=\{(h(s_0^{\Aa}),0)\}$. Subsequent configuratons obtained are such that  
$\Dd_i=\{(h(s),t) \mid (s,t) \in \Cc_i\}$. If $w$ was accepted in $\Aa$ 
due to $\Cc_n$ being accepting then we also have $\Dd_n$ accepting due to the property of homomorphisms. 
The converse when $w \in L(\Bb)$ is similar, since we can apply the inverse map of $h$ and draw the same conclusion. 
\end{proof}

\paragraph*{\bf{Normalization of 1-clock ATA}}

Next, we show that for every 1-clock ATA $\mathcal{A}$,  there exists 
a 1-clock ATA $\norm(\Aa)$ in normal form such that $\Aa$ is  homomorphic to $\norm(\Aa)$. 
Let $\mathcal{A} = (\Sigma, S, s_0, F, \delta)$ with $S=\{s_0, s_1, \dots, s_k\}$. 
The normalized ATA $\norm(\Aa)=(\Sigma, S', s_0', F', \delta')$ is as follows:

\begin{itemize}
    \item $S' = \{s^r_i \mid s_i \in S\} \cup \{s_i^{nr,j} \mid s_i {\in} S, 0 \leq j \leq k\}$ 
      \item For every $s_i\in S$ and $a \in \Sigma$, if $\delta(s_i,a) = \varphi$,  
       then for all $0 \leq j \leq k$, $\delta'(s_i^{nr,j},a) = \varphi'$ 
      and $\delta'(s^r_{i},a) = \varphi''$ where $\varphi', \varphi''$ are obtained 
      as follows.
      \begin{itemize}
      \item All locations $s_h$ occurring in $\varphi$ without the binding construct $x.$  
      are replaced in $\varphi'$ with $s_{h}^{nr,j}$, and replaced in $\varphi''$ with $s_{h}^{nr,i}$;
      \item All locations $s_h$  occurring in $\varphi$ as $x.s_h$, with the binding construct $x.$  are replaced in $\varphi', \varphi''$ with 
     $x.s^r_{h}$; 
         \end{itemize}
        \item $s'_0 = s^r_{0}$, $F' = \{s^r_i,s_i^{nr,j}|s_i \in F \wedge 0\leq j \leq k\}$
        \end{itemize}

As we will see, the intuition behind the normalization is that we can find 
disjoint sets $P_1, \dots, P_n$ which partition the set of reachable locations where  
 $P_i=\{s_k^{nr,i}, s_i^r \mid 1 \leq i \leq n\}$. The initial location of a partition $P_i$ is $s^r_i$.
 
\begin{lemma}
$\Aa$ is homomorphic to $\norm(\Aa)$, a 1-clock ATA in normal form.	
\label{lem:normal}
\end{lemma}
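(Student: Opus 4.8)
The plan is to exhibit the natural ``collapsing'' map and then verify, by a direct structural inspection of the transition function $\delta'$, that $\norm(\Aa)$ satisfies every clause of the normal-form definition. Concretely, I would define $h : S' \to S$ by $h(s_i^r) = s_i$ and $h(s_i^{nr,j}) = s_i$ for all $0 \le i,j \le k$; thus $h$ simply forgets which copy of a location we are looking at. The first task is to check the two clauses of the homomorphism definition with $\Aa^1 = \norm(\Aa)$ and $\Aa^2 = \Aa$. Clause (i) is immediate: $h(s_0') = h(s_0^r) = s_0$, and since $F' = \{s_i^r, s_i^{nr,j} \mid s_i \in F\}$, the map $h$ sends $F'$ onto $F$ and sends $S' \setminus F'$ into $S \setminus F$. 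For clause (ii) the key observation is that $h$ inverts the two substitutions used to build $\delta'$: a free occurrence of a location $s_h$ in $\delta(s_i,a)$ is turned into $s_h^{nr,i}$ inside $\delta'(s_i^r,a)$ and into $s_h^{nr,j}$ inside $\delta'(s_\ell^{nr,j},a)$, and $h$ sends either back to $s_h$; an occurrence $x.s_h$ is turned into $x.s_h^r$, and $h$ sends it back to $x.s_h$. Hence $h(\delta'(s',a)) = \delta(h(s'),a)$ for every $s' \in S'$ and $a \in \Sigma$, which is exactly what clause (ii) demands, so $\Aa = h(\norm(\Aa))$.

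The second task is to argue that $\norm(\Aa)$ is in normal form. Take $S_r = \{s_i^r \mid 0 \le i \le k\}$, $S_{nr} = \{s_i^{nr,j} \mid 0 \le i,j \le k\}$, and, for each $0 \le i \le k$, the island $P_i = \{s_i^r\} \cup \{s_h^{nr,i} \mid 0 \le h \le k\}$. These sets partition $S'$; moreover $s_0' = s_0^r \in S_r$, each $P_i$ has the unique header $s_i^r \in S_r$, and $P_i - \{s_i^r\} \subseteq S_{nr}$. For the condition on transitions I would first invoke the standing assumption that all transitions are kept in disjunctive normal form, so a transition of $\norm(\Aa)$ has the shape $C_1 \vee \dots \vee C_n$ with $C_l = x \in I \wedge p_1 \wedge \dots \wedge p_m \wedge x.q_1 \wedge \dots \wedge x.q_r$. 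By construction every bound location produced by $\delta'$ is written $x.s_h^r$, so each $q_l \in S_r$, giving clause (a). For clause (b) one splits on the source location: if $s' = s_i^r$ then the free locations of $\delta'(s_i^r,a)$ are exactly the $s_h^{nr,i}$, all in $P_i - \{s_i^r\}$; and if $s' = s_h^{nr,i} \in P_i$ then the free locations of $\delta'(s_h^{nr,i},a)$ are the $s_\ell^{nr,i}$, again in $P_i - \{s_i^r\}$. This exhausts the normal-form requirements, and combined with Lemma~\ref{lem:hom} it also re-proves $L(\Aa) = L(\norm(\Aa))$.

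I do not anticipate a genuine obstacle here; the lemma is essentially a bookkeeping verification. The one place that needs a little care is clause (b): one must keep straight that the second superscript on a non-reset copy $s_h^{nr,i}$ records which island the copy lives in, and that $\delta'$ is defined precisely so this superscript is propagated consistently --- it is set to the header index $i$ in $\delta'(s_i^r,\cdot)$ and left unchanged in $\delta'(s_h^{nr,i},\cdot)$ --- so that non-reset steps never leave the current island while every reset step enters some header $s_j^r \in S_r$.
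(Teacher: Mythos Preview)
Your proposal is correct and follows essentially the same approach as the paper: define the collapsing map $h(s_i^r)=h(s_i^{nr,j})=s_i$, note that it witnesses $\Aa=h(\norm(\Aa))$, and then exhibit the partition $S_r,S_{nr}$ and the islands $P_i=\{s_i^r\}\cup\{s_h^{nr,i}\}$ to verify the normal-form clauses. Your write-up is in fact more explicit than the paper's own sketch in checking clause~(ii) of the homomorphism and clause~(b) of the normal form, but the underlying argument is identical.
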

\begin{lemma}
 It can be seen that $\Aa$ is homomorphic to $\norm(\Aa)$ ($\Aa=h(\norm(\Aa))$) according to the map
        $h(s^r_i)=h(s_i^{nr,j})=s_i$ for all $0 \leq i, j \leq k$. 	
 To see that $\norm(\Aa)$ is in normal form, consider the partition $S^r=\{s_i^r \mid s_i \in S\}$ and 
 $S^{nr}=\{s_i^{nr,j} \mid s_i \in S, 0 \leq j \leq k\}$. Clearly, locations of $S^r$ appear        
        in transitions attached to the binding construct $x$. while locations of $S^{nr}$ always appear in transitions 
        without the binding construct $x$. Further, 
        the set of locations of $\norm(\Aa)$ can be partitioned into $P_0 \cup \dots \cup P_k$ where 
        $P_j=\{s^r_j, s_i^{nr,j} \mid 0 \leq j \leq k\}$.  For any $s \in P_j$, it is clear from the transitions $\delta'(s,a)=\varphi$ that 
        $s' \in P_j$ for any $s'$ in $\varphi$ iff $s'$ is free in $\varphi$.  Hence, the locations 
        of $P_j$ are either $s_j^r$, (the initial location of $P_j$ which is obtained 
        when $s_j^r$ occurs bound in some transition $\delta(s',a)$ with $s' \in P_k$ for some $k \neq j$), 
         or $s_i^{nr,j}$         where $nr$ represents that location $s_i$ is free in the transition. 
\end{lemma}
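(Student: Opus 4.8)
The plan is to prove the lemma in two steps. First I would check that the map $h$ given by $h(s_i^r) = h(s_i^{nr,j}) = s_i$ (for all $0 \le i,j \le k$) satisfies the definition of a homomorphism with domain $\norm(\Aa)$, i.e. $\Aa = h(\norm(\Aa))$; Lemma~\ref{lem:hom} then yields $L(\Aa) = L(\norm(\Aa))$ as a by-product. Second, I would exhibit the data --- the partition of the locations of $\norm(\Aa)$ into $S_r$ and $S_{nr}$ together with the islands $P_0, \dots, P_k$ --- witnessing that $\norm(\Aa)$ meets the two bullet points of the normal-form definition.

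For the homomorphism, the three conditions are checked directly against the construction of $\norm(\Aa)$. The initial-location condition holds since $s_0' = s_0^r$ and $h(s_0^r) = s_0$. The final-location condition holds since $F' = \{s_i^r, s_i^{nr,j} \mid s_i \in F,\ 0 \le j \le k\}$, so a location of $\norm(\Aa)$ lies in $F'$ exactly when its $h$-image lies in $F$. For the transition condition, recall that $\norm(\Aa)$ sets $\delta'(s_i^{nr,j}, a) = \varphi'$ and $\delta'(s_i^r, a) = \varphi''$, where $\varphi', \varphi''$ are obtained from $\varphi = \delta(s_i, a)$ by replacing each free occurrence of a location $s_h$ by $s_h^{nr,j}$ (respectively $s_h^{nr,i}$) and each bound occurrence $x.s_h$ by $x.s_h^r$. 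Since $h$ maps every copy $s_h^{nr,\cdot}$ and $s_h^r$ back to $s_h$, and commutes with $\wedge$, $\vee$ and the binding construct $x.(\cdot)$, we get $h(\varphi') = h(\varphi'') = \varphi$, i.e. $\delta(h(s), a) = h(\delta'(s, a))$ for every location $s$ of $\norm(\Aa)$. This is exactly condition (ii), and together with (i) it gives $\Aa = h(\norm(\Aa))$; Lemma~\ref{lem:hom} then delivers $L(\Aa) = L(\norm(\Aa))$.

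For the normal-form part, set $S_r = \{s_i^r \mid s_i \in S\}$ and $S_{nr} = \{s_i^{nr,j} \mid s_i \in S,\ 0 \le j \le k\}$, so $s_0' = s_0^r \in S_r$; this is the first bullet. For the islands, put $P_j = \{s_j^r\} \cup \{s_i^{nr,j} \mid 0 \le i \le k\}$: these are pairwise disjoint, they cover the location set, each $P_j$ has the unique header $s_j^r \in S_r$, and $P_j \setminus \{s_j^r\} \subseteq S_{nr}$. It remains to verify the two transition sub-conditions. For (a): in every transition of $\norm(\Aa)$ each bound occurrence has the form $x.s_h^r$ with $s_h^r \in S_r$, by the substitution rule $x.s_h \mapsto x.s_h^r$. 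For (b): if $s \in P_j$, then $s$ is either $s_j^r$ or some $s_i^{nr,j}$, and in both cases the free locations occurring in $\delta'(s,a)$ are exactly the copies $s_h^{nr,j}$ of the free locations $s_h$ of the corresponding transition of $\Aa$ --- the second superscript always being the island index $j$ --- so each of them lies in $P_j \setminus \{s_j^r\}$. Hence $\norm(\Aa)$ is in normal form.

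There is no genuine obstacle here; one should only double-check that $\norm(\Aa)$ remains a legitimate $1$-clock ATA (normalization merely renames location atoms, so transitions stay in disjunctive normal form and no clocks or guards are added), and keep the index bookkeeping straight: every non-reset (free) occurrence inside a transition out of $P_j$ must carry the superscript $nr,j$, so that control cannot leave $P_j$ without a reset, while every reset occurrence must be redirected to a header $s_h^r$, so that an island is entered only through its header. Both properties are hard-wired into the definition of $\delta'$, which is why the paper calls the lemma straightforward; the CNF variant noted in the footnote is handled identically, reading ``free location'' for ``free occurrence'' and ``bound location'' for ``bound occurrence''.
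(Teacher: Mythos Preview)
Your proposal is correct and follows essentially the same approach as the paper: verify that the map $h(s_i^r)=h(s_i^{nr,j})=s_i$ is a homomorphism by checking the initial, final, and transition conditions against the construction of $\norm(\Aa)$, and then exhibit the partition $S_r,S_{nr}$ and the islands $P_j$ to witness the normal-form conditions. Your write-up is simply more explicit than the paper's terse sketch, spelling out each clause of the homomorphism and normal-form definitions; in particular your careful check that $h(\varphi')=h(\varphi'')=\varphi$ and your separation of sub-conditions (a) and (b) make the argument easier to follow, but there is no substantive difference in strategy.
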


\begin{example}
The 1-clock ATA in Example \ref{eg1} is in normal form. We have the partition $S_r=\{t_1\}$, 
$S_{nr}=\{t_0, t_2\}$ and also  $S=P_1 \cup P_2$ with $P_1=\{t_0, t_2\}$ and $P_2=\{t_1\}$. 
The automaton in Example \ref{eg2} is also in normal form. 
\end{example}

\section{Proof of Lemma \ref{untime1}}
\label{app:untime1}
First we describe the construction of $\Aa(P).$
Consider any transition $\delta(s,a)=C_1 \vee \dots \vee C_m$ in the 1-clock reset-free ATA $P$.   
Let $C_{k_1}, \dots, C_{k_m}$ be  clauses containing $x \in I$
for some interval $I$. We consider intervals $I$ in the region form $[0,0] (0,1), \dots, (c_{max}, \infty)$, where 
$c_{max}$ is the maximal constant used in $\Aa$. 
We rewrite $\delta(s,a)$ as  $\delta(s,(a,I))=C'_{k_1} \vee \dots \vee C'_{k_m}$ where $C'_{k_j}$ is obtained from $C_{k_j}$ by  removing the conjunct $x \in I$. Depending on the number of intervals $I$ that appear in $\delta(s,a)$, we obtain transitions  $\delta(s,(a,I))$ by suitably combining clauses that share the same interval.  
   
  The above rewrite of transitions, 
 expands the alphabet to $\Gamma \times reg$, where 
 $reg$ is a set of intervals $[0,0], (0,1), \dots,  (c_{max}, \infty)$ and 
 $\Gamma=2^{\Sigma} \backslash \emptyset$.  
    This rewrite results in making $P$  an 
  untimed alternating finite automaton (AFA) 
  over the \emph{interval alphabet} $\Gamma \times reg$.  
  Let $\Aa(P)$ denote the AFA  with initial location $s_0$ (same as the initial location of $P$).

   The language of $P$ ($L(P)$)
consists of all timed words that have a run from initial location $s_0$ to a  final location 
$q$ of $P$. 
Let $w=(a_1,t_1)\dots (a_m,t_m)$ be a timed word which has a run starting at $\Cc_0=\{(s,0)\}$, where 
$s_0$ is the initial location of  $P$, to an accepting 
configuration $\Cc_m$ in $P$. The run of $w$ on $P$ is 
$\Cc_0 \stackrel{t_1}{\rightarrow} \Cc_0+t_1 \stackrel{a_1}{\rightarrow}\Cc_1 \stackrel{t_2-t_1}{\rightarrow}
\dots \stackrel{t_m-t_{m-1}}{\rightarrow}\Cc_{m-1}+(t_m-t_{m-1}) \stackrel{a_m}{\rightarrow}\Cc_m$. 
Let $t_j \in I_j$.
 By construction of $\Aa(P)$, 
each transition $\delta(s,a_j)=(x \in I_j \wedge \psi) \vee C$ of $P$ has been 
translated into $\delta'(s, (a_j, I_j))= \psi$ (wlg we assume that $C$ has no occurrence of $x \in I_j$). 
Let $\Dd_0=\{s_0\}$.
The run in $P$ now translates into the run $\Dd_0 \stackrel{(a_1,I_1)}{\rightarrow}\Dd_1 \stackrel{(a_2,I_2)}{\rightarrow}
\Dd_2 \dots \stackrel{(a_m, I_m)}{\rightarrow}\Dd_m$, where 
$\Dd_0=\{s \mid (s,0) \in \Cc_0\}$, and 
$\Dd_j=\{s \mid (s,t) \in \Cc_j, t \in I_j\}$, $j \geq 1$. 
Since all locations in $\Cc_m$ are accepting, $\Dd_m$ is an accepting configuration in $\Aa(P)$ accepting
$(a_1,I_1)\dots (a_m,I_m)$.  

Conversely, whenever a good word $(a_1,I_1)\dots (a_m,I_m)$ is accepted by $\Aa(P)$, 
we have an accepting run $\Dd_0 \stackrel{(a_1,I_1)}{\rightarrow}\Dd_1 \stackrel{(a_2,I_2)}{\rightarrow}
\Dd_2 \dots \stackrel{(a_m, I_m)}{\rightarrow}\Dd_m$ as above. By construction of $\Aa(P)$, we obtain a run  
$\Cc_0 \stackrel{t_1}{\rightarrow} \Cc_0+t_1 \stackrel{a_1}{\rightarrow}\Cc_1 \stackrel{t_2-t_1}{\rightarrow}
\dots \stackrel{t_m-t_{m-1}}{\rightarrow}\Cc_{m-1}+(t_m-t_{m-1}) \stackrel{a_m}{\rightarrow}\Cc_m$ 
in $P$ on a word $(a_1,t_1)\dots (a_m,t_m)$ with $t_j \in I_j$
for all $j$. Here, $\Cc_0=\{(s,0) \mid s \in \Dd_0\}$, and 
$\Cc_j=\{(s,t) \mid s \in \Dd_j, t \in I_j\}$ for $j \geq 1$.  
All words $(a_1,t_1)\dots (a_m,t_m)$ with $t_j \in I_j$ will be accepted by $P$, since 
$\Cc_m=\{(s,t) \mid s \in \Dd_m, t\in I_m\}$, and all locations $s$ in $\Dd_m$ are final.

\section{An Example Illustrating Theorem \ref{aut-tl-1}}
\label{app:eg}

\begin{example}
We demonstrate the technique on an example. Consider the timed language consisting of all strings where every $a$ has an even number of $b$'s 
at a distance (1,2) from it. Let the alphabet be $\Sigma=\{a,b\}$. In this example, at any time point, exactly one symbol 
of $\Sigma$ is read. 
This language is accepted by the 
$\lfr$  1-clock ATA $\Aa=(\{a,b\}, \{s_0, s_1, s_2\}, \{s_0\}, \{s_0, s_1\}, \delta)$ with transitions 
\begin{enumerate}
\item $\delta(s_0,a)=s_0 \wedge x.s_1$, $\delta(s_0,b)=s_0$, 
\item  $\delta(s_1, a)=s_1$, 
 $\delta(s_1, b)=(s_1 \wedge x \leq 1) \vee (s_2 \wedge x \in (1,2)) \vee x \geq 2$, 
\item  $\delta(s_2,a)=s_2$, $\delta(s_2,b)=(s_1 \wedge x \in (1,2)) \vee s_2$. 
	
\end{enumerate}

 Note that each of the transitions can be easily made complete with respect to the 
 clock constraints : that is, from each location, 
 for each symbol $a \in \Sigma$ and each interval \\ $I \in \{[0,0], (0,1), [1,1], (1,2), [2,2], (2, \infty)\}$, we 
 have a transition. For instance, the transition $\delta(s_2,b)$ can be easily completed as 
  $(s_1 \wedge x \in (1,2)) \vee (s_2 \wedge x \leq 1) \vee (s_2 \wedge x \geq 2)$.

As a first step, we obtain $\norm(\Aa)$, the normalized form of $\Aa$. \\
$\norm(\Aa)=(\{a,b\}, \{s^r_0, s^{nr,0}_0, s^r_1, s^{nr,1}_1, s^{nr,1}_2\}, \{s_0^r\}, \{s_0^r, s_0^{nr,0}, s_1^r, s_1^{nr,1}\}, \delta')$  
with 
\begin{enumerate}
\item $\delta'(s^r_0,a)=s^{nr,0}_0 \wedge x.s^r_1$, $\delta'(s^r_0,b)=s^{nr,0}_0$,
\item $\delta'(s^{nr,0}_0,a)=s^{nr,0}_0 \wedge x.s^r_1$, $\delta'(s^{nr,0}_0,b)=s^{nr,0}_0$,
\item  $\delta'(s^r_1, a)=s^{nr,1}_1$,  $\delta'(s^r_1, b)=(s^{nr,1}_1 \wedge x \leq 1) \vee (s^{nr,1}_2 \wedge x \in (1,2)) \vee x \geq 2$, 
\item $\delta'(s^{nr,1}_1, a)=s^{nr,1}_1$,  $\delta'(s^{nr,1}_1, b)=(s^{nr,1}_1 \wedge x \leq 1) \vee (s^{nr,1}_2 \wedge x \in (1,2)) \vee x \geq 2$,
\item  $\delta'(s^{nr,1}_2,a)=s^{nr,1}_2$, $\delta'(s^{nr,1}_2,b)=(s^{nr,1}_1 \wedge x \in (1,2)) \vee s^{nr,1}_2$. 
	\end{enumerate}
It can be seen that there are disjoint sets $P_0=\{s^r_0, s^{nr,0}_0\}$,
$P_1=\{s^r_1, s^{nr,1}_1, s^{nr,1}_2\}$, and once a transition 
leaves $P_0$ and enters $P_1$, then it cannot come back to $P_0$. 
$P_1$ is thus a tail island. 
 We rewrite the transitions as follows.

Recall that we expand the alphabet $\Sigma$ to $\Sigma \times \{P_1,\emptyset\} \times \Ii$ 
for $\Aa(P_0)$ and to $\Sigma \times \Ii$ for $\Aa(P_1)$. 
 $\Aa(P_i)$ represents the automaton consisting 
of locations of $P_i$, and transitions between locations of $P_i$. 
All transitions between locations of $\Aa(P_0), \Aa(P_1)$ are reset-free, and hence, 
$\Aa(P_0)$ is an untimed alternating automaton over the alphabet $\Sigma \times \{P_1, \emptyset\} \times \Ii$, 
while $\Aa(P_1)$ is an untimed alternating automaton over the alphabet $\Sigma   \times \Ii$.  
The initial location of $\Aa(P_0)$ is $s_0^r$, while the initial location 
of $\Aa(P_1)$ is $s_1^r$. 
The final locations of  $\Aa(P_0)$ are 
$s_0^r, s_0^{nr,0}$, while the final locations of 
$\Aa(P_1)$ are $s_1^r, s_1^{nr,1}$.

\noindent The transitions in $\Aa(P_0)$ are
\begin{enumerate}
\item $\delta'(s^r_0,(a, P_1,[0, \infty)))=s^{nr,0}_0$, $\delta'(s^r_0,(b, \emptyset, [0, \infty)))=s^{nr,0}_0$,
\item $\delta'(s^{nr,0}_0,(a, P_1,[0, \infty)))=s^{nr,0}_0$, $\delta'(s^{nr,0}_0,(b, \emptyset, [0, \infty)))=s^{nr,0}_0$. 
\end{enumerate}
Clearly, the language accepted by 
$\Aa(P_0)$ is 
$$((a,P_1,[0,1]+(b,\emptyset,[0,1]))^*(a,P_1,(1,2)+(b,\emptyset,(1,2)))^*(a,P_1,[2, \infty))^*+(b, \emptyset,[2, \infty)))^*$$

\noindent The transitions in $\Aa(P_1)$ are 
\begin{enumerate}
\item  $\delta'(s^r_1, (a, [0, \infty)))=s^{nr,1}_1$,  $\delta'(s^r_1, (b, [0,1]))=s^{nr,1}_1$, 
\item $\delta'(s^r_1, (b,  (1,2)))=s^{nr,1}_2$, $\delta'(s^r_1, (b,  [2,\infty))) =\top$,
\item $\delta'(s^{nr,1}_1, (a,  [0, \infty)))=s^{nr,1}_1$,  $\delta'(s^{nr,1}_1, (b, [0,1]))=s^{nr,1}_1$,
\item $\delta'(s^{nr,1}_1, (b,  (1,2)))=s^{nr,1}_2$, $\delta'(s^{nr,1}_1, (b,  [2,\infty))) =\top$,
\item  $\delta'(s^{nr,1}_2,(a,  [0, \infty)))=s^{nr,1}_2$, $\delta'(s^{nr,1}_2,(b,(1,2)))=s^{nr,1}_1$,
\item $\delta'(s^{nr,1}_2,(b,[0,1]))=s^{nr,1}_2$, $\delta'(s^{nr,1}_2,(b,[2,\infty)))=s^{nr,1}_2$.  
	\end{enumerate}

The language accepted by $\Aa(P_1)$ is 
$$(a+b,[0,1])^*+ (a+b,[0,1])^*[(ba^*b,(1,2))]^*+
(a+b,[0,1])^*[(ba^*b,(1,2))]^*[(b+a,[2,\infty))]^*$$
 We obtain the $\regmtl$ formula $\psi_1$  for  $\Aa(P_1)$  as 
$$\regm_{([0,1]}(a+b)^* \vee \regm_{([0,1]}(a+b)^*\regm_{(1,2)}[ba^*b]^*
\vee \regm_{([0,1]}(a+b)^*\regm_{(1,2)}[ba^*b]^*\regm_{[2,\infty)}(a+b)^*$$ while the formula 
$\psi_0$ for $\Aa(P_0)$ is 
$$\regm_{[0,1]}((a,P_1)+b)^*\vee  \regm_{[0,1]}((a,P_1)+b)^*
\regm_{(1,2)}(((a,P_1)+b)^* \vee$$
$$ \regm_{[0,1]}((a,P_1)+b)^*
\regm_{(1,2)}(((a,P_1)+b)^* \regm_{[2,\infty)}(((a,P_1)+b)^*$$

To obtain the correct formula $\psi_0$, we 
replace the symbols $(a,P_1)$ with $a \wedge \psi_1$, giving the formula 
$$\regm_{[0,1]}(a \wedge \psi_1+b)^*\vee  \regm_{[0,1]}(a \wedge \psi_1+b)^*\regm_{(1,2)}((a \wedge \psi_1+b)^* \vee$$
$$\regm_{[0,1]}(a \wedge \psi_1+b)^*\regm_{(1,2)}((a \wedge \psi_1)+b)^* \regm_{[2,\infty)}((a \wedge \psi_1+b)^*$$

Note that since every $a$ is conjuncted with $\psi_1$, the subformula $\regm_{(1,2)}[ba^*b]^*$ of $\psi_1$ ensures that 
if there are non-zero $b$'s at distance (1,2) from $a$, it will be even. 
 \end{example}

\subsection{Example Illustrating Section \ref{thm:logic-aut-1}}
\label{app:logic-aut-eg}
Let $\psi_{3} = \regm_{(0,1)} (d.\regm_{(0,1)}[(a+ \reg_{[1,1]}b^+)^*])$ over $\Sigma=\{a,b,c\}$. 
We rewrite $\psi_3$ as 
$\psi_3=\regm_{(0,1)} (d.\psi_2)$, where 
$\psi_2=\regm_{(0,1)}(a+\psi_1)^*$,  and $\psi_1=\reg_{[1,1]}b^+$. 
   We have 3 formulae here $\psi_1, \psi_2, \psi_3$, and hence 
 3 witness variables $Z_1, Z_2$ and $Z_3$. Let $\Zz_i$ be any subset of 
 $\Zz=\{Z_1, Z_2, Z_3\}$ containing $Z_i$ for $i=1,2,3$. 
Let $\Gamma=2^{\Sigma}\backslash \emptyset \times \Zz$. 
Since we only have single letters from $\Sigma$ true at any point  in the formula 
$\psi_3$, we define 
 for $a \in \Sigma$,  $\Gamma_a \subseteq a \times \Zz$.  
 For each $i=1,2,3$, let $\Gamma_i \subseteq \Sigma \times \Zz_i$. 
 Let $\Sigma_a$ denote any subset of $\Sigma$ containing $a \in \Sigma$.

 \begin{enumerate}
 \item The base case applies to $\psi_1$ and one can construct a $\lfr$ 1-clock 
 ATA $\Aa_{Z_1}$ equivalent to $\psi_1$, and use $Z_1$ as a witness variable for 
 $\psi_1$, wherever it appears. The DFA $D_1$ accepting $\re=b^+$ is as follows.
  $\delta_1(q_0,\Sigma_b)=q_1,  \delta_1(q_1,\Sigma_b)=q_1$, and $q_1$ is a final location
 of $D_1$, and $q_0$ is its initial location. 
 
 The automaton $\Aa_{Z_1}$ has locations $\{q_{init}, q_{\chec}, q_0, q_1, q_f\}$ with 
 $q_f$ as its final location and $q_{init}$ as its initial location. The transitions 
 $\delta_{Z_1}$  are as follows. 
 \begin{enumerate}
 \item  $\delta_{Z_1}(q_{init},\alpha)=x. q_{\chec}$, for $\alpha \in \Sigma_a \cup \Sigma_b \cup \Sigma_c$,
 \item $\delta_{Z_1}(q_{\chec}, \Sigma_b)=((x=1) \wedge q_1) \vee q_{\chec}$, 
 \item $\delta_{Z_1}(q_{\chec}, \alpha)= x < 1 \wedge q_{\chec}$, for $\alpha \in \Sigma_a \cup \Sigma_b \cup \Sigma_c$, 
\item  $ \delta_{Z_1}(q_1,\Sigma_b)=(x=1 \wedge q_1) \vee (x>1 \wedge q_f)$
\item $\delta_{Z_1}(q_f, \alpha)=q_f$,  for $\alpha \in \Sigma_a \cup \Sigma_b \cup \Sigma_c$. 
\end{enumerate}

\item    We can rewrite $\psi_2$ as $\regm_{(0,1)}[(\Gamma_a+\Gamma_1)]^*$. 
 This makes $\psi_2$ a formula of depth one over the extended alphabet 
$\Gamma$. 
 The DFA $D_2$ accepting $(\Gamma_a+\Gamma_1)^*$ is as follows. There are two locations, $s_0$
the initial as well as accepting location, and $s_1$, a dead location. 
$\delta_2(s_0, \alpha)=s_0$ for $\alpha \in \{\Gamma_a, \Gamma_1\}$,
$\delta_2(s_0, \alpha)=s_1$ for $\alpha \notin \{\Gamma_a, \Gamma_1\}$, and finally,
$\delta_2(s_1, \alpha)=s_1$ for all $\alpha \in \Gamma$. 
The automaton $\Aa_{Z_2}$  has locations $\{s_0, s_1, s_{init}, s_{\chec}, s_f\}$ 
with $s_f$ as its final location and  $s_{init}$ as its initial location. 
The transitions $\delta_{Z_2}$
are as follows. 
\begin{enumerate}
\item $\delta_{Z_2}(s_{init}, \alpha)=[x. s_{\chec}]$  for all $\alpha \in \Gamma$,
\item $\delta_{Z_2}(s_{\chec},\alpha)=[x \geq 1 \wedge s_f \wedge x.\delta_{Z_1}(q_{init},\alpha)]$, if $\alpha \in \Gamma_1$,
\item $\delta_{Z_2}(s_{\chec},\alpha)=[x \geq 1 \wedge s_f \wedge x.\delta_{\neg Z_1}(q_{init},\alpha)]$, if $\alpha \in \Gamma_a, \alpha \notin \Gamma_1$, 
 \item $\delta_{Z_2}(s_f, \alpha)=s_f \wedge x.\delta_{Z_1}(q_{init},\alpha)$ if $\alpha \in \Gamma_1$,
 \item $\delta_{Z_2}(s_f, \alpha)=s_f \wedge x.\delta_{\neg Z_1}(q_{init},\alpha)$ if $\alpha \notin \Gamma_1$.   
 \end{enumerate}
 Note that wherever we encounter $\Gamma_1$, we parallely start checking 
 $\Aa_{Z_1}$, and at places where we do not have $\Gamma_1$, we 
 start checking  $\Aa_{\neg Z_1}$. 
\item   Replacing $\psi_2$,  we obtain 
$\psi_3=\regm_{(0,1)} (\Gamma_d \Gamma_2)$ as a modal depth 1 formula over the alphabet 
 $\Gamma$.
 This results in a  modal depth 1 formula over the extended alphabet consisting of witness variables. 
We now obtain the 1-clock $\lfr$ ATA for $\psi_3=\regm_{(0,1)} (\Gamma_d \Gamma_2)$.  The DFA $D_3$ accepting $\Gamma_d \Gamma_2$ 
has locations $r_0, r_1,r_2,r_3$ where $r_0$ is initial, $r_2$ 
is accepting and has transitions 
$\delta_3(r_0, \Gamma_d)=r_1, \delta_3(r_1,\Gamma_2)=r_2$ and 
$\delta_3(r_0, \alpha)=\delta(r_1, \beta)=r_3$ for $\alpha \notin \Gamma_d, \beta \notin \Gamma_2$ and $\delta_3(r_2, \alpha)=r_3=\delta_3(r_3, \alpha)$ for all $\alpha$.

The automaton $\Aa_{Z_3}$ has as locations $\{r_0, r_1, r_2, r_{init}, 
r_f, r_{\chec}\}$ with $r_{init}$ as the initial location and $r_f$ as the 
final location. The transitions $\delta_{Z_3}$ 
are as follows.
\begin{enumerate}
 \item  $\delta_{Z_3}(r_{init},\alpha)=x. r_{\chec}$, for $\alpha \in \Gamma$,
  \item $\delta_{Z_3}(r_{\chec}, \alpha)=(x >0 \wedge r_1) \vee (r_{\chec} \wedge x=0)$, for $\alpha \in \Gamma_d$, $\alpha \notin \Gamma_2, \Gamma_1$,
  \item $\delta_{Z_3}(r_{\chec}, \alpha)=[(x >0 \wedge r_1) \vee (r_{\chec} \wedge x=0)]\wedge $\\
 $\{ [x.\delta_{Z_2}(s_{init},\alpha) \wedge x. \delta_{Z_1}(q_{init}, \alpha)]\}, \alpha \in \Gamma_2, \Gamma_1$,
 \item $\delta_{Z_3}(r_{\chec}, \alpha)=[(x >0 \wedge r_1) \vee (r_{\chec} \wedge x=0)]\wedge$
$[x.\delta_{Z_2}(s_{init},\alpha) \wedge x. \delta_{\neg Z_1}(q_{init},\alpha)]$,  $\alpha \in \Gamma_2, \alpha \notin \Gamma_1$,
\item  $\delta_{Z_3}(r_1,\alpha)=(x>0 \wedge r_2 \wedge$ 
      $[x. \delta_{Z_2}(s_{init}, \alpha)) \wedge x. \delta_{Z_1}(q_{init},\alpha)]$, $\alpha \in \Gamma_2, \Gamma_1$,
\item $\delta_{Z_3}(r_1,\alpha){=}(x>0 \wedge r_2 \wedge$ 
       $[x.\delta_{Z_2}(s_{init},\alpha) \wedge x. \delta_{\neg Z_1}(q_{init},\alpha)]$, $\alpha \in \Gamma_2, \alpha \notin \Gamma_1$,     
  \item  $\delta_{Z_3}(r_2,\alpha)=[(0<x<1 \wedge r_3) \vee (x\geq  1 \wedge r_f)] \wedge$ \\
  $~~~~~~~~~~~~~~~~~~[x. \delta_{Z_2}(s_{init}, \alpha)) \wedge x. \delta_{Z_1}(q_{init},\alpha)]$,  $\alpha \in \Gamma_2, \Gamma_1$,
  \item  $\delta_{Z_3}(r_2,\alpha)=[(0<x<1 \wedge r_3) \vee (x\geq  1 \wedge r_f)] \wedge$ \\
  $~~~~~~~~~~~~~~~~~~[x. \delta_{\neg Z_2}(s_{init}, \alpha)) \wedge x. \delta_{Z_1}(q_{init},\alpha)]$,  $\alpha \in \Gamma_1, \alpha \notin \Gamma_2$,
  \item  $\delta_{Z_3}(r_2,\alpha)=[(0<x<1 \wedge r_3) \vee (x\geq  1 \wedge r_f)] \wedge $\\
  $~~~~~~~~~~~~~~~~~~[x. \delta_{Z_2}(s_{init}, \alpha)) \wedge x. \delta_{\neg Z_1}(q_{init},\alpha)]$,  $\alpha \in \Gamma_2, \alpha \notin \Gamma_1$,
  \item   $\delta_{Z_3}(r_2,\alpha)=[(0<x<1 \wedge r_3) \vee (x\geq  1 \wedge r_f)] \wedge$ \\
  $~~~~~~~~~~~~~~~~~~[x. \delta_{\neg Z_2}(s_{init}, \alpha)) \wedge x. \delta_{\neg Z_1}(q_{init},\alpha)]$,  $\alpha \notin \Gamma_2, \alpha \notin \Gamma_1$,
  \item $\delta_{Z_3}(r_f, \alpha)=r_f \wedge 
  [x. \delta_{Z_2}(s_{init}, \alpha)) \wedge x. \delta_{Z_1}(q_{init},\alpha)]$,  $\alpha \in \Gamma_2, \Gamma_1$,
  \item $\delta_{Z_3}(r_f, \alpha)=r_f \wedge 
  [x. \delta_{\neg Z_2}(s_{init}, \alpha)) \wedge x. \delta_{\neg Z_1}(q_{init},\alpha)]$,  $\alpha \notin \Gamma_2, \Gamma_1$,
  \item $\delta_{Z_3}(r_f, \alpha)=r_f \wedge 
  [x. \delta_{\neg Z_2}(s_{init}, \alpha)) \wedge x. \delta_{Z_1}(q_{init},\alpha)]$,  $\alpha \notin \Gamma_2, \alpha \in \Gamma_1$,
  \item $\delta_{Z_3}(r_f, \alpha)=r_f \wedge 
  [x. \delta_{Z_2}(s_{init}, \alpha)) \wedge x. \delta_{\neg Z_1}(q_{init},\alpha)]$,  $\alpha \in \Gamma_2, \alpha \notin \Gamma_1$
  \end{enumerate}
Note that if at a point, the symbol read is in both $\Gamma_2$ and $\Gamma_1$, then we 
start $\Aa_{Z_1}$ and $\Aa_{Z_2}$ in parallel. Likewise, if the symbol read is in $\Gamma_1$ but not in $\Gamma_2$, 
then we start $\Aa_{Z_1}$ and $\neg \Aa_{Z_2}$ in parallel.  \end{enumerate}

 \section{Example Illustrating Section \ref{lem:base1}}
\label{app:qkmso-regmtl}
In this section, we show how to convert a $\qkmso$ formula 
into a $\regmtl$ formula, as described in Section \ref{lem:base1}. 
Let us apply to the formula  $\psi(t_0)$ in Example \ref{eg:classical}. We also show here, how to compute the automaton equivalent to $\psi(t_0)$.  
\begin{example}
\begin{enumerate}
\item Using the extra predicates $R_{(1,2)}(t_1), R_1(t_2)$ we obtain \\
$\forall t_1 \{[Q_a(t_1) \wedge R_{(1,2)}(t_1)] \rightarrow \exists t_2[Q_b(t_2) \wedge R_1(t_2)]\}$. 
Note that we can draw the automaton corresponding to the predicate 
$R_{(1,2)}(t_1)$ as follows. The free variable $t_0$ is assigned the first position. 

\begin{figure}[h]
\includegraphics[scale=0.3]{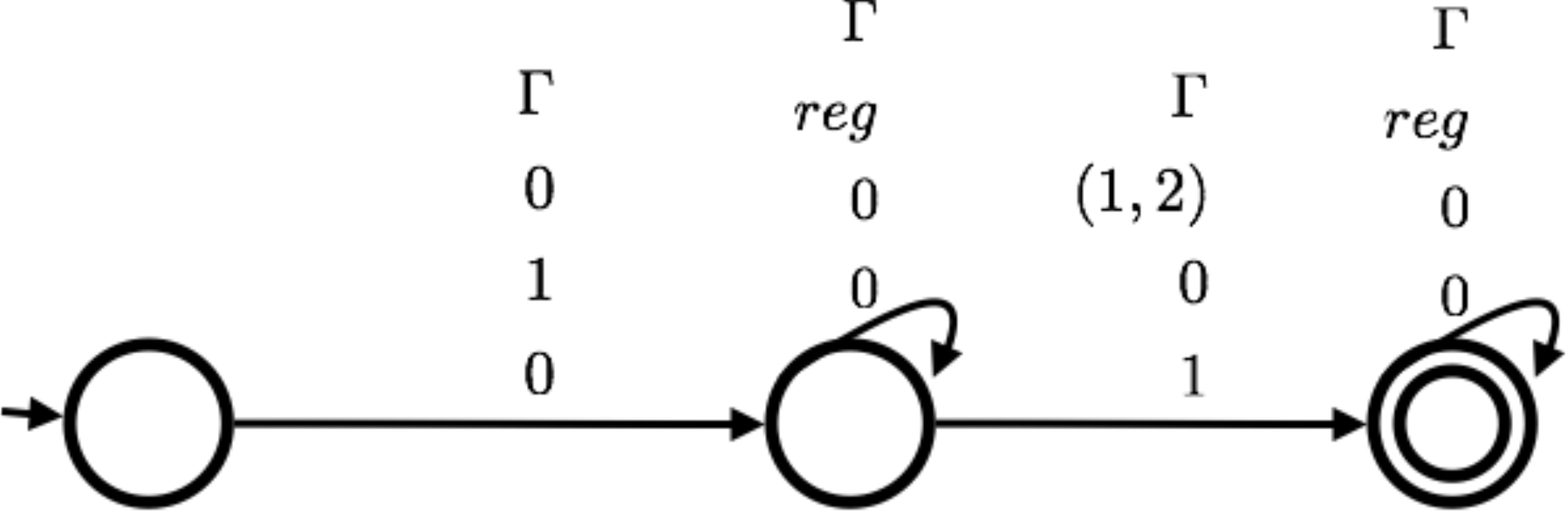}	
\includegraphics[scale=0.25]{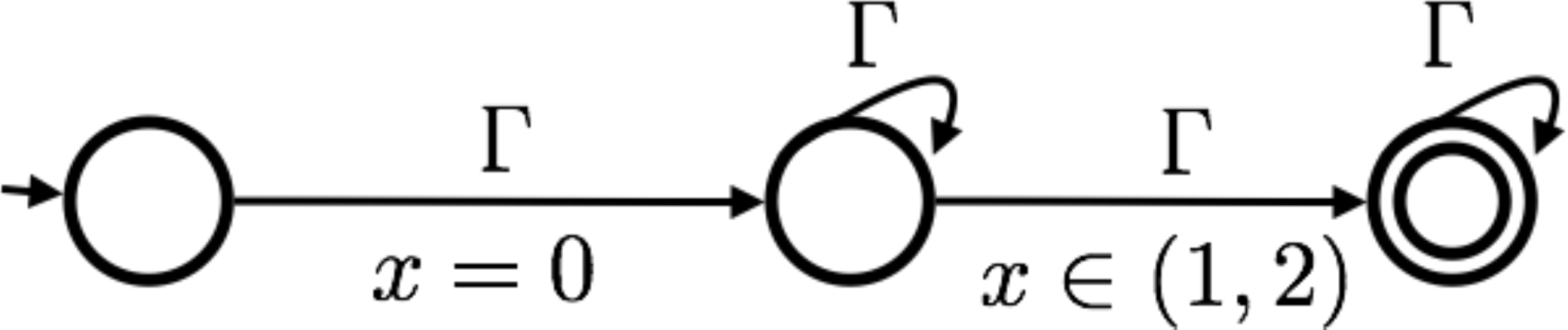}	
\caption{$\Gamma=2^{\Sigma} \backslash \emptyset$. The alphabet for the DFA is $\Gamma \times reg \times \{0,1\}^2$.
$reg$ is the set consisting of 0, (0,1), 1, (1,2), 2 and (2, $\infty)$. 
 On the right, the corresponding timed automata obtained by removing 
 elements $\alpha \in reg$ with the clock constraint $x \in \alpha$. 
 The self loops can have constraints $x \in I$  compatible with $x \in (1,2)$.}
\label{fig:pred-r}
\end{figure}

\item Substituting a witness $W$ for  $\exists t_2[Q_b(t_2) \wedge R_1(t_2)]$ we obtain \\
$\forall t_1 \{[Q_a(t_1) \wedge R_{(1,2)}(t_1)] \rightarrow W]\}$, which is rewritten as 
$\chi=\neg \exists t_1 \{Q_a(t_1) \wedge R_{(1,2)}(t_1) \wedge \neg W]\}$. The timed automata 
for $W, \neg W$ can be seen in Figure  \ref{eg-class1}.  This is obtained by first constructing the 
DFA for $W$, and then putting in the time constraint in the same way we dealt with 
Figure \ref{fig:pred-r}.  
\item The automaton on the top of Figure \ref{eg-class1} is over the 
extended alphabet $2^{{\Sigma} \cup W}$ where $W$ is the witness symbol. 
To get the automaton equivalent to $\psi(t_0)$, we 
replace symbols $W$ and $\neg W$, by replacing the transitions.
From the automaton on the top in Figure \ref{eg-class1}, we obtain the
transition $\delta(s_1, S_a)= [x\in (1,2) \wedge s_1 \wedge x.q_0^{W}] \vee 
[x\in (1,2) \wedge s_2 \wedge x.q_0^{\neg W}]$.  Note that 
each time an $a$ is read in time (1,2), acceptance is possible only when there is a $b$ at distance 1.   
\end{enumerate}

\begin{figure}[h]
	\includegraphics[scale=0.25]{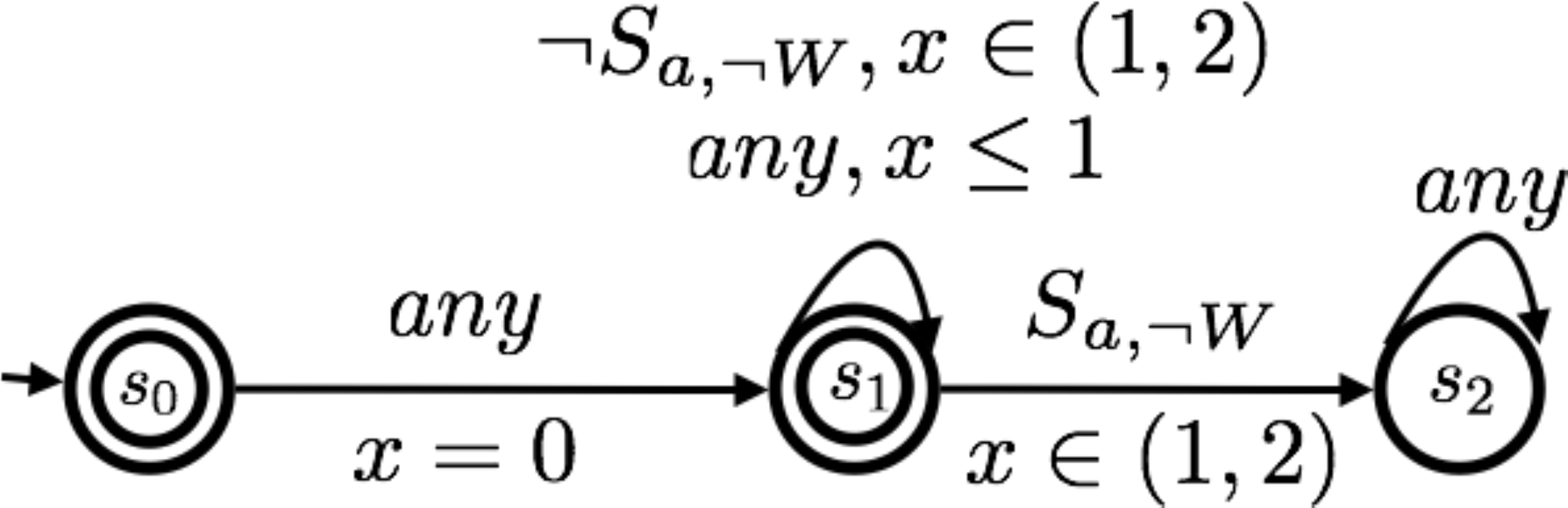} 
	
	\includegraphics[scale=0.25]{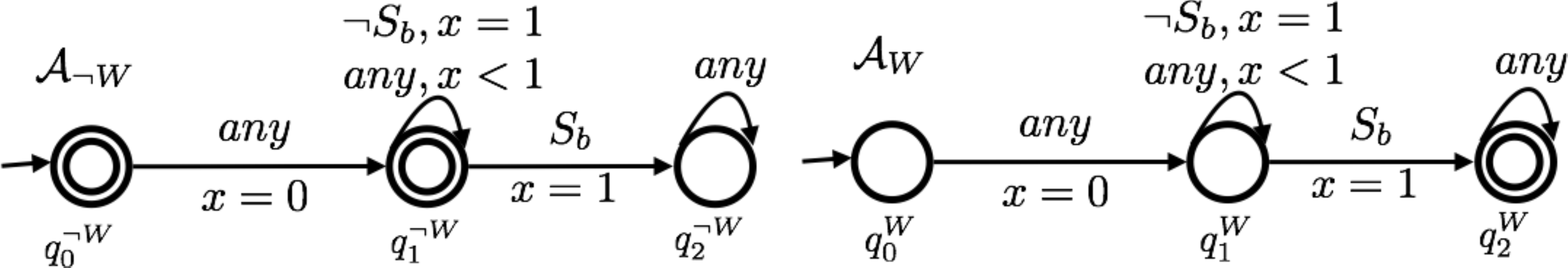}
\caption{$S_b$ stands for any subset of $\Sigma$ containing $b$. $S_{a, \neg W}$ represents any set containing $a$ and $\neg W$. 
On the left is the automaton for $\chi$, and on the right is the one for $\neg W$.}	
\label{eg-class1}
	\end{figure}
	
	One way to obtain the formula now is to convert this automaton to logic, as done in Theorem \ref{aut-tl-1}.
	 To get the formula directly, first we notice that $\neg S_{a, \neg W}$ is a short hand 
	 for $\neg a \vee (a \wedge W)$. The $\regmtl$ formula corresponding 
	 to the DFA over $2^{\Sigma \cup W} \times reg$ is $\reg_{(1,2)} [\neg S_{a, \neg W}]^*$ 
	 which is same as  $\reg_{(1,2)} [\neg a + (a \wedge W)]^*$. Now 
	 the $\regmtl$ corresponding to the DFA for $W$ is $\reg_{[1,1]}(\neg b)^* b$. 
	 Plugging in the formula for $W$, we obtain the formula 
	 $\reg_{(1,2)} [\neg a + (a \wedge \reg_{[1,1]}(\neg b)^* b)]^*$, which can be rewritten as 
	 $\reg_{(1,2)} [a \rightarrow (a \wedge \reg_{[1,1]}(\neg b)^* b)]^*$.

\end{example}

\section{Example Illustrating Section \ref{lem:temp-class-1}}
  \label{app-eg-lem:temp-class-1}
Consider the $\regmtl$ formula 
$\varphi=\regm_{(1,2)}[a \rightarrow \regm_{(0,1)}b^+]$. 
\begin{enumerate}
\item As a first step, we write the $\qkmso$ formula for  $\regm_{(0,1)}b^+$. This is given  by 
$$\zeta_1(t)=\exists t_{first}[t < t_{first}< t+1].\exists t_{last}[t < t_{last}< t+1].
		\forall t'(t < t'< t+1) \varphi(\downarrow t, t_{first}, t_{last}, t')$$
		 where  
		$\varphi(\downarrow t, t_{first}, t_{last}, t')$ is given by \\
			$	((t'=t_{first}) \vee (t'=t_{last}) 
		\vee (t_{first}< t' < t_{last}) \wedge
		\{Q_{b}(t_{first})\wedge Q_{b}(t_{last}) \wedge \\
		\forall t''[t_{first} < t'' < t_{last} \rightarrow Q_b(t'')\})$
\item Next, we rewrite $\varphi$ as  $\psi{=}\regm_{(1,2)}[a {\rightarrow} Z_1]$, where 
$Z_1$ is the witness for $\regm_{(0,1)}b^+$. 
		 
\item 	Since $\psi$ is now a $\regmtl$ formula of modal depth one, we 
have the  $\qkmso$ formula 
$\zeta_0(u)=\exists u_1[u+1 < u_1 < u+2]\varphi(\downarrow u, u_1)$ equivalent to it. 
$$\varphi(\downarrow u, u_1){=}[\neg Q_{a}(u_1) \vee (Q_a(u_1) \wedge \textcolor{red}{Q_{Z_1}(u_1)})] \wedge \forall u''[u+1 < u'' < u+2 \rightarrow (u''{=}u_1)]$$

\item It now remains to plug-in $\zeta_1(u)$ in place of $Q_{Z_1}(u_1)$ in 
$\zeta_0(u)$. Doing this gives 
$\exists u_1[u+1 < u_1 < u+2] \{[\neg Q_{a}(u_1) \vee 
(Q_a(u_1) \wedge \zeta_1(u_1))] \wedge 
\forall u''[u+1 < u'' < u+2 \rightarrow (u''=u_1)]\}$. 

Note that on plugging-in $\zeta_1(u_1)$, the formula obtained is 
in $\qkmso$; further all the bound first order variables 
$v$ are respectively ahead of anchors $u, u_1$. 
\end{enumerate}

\section{Proof of Lemma \ref{lem:frat}}
\label{app:lem:frat}
		 Given a $\wf$  1-clock ATA $\Aa$ over $\Sigma$ with no resets,  
	 we can construct a $\F\regm$ formula, $\varphi$ 
 such that for any timed word $\rho$, $\rho,i \models \varphi$ iff $\rho$, starting from position $i$ has an accepting run in $\Aa$.
Let $q_0$ be the initial location of $\Aa$. Let us consider the case when $q_0 \in Q_{\vee}$. 
		Since $\Aa$  has no  reset transitions, $Q=Q_{\vee}$, and a transition looks like 
		$\delta(q,a)=C_1 \vee \dots \vee C_m$ where each $C_i$ is either a location $q' \in Q$ 
		or a clock constraint $x \in I$. 
A word $w$ is accepted in $\Aa$  when one of the following is true.

\begin{enumerate}
\item Starting from $q_0$, there is a run which reaches some final location $q_f$.  
Let $\re_{q_f,a}$ denote the regular expression (untimed since no clock constraint has been checked this far)
that leads us from $q_0$ to $q_f$, and assume that we enter $q_f$ on reading $a$. 
Then the resultant set of words $w'a$ is accepted where $w' \in L(\re_{q_f,a})$. 
    Disjuncting over all possibilities of final locations $q_f$ and symbols $a \in \Sigma$ we obtain the formula
    $\psi_1=\bigvee_{q_f \in F, a \in \Sigma} \freg_{(0,\infty)}, \re_{q_f,a}(a \wedge \Box \bot)$, where each
    disjunct captures all regular expressions $\re_{q_f,a}$ that guarantee acceptance through $q_f$ when reached on $a$.  
    The $\Box \bot$  ensures that no further symbols are read, and can be written as $\neg \freg_{[0, \infty),\Sigma^*} \top$. 
   
\item The second case is when there is a run which reaches  some location $q$ 
from where, on reading $a$, we choose the disjunct $x \in I_a$ 
in $\delta(q,a)$, and enter an empty configuration. 
Let $\re_{I_a}$ signify the regular expression that collects all words 
$w$ that reach some location $q$ from $q_0$, such that on reading $a$ from $q$, the clock constraint $x \in I_a$ is satisfied. 
Disjuncting over all combinations of intervals and symbols, we obtain the formula 
$\psi_1=\bigvee_{a, I_a} \freg_{I_a, \re_{I_a}}a$. Each disjunct  
 in the formula says that we see an $a$ in the interval $I_a$, 
 and the regular expression $\re_{I_a}$ holds good till that point. 
 Since $\re_{I_a}$ exhaustively collects all words that can reach some location $q$ 
 from where $a$ is read when $x \in I_a$, the formula $\psi_1$ captures all possible 
 ways to accept on enabling a clock constraint. Notice that 
 any suffix can be appended to these set of words, since from the empty configuration, 
 there is no restriction on what can be read. 
\end{enumerate}

The remaining case is when $q_0 \in Q_{\wedge}$. In this case $Q=Q_{\wedge}$. 
If we negate $\Aa$, then we obtain $q_0 \in Q_{\vee}$, and we can apply the case 
discussed above, obtaining a formula in $\fregmtl$ equivalent to the negation of $\Aa$. If we negate this formula, 
we obtain the formula $\psi_2$ equivalent to $\Aa$.  
 		 Any transition to an accepting configuration has to pass 
		 through one of the two cases above. 
		 Thus the formula that we are interested is one of $\psi_1$ or $\psi_2$ depending on whether $q_0 \notin Q_{\vee}$ 
		 or $q_0 \in Q_{\wedge}$.

 \section{Proof of Section \ref{thm:logic-aut2}}
 \label{app:frat-aut}
We prove   for formulae of modal depth 1 first. 
To give an idea, consider $\psi=\freg_{I, \re_0} \psi_0$, a formula of modal depth 1, and having only 
one modality (the $\freg$ modality). We have a DFA $D$ that accepts the regular expression $\re_0$ over some alphabet 
$\Gamma=2^{\Sigma} \backslash \emptyset$. The one-clock ATA $\Aa$ we construct is such that, on reading the first symbol 
of a timed word, we reset $x$ and go to the initial location of the DFA $D$. $D$ continues to run 
until we reach a final location of $D$. While in a final location of $D$ (hence we have witnessed $\re_0$), we check 
if $x \in I$, and if the symbol read currently satisfies $\psi_0$. If so, we accept. Otherwise, we continue running $D$, looking for this combination. 
It is easy to see that the $\Aa$ described here indeed captures $\psi$. 

\begin{lemma}[$\fregmtl$ modal depth 1 to $\wf$-1-ATA-$\lfr$ ]
	Given a $\fregmtl$ formula $\psi$ of modal depth 1,  one can construct a 
	$\wf$-1-ATA-$\lfr$  $\Aa$ such that 
	for any timed word $\rho=$ $(a_1,\tau_1)$ $(a_2,\tau_2)$ $ \dots (a_n,\tau_n)$, 
	$\rho,i \models \psi$ iff $\Aa$ 
	accepts $(a_i, \tau_i) \dots (a_m, \tau_n)$. 
	\label{frat-basic}
\end{lemma}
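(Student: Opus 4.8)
The plan is to construct the $\wf$-1-ATA-$\lfr$ directly from the structure of the modal-depth-1 formula $\psi$, starting from the single-modality case $\psi = \freg_{I,\re_0}\psi_0$ (where $\psi_0$ is a propositional formula over $\Sigma$ and $\re_0$ is a regular expression over $\Gamma = 2^\Sigma\setminus\emptyset$) and then lifting through negation, conjunction, and disjunction exactly as in Section~\ref{thm:logic-aut-1}. First I would take a DFA $D = (\Gamma, Q, q_0, Q_f, \delta')$ with $L(D) = L(\re_0)$ and build $\Aa = (\Gamma, Q \cup \{q_{init}, q_{\chec}, q_f\}, q_{init}, \{q_f\}, \delta)$, mimicking the construction in Section~\ref{thm:logic-aut-1}: on the first symbol reset the clock and move to $q_{\chec}$ (or directly into $\delta'(q_0,\cdot)$ if $l=0$), idle in $q_{\chec}$ while $x < l$, then simulate $D$ while $x \in I$, and whenever $D$ sits in a final state \emph{and} the current symbol satisfies $\psi_0$ \emph{and} $x \in I$, take a transition to the accepting sink $q_f$. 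The key difference from Section~\ref{thm:logic-aut-1} is that one must check the $\wf$ constraint on every clause: each transition should have at most one free location together with a clock constraint only when the two are combined \emph{disjunctively}. Since the DFA is deterministic, each $\delta'(q,a)$ is a single state, so every clause of $\delta(q,a)$ is of the form $x \in I \wedge q'$ with one free location $q'$ plus (at the moment of exit) a disjunct $(x \in I \wedge \psi_0 \wedge q_f)$ — I would rewrite these into the normal forms (i)--(iii) required by the definition of $\wf$, declaring all of $Q \cup \{q_{init},q_{\chec}\}$ as $Q_\vee$ and noting $q_f$ is an absorbing final location reached only on an (implicit) mode boundary, so the conjunctive/disjunctive partition is trivially respected.

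The second step handles the general modal-depth-1 formula, which is a boolean combination of formulas $\freg_{I_j,\re_j}\psi_j$ with $\re_j$ over $\Sigma$. For negation I would invoke closure of $\wf$-1-ATA-$\lfr$ under complementation (dualizing transitions swaps $Q_\wedge$ and $Q_\vee$ and leaves the loop-free-reset property and the $\wf$ shape conditions intact, since the definition of $\wf$ is self-dual by design). For conjunction $\psi_1 \wedge \psi_2$, given $\wf$-1-ATA-$\lfr$ $\Aa_1, \Aa_2$ I would add a fresh initial location $q_{init}$ with $\delta(q_{init},a) = x.\delta_1(q_1,a) \wedge x.\delta_2(q_2,a)$; placing $q_{init}$ in $Q_\wedge$ keeps the $\wf$ condition (the only clause at $q_{init}$ is a conjunction of bound-location formulas, which is allowed), and the reset is fired exactly once on entry, so $\lfr$ holds. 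Disjunction follows from negation and conjunction. This is almost verbatim the argument of Section~\ref{thm:logic-aut-1}, the only care being to keep the conjunctive/disjunctive tagging consistent.

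I expect the main obstacle to be verifying that the $\wf$ shape conditions (clauses $D_i$ and $C_i$ having at most one free location or one clock constraint, and free locations staying within the same mode) are genuinely preserved by the base construction, because the naive simulation of a DFA step under a clock guard produces clauses like $x \in I \wedge \delta'(q,a)$ in which the free location $\delta'(q,a)$ and the guard $x\in I$ appear \emph{conjunctively}, which is exactly the forbidden shape illustrated in Example~\ref{example-cd}(b). The fix is that a DFA transition never needs a disjunction, so one can push the guard into a disjunctive form: replace the single-clause transition by $\delta(q,a) = (x\notin I) \vee \delta'(q,a)$ only at states where leaving the interval causes rejection, or more cleanly, because the interval is $[l,u)$ is an \emph{upward-then-downward} closed region, arrange $q_{\chec}$ to absorb the ``$x<l$'' phase and a dead state to absorb the ``$x\ge u$ before reaching $Q_f$'' phase, so that during the ``active'' phase the guard is implied by the control location and need not be written as a conjunct at all. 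Making this rewriting precise, and checking it interacts correctly with the $q_f$-exit disjunct $(x\in I \wedge \psi_0 \wedge q_f)$ — which one does want as a genuine disjunctive clause of the form $C_i = x\in I \wedge \Bb(Q_x)$ after $\psi_0$ is absorbed into the alphabet letter — is the delicate part; once that is settled, the rest is bookkeeping identical to the earlier section.
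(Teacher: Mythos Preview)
Your base-case construction is semantically incorrect for the $\freg$ modality. You mimic the $\regm_I$ construction of Section~\ref{thm:logic-aut-1}, idling in $q_{\chec}$ while $x<l$ and guarding the DFA simulation by $x\in I$. But the semantics of $\freg_{I,\re_0}\psi_0$ at position $i$ asks for some $j>i$ with $\tau_j-\tau_i\in I$, $\psi_0$ true at $j$, and $\re_0$ matched by the segment at positions $i{+}1,\ldots,j{-}1$ \emph{irrespective of their timestamps}. Those intermediate positions may all lie before time $\tau_i+l$; your automaton would skip them in the idle phase and never feed them to the DFA, so it accepts a different language.

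The paper's construction exploits exactly this difference between $\freg$ and $\regm$: after resetting on the first symbol it runs the DFA on every subsequent symbol with \emph{no} clock guard at all, and only at the exit step (when the DFA is in a final state and the current letter $\alpha$ satisfies $\psi_0$) does it test the clock, via the extra disjunct $x\in I\wedge x.q_f$. That clause already has the permitted $\wf$ shape $C_i=(x\in I)\wedge\Bb(Q_x)$, and every DFA-step clause is a single free location (shape $C_i=q'$), so the whole automaton lies in $Q_\vee$ and the $\wf$ condition holds automatically. In other words, the ``main obstacle'' you flag---the forbidden conjunct $x\in I\wedge\delta'(q,a)$---is an artifact of importing the $\regm_I$ semantics; with the correct $\freg$ semantics it never arises and no rewriting trick is needed. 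Your treatment of negation, conjunction and disjunction is fine and essentially matches the paper.
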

\begin{proof}
Consider a formula $\psi_1 = \F\regm_{I,\re_0}(\psi_0)$  of modal depth 1, and having only one modality. 
 Clearly, $\re_0$ is an atomic regular expression 
over some alphabet $\Gamma$ and $\psi_0$ is a propositional logic formula over $\Gamma$. Let 
$D = (\Sigma, Q, q_{0}, Q_f,\delta')$ be a DFA such that $L(D)=L(\re_0)$, and $\Sigma=2^{\Gamma} \backslash \emptyset$. 
Given $D$, 
 we now construct the 1-clock ATA  $\Aa=(\Sigma, Q \cup \{q_{init}, q_f\},q_{init},\{q_f\},\delta)$ 
 where $q_{init}, q_f$ are respectively the initial and final locations of $\Aa$, and are disjoint from $Q$. 
 The transitions are as follows. 
\begin{itemize}
	\item $\delta (q_{init},\alpha) = x.q_0$, for all $\alpha \in \Sigma$,	
	\item $\delta (q,\alpha) = \delta'(q,\alpha)$, for all $q \in Q \setminus Q_f$,
		\item $\delta (q,\alpha) = \delta'(q,\alpha) %\wedge x.q_f) 
		\vee (x.q_f \wedge x\in I)$, 
	for $q \in Q_f$ and  $\alpha$ such that $\alpha \models \psi_0$. For example if $\psi_0=c \wedge d$, then 
	$\alpha \models \psi_0$ iff $\{c,d\} \subseteq \alpha$.  
		\item $\delta (q,\alpha) = \delta'(q,\alpha)$, if 
		$q \in Q_f$ and $\alpha \models \neg \psi_0$, $\delta(q_f,\alpha) = q_f.$
\end{itemize}
On reaching an accepting location of $D$,  if the 
next symbol read (say $\alpha$) satisfies $\psi_0$, we check if the time stamp of $\alpha$ is in the interval $I$. If so, 
we reset $x$ and enter the accepting location $q_f$ of $\Aa$. Once in $q_f$, we always stay in $q_f$. 
If the time stamp of $\alpha$ 
is not in $I$, then we continue running $D$, until we reach again an accepting location of $D$. 
The transitions of $D$ are used until we reach the combination of (i) reaching an accepting location of $D$ along with (ii) the time stamp 
of  the next symbol read is in interval $I$. At this point, we let go of $D$ and accept, by entering $q_f$. 
It is easy to see that $\Aa$ is a $\wf$-1-ATA-$\lfr$.  
Now, we explain how to handle a  boolean combination of $\F \regm$ formulae of modal depth 1 (here, the number 
of modalities are $>1$ though the depth is 1). 
\begin{enumerate}
\item 
It is easy to see that $\wf$-1-ATA-$\lfr$  are closed under complement. On complementation, 
the locations $Q_{\vee}$ and $Q_{\wedge}$ are interchanged, and so are final and non-final locations. 
The argument for correctness of complementation follows as in the general case of 1-clock ATA. 
This takes care of formulae 
of the form $\neg \psi$ when $\psi \in \F \regm$.
\item Consider the case when we start with a formula $\psi_1 \wedge \psi_2$ 
with $\psi_1, \psi_2$ being formulae of modal depth 1 and having only one modality. The above construction 
gives us a $\wf$-1-ATA-$\lfr$ $\Aa_1=(\Gamma, Q_1, q_{01}, q_{f1}, \delta_1)$, 
$\Aa_2=(\Gamma, Q_1, q_{01}, q_{f1}, \delta_1)$,  that are equivalent to $\psi_1, \psi_2$ respectively. 
 We construct the 
1-clock ATA $\Aa$ with locations $Q_1 \cup Q_2 \cup \{q_{init}, \bot\}$, having as 
initial location $q_{init}$ disjoint from $Q_1 \cup Q_2$,  
and  having transitions 
$\delta(q_{init},a)= (x.\delta_1(q_{01},a) \wedge x.\delta_2(q_{02},a)) 
 \vee \bot$,  
$\delta(\bot,a)=\bot$ where $\bot$ is a rejecting location.   Clearly, 
when we start in $\Aa$, we move on to the locations as dictated by $\Aa_1, \Aa_2$
 respectively.  The remaining transitions of $\Aa$ are obtained from $\delta_1, \delta_2$. 
 Since the initial transition respects the $\wf$ condition, and since $\Aa_1, \Aa_2$ are 
 $\wf$-1-ATA-$\lfr$,  we see that $\Aa$ is also a $\wf$-1-ATA-$\lfr$. Acceptance is possible 
 in $\Aa$ only when $\Aa_1, \Aa_2$ simultaneously accept. Thus $L(\Aa)=L(\Aa_1) \cap L(\Aa_2)$. 
\item The case of $\psi_1 \vee \psi_2$ 
with $\psi_1, \psi_2$ being formulae of modal depth 1 and having only one modality follows from 
the fact that we handle complementation and conjunction. 
 	\end{enumerate}
 Thus, we have proved the claim for $\fregmtl$ formula $\psi$ of modal depth 1.
\end{proof}

\subsection*{Lifting to formulae of higher modal depth}

We will induct on the modal depth of the formulae. For the base case, we have the result 
thanks to Lemma \ref{frat-basic}.

Let us assume the result for formulae of modal depth $\leq k$. Consider a formula of modal depth 
$k+1$ of the form $\psi_{k+1} = \F\regm_{I,\re_k} (\psi_k)$ 
 where $\re_k$ is a regular expression over formulae of modal depth $\leq k$
 and $\psi_k$ is a formulae of modal depth $\leq k$.  
For each such occurrence of a smaller depth formula $\psi_i$, let us allocate a witness variable $Z_i$. 
Let $\Zz=\{Z_1, \dots, Z_k\}$ be the set of all witness variables. 
 Let $\Gamma=2^{\Sigma}\backslash \emptyset$. 
Given a subset $S \subseteq \Sigma$ let $\Gamma_S \in S \times 2^{\Zz}$. 
Any occurrence of an element $S \in 2^{\Sigma}$ in $\re_k, \psi_k$ is replaced with 
$\Gamma_S$.  
 At the end of this replacement, $\re_k$ is a regular expression 
over $\Gamma \times 2^{\Zz}$ and $\psi_k$ is a propositional logic formula over $\Gamma \times 2^{\Zz}$. 

Since each $Z_i$ is a witness for a smaller depth formula $\psi_i$, by inductive hypothesis, there is 
a $\wf$-1-ATA-$\lfr$ $\Aa_{Z_i}$  that is equivalent to $\psi_i$. 
Let $\delta_{Z_i}$ be the  transition function of $\Aa_{Z_i}$ and  let $init_{Z_i}$ 
be the initial location  of $\Aa_{Z_i}$.  
We also construct the complement of each such automata $\Aa_{\neg Z_i}$, which has as its transition function 
$\delta_{\neg Z_i}$ and $init_{\neg Z_i}$ as its initial location. 
 
 Lemma  \ref{frat-basic}  gives us 
$\wf$-1-ATA-$\lfr$  (call it $\Cc$)  
over the alphabet  $\Gamma  \times 2^{\Zz}$.  
 Let $\delta_{\Cc}$ denote the transition function of $\Cc$ and let $S_{\Cc}$ be the set of locations 
of $\Cc$. Consider a transition $\delta_{\Cc}(s,\alpha)$ in $\Cc$. 
If $\alpha \in S \times 2^{\Zz} \subseteq \Gamma_S$, for some $S \in 2^{\Sigma}$, then 
the transition $\delta_{\Cc}(s, \alpha)$ is replaced with 
$\delta'(s, S)= \bigvee_{ T \subseteq \Zz}\delta_{\Cc}(s, \alpha) \wedge 
\bigwedge_{\{k \mid Z_k \in T\}} [x.init_{Z_k}] \wedge  \bigwedge_{\{k \mid Z_k \notin T\}} [x.init_{\neg Z_k}]$.
Note that $\Cc$ as well as $\Aa_{Z_i}$ and $\Aa_{\neg Z_i}$ are 
$\wf$-1-ATA-$\lfr$.    
To see why the $\wf$ condition is respected, let $\Cc$ be $\wf$ with $S_{\Cc}$ partitioned into $S_{\vee}$   
and $S_{\wedge}$. If $s \in S_{\vee}$, then 
$\delta_{\Cc}(s,\alpha)$ has the form $C_1 \vee \dots \vee C_m$, and 
conjuncting the reset locations still preserves the form, since these reset locations can be pulled   
into each $C_i$. In case of $s \in S_{\wedge}$ the above procedure does not seem to preserve the conjunctive property of the island. 
 Note that the 1-clock ATA $\Cc$ is a conjunctive island. 
 In this case,  take the negation of $\Cc$, call it $\Cc'$ resulting in a 1-clock ATA which is disjunctive. 
   We then eliminate witnesses using reset transitions as shown above on $\Cc'$, obtaining an  automaton over $\Gamma$. 
   This automaton is then again complemented to get an automaton equivalent to $\Cc$. 

In both cases, let us call the resultant  
1-clock ATA $\Bb'$ over $\Gamma$.  
Clearly, if $\alpha \in S \times T$ is read in $\Cc$, such that $T=\{Z_i, Z_{i_1}, \dots, Z_{i_h}\}$, 
then acceptance in $\Bb'$ is possible iff $\Cc, \Aa_{Z_{i}}, \Aa_{Z_{i_1}}, \dots, \Aa_{Z_{i_h}}$ 
and $\Aa_{\neg Z_j}$ for $j \neq i, i_1, \dots, i_h$ all reach accepting locations on reading the remaining suffix.

\section{The case of higher depth formulae in Section \ref{q2mso-freg}}
\label{app:q2mso-freg} 
Consider a formula $\psi(t_0)$  of metric depth $k+1$. 
 $\psi(t_0)=\Qq_1 t_1 \varphi(\downarrow t_0, t_1)$, such that 
  the metric depth of $\varphi(\downarrow t_0, t_1)$ is at most $k$.  
 We can replace every time constraint sub-formula $\psi_i(t_k)$ occurring in it
by a witness monadic predicate $w_i(t_k)$. This gives a metric depth 1 formula  and we can obtain a $\fregmtl$ formula, say $\zeta$, over 
variables $\Sigma \cup \{w_i\}$ exactly as in the base step. Notice that each $\psi_i(t_k)$ was a formula of modal depth $k$ or less. Hence by induction hypothesis we have an equivalent $\fregmtl$ formula $\zeta_i$. Substituting $\zeta_i$ for $w_i$ in $\zeta$ gives us a formula language equivalent
to $\psi(t_0)$. Since plugging in an $\fregmtl$ formula inside another $\fregmtl$ formula 
results in $\fregmtl$, we obtain the result.
  
    For the case when we start with a $\qtwomlo$ formula of higher depth $k$, we obtain by inductive hypothesis, 
  $\F \sfmtl$ formulae corresponding to each $\psi_i$; secondly, corresponding to the 
  FO formula obtained over the extended alphabet containing $w_i$, we obtain a 
  $\F \sfmtl$ formula using the base case. Plugging in a $\F \sfmtl$ formula in 
  place of the $w_i$ in an $\F \sfmtl$ formula will continue to give a $\F \sfmtl$ formula. 
  
  \section{$\fregmtl$ to \emph{forward} $\qtwomso$ }
 \label{app:lem:freg-q2mso}
Consider a formula $\varphi=\F\regm_{I,\re}\psi$ of modal depth 1. 
 Hence, $\psi$ as well as $\re$ are atomic. 
 Let $\phi_1$ be an MSO sentence equivalent to $\re$.  The formula 
 $\zeta(t)=\texists t' \in t+I(\phi'_1 \wedge Q_{\psi}(t'))$ where $\phi_1'$ is same as $\phi_1$ except that all quantified first order variables 
  $t''$ in $\phi_1'$  lie strictly between $t, t'$ (by semantics of $\freg$, the regular expression is asserted strictly 
  in between), and $Q_{\psi}$ is obtained by replacing all occurrences of $a \in \Sigma$ in $\psi$ with $Q_a$ (if $\psi=a \wedge \neg b$, then 
  $Q_{\psi}(t')=Q_a(t') \wedge \neg Q_b(t')$). It can be seen that $\zeta(t)$ is forward, $\qtwomso$. If we induct on the modal depth, and 
  proceed exactly as in section \ref{lem:temp-class-1}, 
   we obtain a forward, $\qtwomso$ formula equivalent to 
  $\varphi$. Note that the formulae we obtain at each level of $\freg$ will be forward, $\qtwomso$, and 
  plugging in retains this structure.

If we start with a $\F \sfmtl$ formula, then $\phi_1$ will be a FO sentence, and hence $\zeta(t)$ will be forward
$\qtwomlo$. The inductive hypothesis will continue to yield $\qtwomlo$ formulae 
$\zeta(u_1)$ equivalent to $Q_{Z_i}(u_1)$; plugging $\zeta(u_1)$ in place of 
$Q_{Z_i}(u_1)$ in the bigger $\qtwomlo$ formula will hence give rise to a $\qtwomlo$ formula.

\section{Proofs from Section \ref{fixp}}
\label{app:fix}
\subsection{Proof of Lemma \ref{unique_fix_point}}
\begin{proof}
		We prove this using contradiction. Assume that there are two distinct solutions $\alpha$ and $\beta$ for the equation $Z \equiv \varphi (Z)$ with respect to some $\rho$. Thus $\alpha$ and $\beta$ will agree on the truth value of all other propositions except $Z$. Without loss of generality, let $i$ be the last point in the domain of $\alpha$ where $\alpha$ and $\beta$ disagree on the truth value of $Z$. Without loss of generality, we assume $\alpha,i \models Z$ while $\beta, i \models \neg Z$. As both $\alpha$ and $\beta$ are fix point solutions of $Z = \varphi(Z)$, $\alpha,i \models \varphi(Z)$ and $\beta,i \models \neg \varphi(Z)$. 
		Note that as the formulae are guarded, the $Z$ in $\varphi$ will only occur within the scope of a $\reg$ or $\freg$ modality. Both the modalities reason about strict future. That is, the truth value of these modalities depend only on the truth values of propositions at points which are in strict future. Thus the disagreement of $\varphi(Z)$ at point $i$ should imply that the future of $\alpha$ and $\beta$ from the point $i$ is not the same. That is $\alpha[i+1....] \ne \beta [i+1.......]$. This is a contradiction as we assumed that the last point where $\alpha$ and $\beta$ disagree is $i$.
\end{proof}

\subsection{Equivalence of $\mu. \regmtl$ and System of $\regmtl$ Equations \cite{brad}, \cite{igorbrad}}
\label{app:fix2}
We start with an example. 
Consider the formula $\mu Z_1 \circ (\regm_{(1,2)}(a.Z_1.(\nu Z_2\circ (\regm_{(2,3)}(a.Z_2.Z_1.b))).b)$. This could be written as $Z_1 \equiv^\mu \regm_{(1,2)}(a.Z_1.Z_2.b); Z_2 \equiv^\nu \regm_{(2,3)}(a.Z_2.Z_1.b)$.
Thus for any $\mu$ temporal logic formulae $\varphi$, the equivalent system of equations contains as many equations as there are fix point operators in $\varphi$. The simple algorithm of conversion to a system of equations for a given formula $\psi_i$ of the form $\sigma Z_i \circ (\psi(Z_1,\ldots,Z_i))$ will be to reduce it to the equation $Z_i \equiv ^\sigma \psi'(Z_1,\ldots,Z_m)$ where $\psi'$ is obtained from $\psi$ by replacing all its subformulae of the form $\sigma Z_j \circ (\psi(Z_1,\ldots,Z_j))$ with $Z_j$. The set of models accepted by the starting formulae reduced in this way is therefore the solution of $Z_1 \equiv \psi_1$ (that is the solution to the outer most fix point operator).

Similarly, one can also show that any system of such equations can be reduced to the $\mu$ temporal logic formulae. For example, consider the system 
of equations $Z_1 \equiv ^\mu \regm_{(1,2)}(Z_1.Z_2.b^*); Z_2 \equiv^\nu \regm_{(2,3)} (Z_2.Z_1.a^*)$. The solution of the first equation is thus equivalent to,
$\mu Z_1\circ [\regm_{(1,2)}(Z_1.\{\nu Z_2 \circ (\regm_{(2,3)} (Z_2.Z_1.a^*))\} .b^*)]$. 
There is an equivalent formalism with a slightly different syntax in the literature for fixpoints called vectorial fixpoints. The system of equations can also be thought of as a vector of fix point variables simultaneously recursing over the models. 
The reduction from the system of equations (or vectorial fixpoints) to formulae is possible due to Beki\'c Identity \cite{igorbrad}. The blow up is at most exponential.

\end{document}